\documentclass{acmart}
    \usepackage{multicol}
    \usepackage[shortlabels]{enumitem}
    \usepackage{bussproofs}

    \theoremstyle{definition}
    \newtheorem{theorem}{Theorem}
    \newtheorem{lemma}[theorem]{Lemma}
    \newtheorem{coro}[theorem]{Corollary}

    \newcommand{\Meta}[1]{\mathsf{#1}} 
    \newcommand{\Op}[1]{\mathtt{#1}}

    \newcommand{\MM}{\mathbb{M}}
    \newcommand{\otto}{\leftrightarrow}
    \newcommand{\At}{\Meta{At}}
    \newcommand{\XX}{\Op{X}}
    \newcommand{\Dd}{\Op{D}}
    \newcommand{\UU}{\mathrel{\Op{U}}}
    \newcommand{\FF}{\Op{F}}
    \newcommand{\GG}{\Op{G}}
    \newcommand{\LL}{\mathcal{L}}
    \newcommand{\ooo}{\Op{ooo}}
    \newcommand{\loss}{\Op{loss}}
    \newcommand{\bloss}{\Op{bloss}}
    \newcommand{\obs}{\Op{obs}}
    \newcommand{\cor}{\Op{cor}}
    \newcommand{\st}{\Op{st}}
    \newcommand{\dd}{\Meta{d}}
    \newcommand{\LTL}{\mathrm{LTL}}
    \newcommand{\EDMon}{\mathrm{EDMon}}
    \newcommand{\Lift}{\Meta{Lift}}
    \newcommand{\shift}{\Meta{shift}}
    \newcommand{\head}{\Meta{head}}
    \newcommand{\id}{\Meta{id}}
    \newcommand{\Pp}{\Op{P}}
    \newcommand{\init}{\Op{init}}
    \newcommand{\trace}{\Meta{trace}}
    \newcommand{\set}[1]{\{#1\}}
    \newcommand{\ev}{\Meta{ev}}
    \newcommand{\smon}{\mathrm{s}}
    \newcommand{\cmon}{\mathrm{c}}
    
    \newcommand{\AFR}{\Meta{AFR}}
    \newcommand{\AFS}{\Meta{AFS}}
    \newcommand{\SFR}{\Meta{SFR}}
    \newcommand{\SFS}{\Meta{SFS}}
    \newcommand{\NFR}{\Meta{NFR}}
    \newcommand{\NFS}{\Meta{NFS}}
    \newcommand{\Prox}{\Meta{Prox}}
    \newcommand{\Tolr}{\Meta{Tolr}}
    \newcommand{\Perm}{\Meta{Perm}}
    \newcommand{\Incl}{\Meta{Incl}}
    \newcommand{\Excl}{\Meta{Excl}}

    \keywords{Runtime Verification, Modal Logic, Temporal Logic, Epistemic Logic}

\begin{document}

    \title{A Formal Framework for Noisy Runtime Verification}

    \author{Thomas Macaulay Ferguson}
    \affiliation{
        \institution{Departments of Cognitive Science and Computer Science, Rensselaer Polytechnic Institute}
        \city{Troy, NY}
        \country{USA}
        }
    \email{tfergut@rpi.edu}

    \author{Shay Allen Logan}
    \affiliation{
        \institution{Department of Philosophy, Kansas State University}
        \city{Manhattan, KS}
        \country{USA}
        }
    \email{salogan@ksu.edu}

    \author{Shawn Standefer}
    \affiliation{
        \institution{Graduate School of Information Sciences, Tohoku University}
        \city{Sendai}
        \country{Japan}
        }
    \email{standefer@tohoku.ac.jp}

     \begin{abstract}
        We introduce the logic EDMon---an epistemic dynamic logic meant to model monitorability concepts in noisy runtime verification. Its syntax and semantics are defined, and explained and the connection between EDMon and monitorability and noisy runtime verification concepts is explored. We then demonstrate that EDMon is sufficient to capture many of the results in the noisy runtime verification literature and catalog its relation to nearby logics and describe a large class of its theorems.
     \end{abstract}

     \maketitle

    \section{Introduction}

    A central aim in runtime verification (RV; see \cite{havelund2005verify,pnueli2006psl,bauer2011runtime}) is building monitors to track whether a system is entering an unsafe state. To do this, we first specify---usually in a temporal logic---properties that are associated with unsafe states. These are compiled into observers that monitor the state of the system as it evolves. When the observer can verify that the formally specified trigger-condition corresponding to the monitored property holds, it signals to an appropriate subsystem that it should intervene.\footnote{Note that this marks a polarity choice we will stick with throughout the paper: the things being monitored for are \emph{un}safe states, not \emph{safe} states. The reverse convention is fine too, but having a particular convention in play simplifies the discussion.}

    This necessarily-sketchy presentation elides a great many details. The important thing to observe is that this process can \emph{only be done at all} if the chosen property can in fact be monitored. But it's not at all obvious, in advance, what can be monitored. It turns out not to be obvious after the fact either, as there are a number of competing definitions of monitorability on offer in the literature. 

    Given this, one would like a framework for comparing different notions of monitorability. In particular, one would like a way of stating, in a rigorous, mathematically precise way, exactly how different notions differ, and what different notions can deliver. And monitorability is not the only concept in the area that could benefit from formal precisification. A second concept that could benefit from such treatment (and which is at the heart of the project pursued here) is \emph{noise}. Information channels between systems and monitors are often subject to disruptions of various sort. For example, messages from the Mars Science Laboratory sometimes arrive out of order, and dropped telemetry in unmanned aerial systems can cause spurious readings that require hand-tuning of implemented runtime monitors \cite{kauffman2021can,edwards2013relay,cauwels2020integrating}. For these and other reasons, there is a small but growing subliterature on techniques for doing what might be called \emph{noisy} runtime verification (NRV; see e.g. \cite{peled1997stutter,basin2017runtime,kauffman2021can,taleb2023uncertainty}). 
    
    Much of the work in NRV is \emph{classificatory} in nature and the classification paradigms in use are largely motivated by two questions:
    \begin{itemize}
        \item given a class of mutations $\mu$, can we characterize the properties that are immune to $\mu$?
        \item over a channel subject to mutations $\mu$, for which properties $A$ is monitoring for $A$ a trustworthy way to check whether $A$ is true? 
    \end{itemize}

    The immunity and trustworthiness concepts employed in these questions have themselves been fleshed out in a number of seemingly-inequivalent ways. One is again left wanting a formal framework for evaluating and comparing them. This paper proposes such a framework, demonstrates its adequacy for the goal of expressing, evaluating, and comparing NRV-concepts, and shows that it can also be used to fruitfully propose novel NRV concepts and results worth further exploration.
    
    \subsection{Comparison With Prior Work}

        In EDMon, mutations acting on noisy channels are modeled by dynamic modals $\langle\mu\rangle$. EDMon's semantics captures the effects of such mutations on properties expressed in linear temporal logic (LTL). As such, nearby work can be found in the realm of epistemic logic, dynamic logic, and in LTL, as well as in work on monitorability more generally. 
        
        Foundational and central work in epistemic logic can be found in  \cite{Hintikka1962, Meyer1995-MEYELF,fagin1995reasoning, Stalnaker2006-STAOLO-2,vanDitmarsch2007-VANDEL-6}. For work particularly close to the work done here see \cite{HALPERN1989195,halpernVandermeydenVardi2004}. Our work differs in that we are not considering multi-agent systems, and so no non-trivial common knowledge. We do not consider branching time, nor do we attempt complete axiomatization or complexity analysis. Our contribution lies in the addition of distinctive dynamic elements and the observation modals.
        
        Foundational and central work in dynamic logic can be found in \cite{pratt1976semantical,fischer1979propositional,harel2000dynamic}). For work particularly close to the work done here see \cite{HENRIKSEN1999187}. Relative to work in \emph{propositional} dynamic logic, our work is more expressive in some respects (as will become immediately apparent on seeing the vocabulary of our language) but also less expressive in that we restrict the range of the dynamic atoms rather severely. Relative to first-order dynamic logic, the differences are more stark, since at the moment we have only formulated EDMon in a propositional setting.

        Foundational and central work on LTL can be found in \cite{pnueli1977temporal,demri2016temporal}. For work particularly close to the work done here see \cite{bollig2026runtime} whose epistemic framing is quite similar to ours. Our work differs from the work in the last-mentioned piece primarily (though not exclusively) by its careful attention to mutation.

    \subsection{Contributions}

        The central contribution of the paper is the introduction of the logic EDMon and our demonstration of its `NRV adequacy'. In particular, we show that EDMon can 
        \begin{itemize}
            \item Usefully formalize three of the dominant monitorability concepts, 
            \item formalize an extended family of liveness classes, 
            \item capture, formally, central results both about the monitorability concepts and about their relation to liveness classes, 
            \item capture and prove central results about immunity concepts and their relation to liveness and monitorability concepts, and finally
            \item usefully point in the direction of novel concepts and results on all of the above topics.
        \end{itemize}

    \section{Syntax and Semantics}

        In this section, we describe the syntax and semantics of EDMon. We begin with a bit of preliminary material. Proofs of the lemmas included in the preliminaries are omitted to save space. They are in all cases entirely straightforward.

        \subsection{Preliminaries}

        Given a set $S$ we write $S^*$ for the set of finite sequences of members of $S$ and $S^\omega$ for the set of infinite sequences of members of $S$. We write $\varepsilon$ for the empty sequence, which is a member of $S^*$ and for each $s\in S^*$, we write $\overline{s}$ for the member of $S^\omega$ that consists of an infinite string of $s$'s. Note that we will generally not distinguish between the one element sequence containing just $s$ (which is a member of $S^*$) and $s$ itself (which is a member of $S$).
        
        For each $s\in S^*$, we write $|s|$ for its length. Given $s\in S^*\cup S^\omega$ we write $s(i)$ for the $i$th member of $s$, $s_i$ for the prefix of $s$ ending at the $i$th term and $s^i$ for the suffix of $s$ beginning at the $i+1$st term. For $s\in S^*$, $s_i=s$ if $|s|\leq i$, $s^i=\varepsilon$ if $|s|\leq i$, and $s(i)$ is undefined if $|s|<i$. Given $s\in S^*$, $u\in S^* \cup S^\omega$, we write $s. u$ for the (finite or infinite) sequence that appends $u$ to the end of $s$ in the obvious way. The following lemma is an immediate consequence of the definitions and will be used without comment in the remainder:
        \begin{lemma}
            For all $\{u,v\}\subseteq S^*\cup S^\omega$, all of the following hold:\vspace{-3mm}
            \begin{multicols}{2}\begin{itemize}
                \item $u=u_i . u^i$, 
                \item $u_i . (u^i)_k = u_{i+k}$, 
                \item $(u^i)^k = u^{i+k}$,
                \item $(u_i)_k=u_{\min(i,k)}$, and
                \item if $|u|\geq k$ then $(u_k . v)_k=u_k$ and $(u_k . v)^k = v$.
            \end{itemize}\end{multicols}
        \end{lemma}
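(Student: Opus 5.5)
Since each of the five identities is a statement about sequences, the plan is to treat a sequence $u \in S^*\cup S^\omega$ as a function from an initial segment of the positive integers (all of them if $u\in S^\omega$, and $\{1,\dots,|u|\}$ if $u\in S^*$) into $S$. Each identity then reduces to equality of functions. Two sequences are equal exactly when they have the same length (or are both infinite) and agree at every index. So for each item I will compute the length and the $j$th entry of both sides and compare.

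First I record the defining equations in this pointwise form.
\begin{itemize}
\item The prefix $u_i$ has length $\min(i,|u|)$, with $|u|=\infty$ for infinite $u$, and $u_i(j)=u(j)$ for $j\le \min(i,|u|)$.
\item The suffix $u^i$ satisfies $u^i(j)=u(i+j)$ whenever $u(i+j)$ is defined, and $u^i=\varepsilon$ when $|u|\le i$.
\item The concatenation $s.v$ has $(s.v)(j)=s(j)$ for $j\le|s|$ and $(s.v)(j)=v(j-|s|)$ for $j>|s|$.
\end{itemize}

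With these in hand the items are checked one at a time.
\begin{itemize}
\item For $u=u_i.u^i$: indices $j\le\min(i,|u|)$ pick out $u(j)$ from the prefix, and indices $j>i$ pick out $u^i(j-i)=u(j)$.
\item For $u_i.(u^i)_k=u_{i+k}$: the same splitting at $i$ works, now with the suffix truncated to $k$ entries, so both sides have length $\min(i+k,|u|)$ and agree entrywise.
\item For $(u^i)^k=u^{i+k}$: both sides have $j$th entry $u(i+k+j)$, and both are $\varepsilon$ exactly when $|u|\le i+k$.
\item For $(u_i)_k=u_{\min(i,k)}$: both sides are the restriction of $u$ to indices at most $\min(i,k,|u|)$.
\item For the last item: the hypothesis $|u|\ge k$ gives $|u_k|=k$. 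Hence the first $k$ entries of $u_k.v$ are those of $u_k$, which yields $(u_k.v)_k=u_k$, and the remaining entries are $v(j)$, which yields $(u_k.v)^k=v$.
\end{itemize}

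The only place needing care is the edge cases where $u$ is finite and shorter than the indices involved. Examples are $|u|<i$ in the first two items, and the case $u^i=\varepsilon$ in the second, where $(u^i)_k=\varepsilon$ must match $u_{i+k}=u$. I would handle these by a case split on whether $|u|\le i$, using the stipulations $s_i=s$ and $s^i=\varepsilon$ for short $s$. I would also note that the statement implicitly treats $u_i$ as finite whenever concatenation appears on its left, which holds because $\min(i,|u|)$ is finite. Apart from this bookkeeping, every step is a direct unfolding of the definitions, consistent with the paper's remark that these facts are immediate.
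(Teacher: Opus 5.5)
Your proposal is correct and follows the same route the paper intends: the paper omits the proof as an immediate consequence of the definitions, and your pointwise check is that unfolding, comparing lengths and entries of both sides with a case split on $|u|\le i$ for short finite $u$. Your length bookkeeping (e.g.\ $|u_i|=\min(i,|u|)$, so that $u_i.(u^i)_k$ has length $\min(i+k,|u|)$) also covers the degenerate cases, such as $u^i=\varepsilon$ in the second item and $v=\varepsilon$ in the last.
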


        We will have need below to think about sets of the form $S^* \times S^\omega$. On such a set, we define three operations:
        \begin{displaymath}
            \trace(s,\sigma) = s.\sigma
            \qquad
            \shift(s,\sigma) =
                \left\{
                    \begin{array}{ll}
                        \langle s^1, \sigma \rangle & s \neq \varepsilon \\
                        \langle \varepsilon, \sigma^1 \rangle & \text{otherwise}
                    \end{array}
                \right.
            \qquad
            \head(s,\sigma) = \trace(s,\sigma)(1)
        \end{displaymath}
        We will write $\shift^k$ for the $k$-fold iteration of $\shift$, with the convention that $\shift^0(s, \sigma)=\langle s, \sigma\rangle$. We note two lemmas about $\shift^k$.
        \begin{lemma}
            For $s\in S^*$ with $|s|=k$, $\shift^k(\varepsilon,\overline{s})=\langle\varepsilon,\overline{s}\rangle$.
        \end{lemma}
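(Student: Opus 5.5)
Since the first component is $\varepsilon$, every application of $\shift$ takes the second branch of the definition. So I will first show, by induction on $j$, that for every $\tau\in S^\omega$ and every $j\geq 0$,
\begin{displaymath}
\shift^j(\varepsilon,\tau)=\langle\varepsilon,\tau^j\rangle .
\end{displaymath}
The base case $j=0$ is the convention $\shift^0(\varepsilon,\tau)=\langle\varepsilon,\tau\rangle$ together with $\tau^0=\tau$. For the inductive step, the hypothesis gives $\shift^{j+1}(\varepsilon,\tau)=\shift(\varepsilon,\tau^j)$. Because the first component is empty, the definition of $\shift$ yields $\langle\varepsilon,(\tau^j)^1\rangle$. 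The suffix identity $(u^i)^k=u^{i+k}$ from the preliminary lemma then rewrites this as $\langle\varepsilon,\tau^{j+1}\rangle$.

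Instantiating $\tau=\overline{s}$ and $j=k=|s|$ reduces the statement to showing $\overline{s}^{\,k}=\overline{s}$. By definition $\overline{s}=s.\overline{s}$, so the decomposition $u=u_k.u^k$ lets us identify $\overline{s}_k=s$. The last clause of the preliminary lemma, $(u_k.v)^k=v$, applies since $|\overline{s}|\geq k$ for an infinite sequence. Taking $u=\overline{s}$ and $v=\overline{s}$ gives $\overline{s}^{\,k}=(s.\overline{s})^k=\overline{s}$, which completes the argument.

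The only step needing care is the identity $\overline{s}=s.\overline{s}$. If one prefers not to treat it as definitional, it can be checked pointwise. For $i\leq k$, $(s.\overline{s})(i)=s(i)=\overline{s}(i)$. For $i>k$, $(s.\overline{s})(i)=\overline{s}(i-k)=s(((i-k-1)\bmod k)+1)=\overline{s}(i)$.

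There is one degenerate case. If $k=0$, then $s=\varepsilon$ and $\overline{s}$ is not a genuine infinite sequence. But then $\shift^0$ is the identity by convention, so the claim holds trivially.
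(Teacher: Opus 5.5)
Your proof is correct. The paper omits this proof as ``entirely straightforward,'' and your argument is exactly the routine verification intended: the induction $\shift^j(\varepsilon,\tau)=\langle\varepsilon,\tau^j\rangle$, followed by $\overline{s}^{\,k}=(s.\overline{s})^k=\overline{s}$, including your sensible treatment of the degenerate case $k=0$. As a small simplification, you could apply the clause $(u_k.v)^k=v$ with $u=s$ directly, since $s_k=s$ when $|s|=k$, and so skip the detour through $\overline{s}_k=s$.
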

        
        \begin{lemma}\label{shifthead}
            For all $s$, all $\sigma$, and all $k\geq 0$,
            $\trace(\shift^k(s, \sigma)) = \trace(s, \sigma)^k$.
        \end{lemma}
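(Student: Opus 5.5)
We argue by induction on $k$, using throughout the identities of the first lemma of this subsection (in particular $(u^i)^k=u^{i+k}$). The base case $k=0$ holds by the convention $\shift^0(s,\sigma)=\langle s,\sigma\rangle$ together with $u^0=u$, which follows from $u=u_0.u^0$ with $u_0=\varepsilon$.

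For the inductive step we first prove the one-step case: for every pair $\langle t,\tau\rangle\in S^*\times S^\omega$ we have $\trace(\shift(t,\tau))=\trace(t,\tau)^1$. The definition of $\shift$ suggests two cases.
\begin{itemize}
\item If $t\neq\varepsilon$, then $\trace(\shift(t,\tau))=t^1.\tau$. Here $t^1.\tau$ is the suffix of $t.\tau$ starting at the second term, since $|t|\geq 1$ means that dropping the first element of $t.\tau$ drops the first element of $t$. Formally, write $t=t_1.t^1$; then $t.\tau=t_1.(t^1.\tau)$ with $|t_1|=1$, and the last clause of the first lemma gives $(t.\tau)^1=t^1.\tau$.
\item If $t=\varepsilon$, then $\trace(\shift(t,\tau))=\varepsilon.\tau^1=\tau^1$, while $\trace(t,\tau)^1=(\varepsilon.\tau)^1=\tau^1$.
\end{itemize}

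With the one-step case in hand, the inductive step goes as follows. Since $\shift^{k+1}(s,\sigma)=\shift(\shift^k(s,\sigma))$, the one-step case applied to the pair $\shift^k(s,\sigma)$ gives
$$\trace(\shift^{k+1}(s,\sigma))=\trace(\shift^k(s,\sigma))^1.$$
The induction hypothesis turns the right-hand side into $(\trace(s,\sigma)^k)^1$, and the first lemma rewrites this as $\trace(s,\sigma)^{k+1}$.

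The only delicate point is the nonempty case of the one-step lemma, specifically justifying $(t.\tau)^1=t^1.\tau$ from the stated sequence identities. The remaining steps are bookkeeping.
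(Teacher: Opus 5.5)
Your proof is correct. The paper omits the proofs of its preliminary lemmas as ``entirely straightforward,'' and your induction on $k$ — the one-step identity $\trace(\shift(t,\tau))=\trace(t,\tau)^1$ proved by cases on $t=\varepsilon$ (using $t.\tau=t_1.(t^1.\tau)$ and the clause $(u_k.v)^k=v$ in the nonempty case), then closed off with $(u^i)^k=u^{i+k}$ — is exactly the routine argument that remark presupposes.
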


        We understand a relation between sets $S_1$ and $S_2$ to be a subset of $S_1\times S_2$. The composition $R_1\circ R_2$ of relations $R_1$ between $S_1$ and $S_2$ and $R_2$ between $S_2$ and $S_3$ is the set $\{\langle x,z\rangle:\text{ for some }y\in S_2, \langle x,y\rangle\in R_1\text{ and }\langle y,z\rangle\in R_2\}$. By a relation on $S$, we mean a relation between $S$ and $S$. Where $R$ is a relation on $S$, we write $R^k$ for the $k$-fold composition of $R$ with itself. In this same situation, we take $R^0$ to be the identity relation $\id_S$, which is just $\{\langle s,s\rangle:s\in S\}$. We will write $\nabla_S$ for the universal relation on $S$, which is to say, for $S \times S$ itself, and will write $R^*$ for the union of all $R^k$ for $0\leq k<\infty$. Given a relation $R$ between $S$ and $T$, we write $R^{-1}$ for its inverse---explicitly, for the relation between $T$ and $S$ defined by $\langle x,y\rangle\in R^{-1}$ iff $\langle y,x\rangle\in R$. When $R$ is a relation between $S$ and $T$ and $s\in S$, we write $R(s)$ for $\{t\in T:\langle s,t\rangle\in R\}$. As a bit of syntactic sugar, when given a relation $R$, we will often write $x\sim_R y$ instead of $\langle x,y\rangle\in R$. If $R$ is a relation on $X \times Y$, then each member of $R$ is a pair of pairs of the form $\langle \langle x_1, y_1 \rangle, \langle x_2, y_2\rangle \rangle$ with $\{ x_1, x_2 \} \subseteq X$ and $\{ y_1, y_2 \} \subseteq Y$. To avoid the proliferation of brackets this makes visible, we will abuse notation and write the members of relations on $X \times Y$ as four-tuples $\langle x_1, y_1, x_2, y_2 \rangle$ and identify these with pairs of pairs in the obvious way. Given relations $R_1$ between $S_1$ and $S_2$ and $R_2$ between $T_1$ and $T_2$, we write $R_1 \times R_2$ for the relation between $S_1\times T_1$ and $S_2\times T_2$ defined by $\langle s_1, t_1 \rangle \sim_{R_1 \times R_2} \langle s_2, t_2 \rangle$ iff $s_1 \sim_{R_1} s_2 $ and $t_1 \sim_{R_2} t_2$. The following standard result will be of use below:
        \begin{lemma}\label{distLem}
            $(R_1 \times R_2) \circ (R_3 \times R_4) = (R_1 \circ R_3) \times (R_2 \circ R_4)$
        \end{lemma}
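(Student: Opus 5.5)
The identity is a set equality between two relations from $S_1\times T_1$ to $S_3\times T_3$, where $R_1$ runs from $S_1$ to $S_2$, $R_3$ from $S_2$ to $S_3$, $R_2$ from $T_1$ to $T_2$, and $R_4$ from $T_2$ to $T_3$; these typing assumptions are exactly what is needed for both sides to be defined. I will prove it by double inclusion, unfolding the definitions of composition and of the product relation $R_1\times R_2$ on a generic element $\langle s_1,t_1,s_3,t_3\rangle$, written as a four-tuple per the paper's convention.

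For the left-to-right inclusion, suppose $\langle s_1,t_1\rangle \sim_{(R_1\times R_2)\circ(R_3\times R_4)} \langle s_3,t_3\rangle$. By the definition of composition there is an intermediate element of $S_2\times T_2$, which I decompose as $\langle s_2,t_2\rangle$, such that $\langle s_1,t_1\rangle\sim_{R_1\times R_2}\langle s_2,t_2\rangle$ and $\langle s_2,t_2\rangle\sim_{R_3\times R_4}\langle s_3,t_3\rangle$. The definition of the product splits these into four facts: $s_1\sim_{R_1}s_2$, $t_1\sim_{R_2}t_2$, $s_2\sim_{R_3}s_3$ and $t_2\sim_{R_4}t_3$. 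Taking $s_2$ as witness gives $s_1\sim_{R_1\circ R_3}s_3$, and taking $t_2$ gives $t_1\sim_{R_2\circ R_4}t_3$. By the product definition again, the tuple lies in $(R_1\circ R_3)\times(R_2\circ R_4)$.

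For the right-to-left inclusion, I run the same steps in reverse. From $s_1\sim_{R_1\circ R_3}s_3$ and $t_1\sim_{R_2\circ R_4}t_3$ I obtain separate witnesses $s_2\in S_2$ and $t_2\in T_2$, and I pair them into the single witness $\langle s_2,t_2\rangle$. This pair then links the two ends through $R_1\times R_2$ followed by $R_3\times R_4$.

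There is no real obstacle here. The one point needing care is that membership in $S_2\times T_2$ is exactly coordinatewise, so a single intermediate pair corresponds to two independent coordinate witnesses and conversely. That correspondence is what lets the existential quantifier distribute across the product.
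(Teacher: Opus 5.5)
Your double-inclusion proof is correct: the typing you fix matches the paper's diagrammatic order of composition, and splitting or pairing the intermediate witness in $S_2\times T_2$ coordinatewise is all the argument needs. The paper omits this proof as a standard, straightforward result, and your unfolding of the definitions is the routine argument it leaves implicit.
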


    \subsection{Syntax}

        Let $\At$ be a countably infinite set of atomic formulas. For each $S\subseteq\At$ we define a language $\LL^S_{\EDMon}$ that extends the $S$-fragment of the language of LTL, ($\LL^S_{\LTL}$) with \underline{E}pistemic machinery, \underline{D}ynamic machinery, and \underline{Mon}itorability machinery; hence EDMon. On the epistemic front we add a unary `determines' modal $\Dd$. On the dynamic front we add a family of dynamic mutation modals $\langle\mu\rangle$. On the monitorability front we add observation modals $\langle\obs\rangle$ and $\langle\obs^*\rangle$. The intuitive interpretation of these operators will be described below after our discussion of the semantics.
        
        $\LL^S_{\EDMon}$ has a stratified grammar. In Backus-Naur form (BNF) we need four syntactic classes to describe $\LL^S_{\EDMon}$: LTL-formulas $\phi$, mutations $\mu$, observations $o$, and EDMon-formulas $A$. Formally, the language is as follows:
        \begin{align*}
            \phi & := p\in S\mid \neg\phi\mid \phi\lor\phi\mid \XX\phi \mid \phi\UU\phi \\
            \mu & := \st \mid \loss \mid \ooo \mid \cor \mid \mu \cup \mu \mid \mu; \mu \mid \mu^- \mid \mu^* \\
            o & :=  \obs\mid o^*\\
            A & := p\in S\mid \neg A\mid A\lor A \mid \Dd A\mid \langle\mu\rangle A\mid\langle o\rangle A\mid \XX \phi \mid \phi \UU \phi
        \end{align*}
        The top line of the definition gives us that $\LL^S_{\LTL} \subset \LL^S_{\EDMon}$. If $\phi\in\LL_{\LTL}$ contains no occurrences of $\XX$ or $\UU$, then we say that $\phi$ is \emph{propositional}. In the remainder, we allow ourselves to write `$\LL_{\EDMon}$' when either $S=\At$ or $S$ doesn't matter. So we will refer to $2^\At$ as $\Sigma$, to $x\in \Sigma$ as a \emph{state}, etc. Also, it's helpful to observe that (due to the stratification of the language) dynamic, observation, and epistemic modals \emph{cannot occur} in the scope of temporal modals. So while $\Dd \XX p$ is well-formed; $\XX \Dd p$ is not. As a pronunciation guide we note that `$\XX$' is read as `next', `$\UU$' as `until', `$\Dd$ as `determines' `$\st$' as `stutter', `$\loss$' as `loss', `$\ooo$' as `out of order', $\cor$ as `corruption', and `$\obs$' as `observe'. 
        
        As convenient abbreviations we note that $A \land B$ abbreviates $\neg (\neg A \lor \neg B)$, that $A \to B$ abbreviates $\neg A \lor B$, that $\top$ abbreviates $p\lor\neg p$, that $\FF\phi$ (``eventually $\phi$'') abbreviates $\top\UU\phi$, that $\GG\phi$ (``always $\phi$'') abbreviates $\neg\FF\neg\phi$, that $\Pp A$ (read `$A$ is permitted') abbreviates $\neg\Dd\neg A$, and that $[\mu]$ abbreviates $\neg\langle\mu\rangle\neg$, $[\obs]$ abbreviates $\neg\langle\obs\rangle\neg$, and $[\obs^*]$ abbreviates $\neg\langle\obs^*\rangle\neg$. 

        Also useful are the following recursive definitions where $\phi$ is an LTL-formula, $\mu$ is a mutation, $A$ is an EDMon-formula, and $m$ a modal, by which we mean that $m\in\{\Dd, \Pp, \langle\obs\rangle, [\obs]\} \cup \{\langle\mu\rangle, [\mu] : \mu$ a mutation$\}$:
        \begin{multicols}{2}\begin{itemize}
            \item $\XX^0\phi := \phi$; $\XX^{n+1}\phi := \XX \XX^n\phi$.
            \item $\FF^{\geq 1}\phi:=\FF \phi$; $\FF^{\geq n+1}\phi:=\FF(\phi\land\XX\FF^{\geq n}\phi)$
            \item $\mu^1 := \mu$; $\mu^{k+1} := \mu;\mu^k$; $\langle\mu^0\rangle A := A$.
            \item $m^0 A := A$; $m^{k+1} A := mm^k A$.
        \end{itemize}\end{multicols}
        We read $\XX^n\phi$ as `in $n$ ticks, $\phi$' and $\FF^{\geq n}\phi$ as `$\phi$ will happen at least $n$ times'.

    \subsection{Semantics}

        Given $S\subseteq\At$, we will refer to each $x\in 2^S=\Sigma_S$ as an $S$-\emph{state}. An $S$-\emph{trace} is a member of $\Sigma_S^\omega$. A set of $S$-traces is called an $S$-property. For each $S\subseteq T\subseteq\At$ we can interpret $\LL^S_{\EDMon}$ in a structure $\MM_T$ whose carrier set is $\Sigma_T^* \times \Sigma_T^\omega$. In particular, then, for all $S\subseteq\At$, we can interpret $\LL^S_{\EDMon}$ in a structure whose carrier set is $\Sigma^*\times\Sigma^\omega$, and this is what we will typically do. Members of any of these structures will be called \emph{points}. An \emph{initial point} is a point of the form $\langle\varepsilon,\sigma\rangle$. We will use $\vDash$ for the forcing relation that holds between points and $\LL^S_{\EDMon}$-formulas, and we will write $\llbracket -\rrbracket$ for the function assigning a relation on $\Sigma_S^* \times \Sigma_S^\omega$ to each mutation $\mu$ and each observation $o$. In order to describe $\vDash$ and $\llbracket-\rrbracket$ perspicuously, it helps to first define the following relations on $\Sigma_S^*$:

        \begin{itemize}[leftmargin = *] 
        \setlength\itemsep{1ex}
            \item $\widehat{\ooo} := \{\langle s, t \rangle \mid s = t$ or for some $\{u,v\} \subseteq \Sigma_S^*$ and $\{x,y\} \subseteq \Sigma_S$, $s = uxyv$ and $t = uyxv\}$.
            \item $\widehat{\loss} := \{\langle s, t \rangle \mid s = t$ or for some $\{u,v\} \subseteq \Sigma_S^*$ and $x\in\Sigma_S$, $s = uxv$ and $t = uv\}$.
            \item $\widehat{\cor} := \{\langle s, t \rangle \mid s = t$ or for some $\{u,v\} \subseteq \Sigma_S^*$ and $\{x,y\} \subseteq \Sigma_S$, $s = uxv$ and $t = uyv\}$.
            \item $\widehat{\st} := \{\langle s, t \rangle \mid s = t$ or for some $\{u,v\} \subseteq \Sigma_S^*$ and $x \in \Sigma_S$, $s = uxv$ and $t = uxxv\}$.
        \end{itemize}
        These relations are slightly more general than the corresponding relations studied in other papers on NRV. For example, compared to the relations in \cite{lozes2011reliable} and \cite{kauffman2021can}, ours differ by allowing $\langle\varepsilon, \varepsilon\rangle$ in each relation. That said, the differences are, on the whole, quite minor.
        
        Returning to the thread, it's helpful to define a function $\Lift$ that maps relations on $\Sigma_S^*$ to relations on $\Sigma_S^* \times \Sigma_S^\omega$ by the following rule:
        \begin{displaymath}
            \Lift(R) = R \times \id_{\Sigma_S^\omega} = \{ \langle s, \sigma, t, \tau \rangle \mid s \sim_R t \text{ and } \sigma = \tau \}
        \end{displaymath}
        It turns out that $\Lift$ is a homomorphism from the algebra of relations on $\Sigma_S^*$ to the algebra of relations on $\Sigma_S^* \times \Sigma_S^\omega$ in the sense described in the following lemma:
        \begin{lemma}\label{RSLem}
            If $R$, $T$, and $R_i$ for $i\in I$ are all relations on $\Sigma_S^*$, then all of the following are true:
            \begin{multicols}{3}\begin{enumerate}[leftmargin=*]
                \item If $R \subseteq T$, then $\Lift(R) \subseteq \Lift(T)$
                \item $\Lift(\id_{\Sigma_S^*}) = \id_{\Sigma_S^* \times \Sigma_S^\omega}$
                \item $\Lift(\bigcup_{i\in I} R_i) = \bigcup_{i\in I} \Lift(R_i)$
                \item $\Lift(R \circ T) = \Lift(R) \circ \Lift(T)$
                \item $\Lift(R^{-1}) = \Lift(R)^{-1}$
                \item $\Lift(R^*) = \Lift(R)^*$
            \end{enumerate}\end{multicols}
        \end{lemma}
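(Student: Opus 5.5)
The plan is to prove each of the six clauses directly from the definition $\Lift(R) = R \times \id_{\Sigma_S^\omega}$, using Lemma~\ref{distLem} wherever composition is involved. Throughout, I identify members of relations on $\Sigma_S^*\times\Sigma_S^\omega$ with four-tuples $\langle s,\sigma,t,\tau\rangle$, so that $\langle s,\sigma,t,\tau\rangle\in\Lift(R)$ iff $s\sim_R t$ and $\sigma=\tau$.

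Clauses (1), (2), (3) and (5) are immediate unfoldings of this characterization.
\begin{enumerate}[(1)]
    \item If $R\subseteq T$, then $s\sim_R t$ and $\sigma=\tau$ together give $s\sim_T t$ and $\sigma=\tau$, so $\Lift(R)\subseteq\Lift(T)$.
    \item $\langle s,\sigma,t,\tau\rangle\in\Lift(\id_{\Sigma_S^*})$ iff $s=t$ and $\sigma=\tau$. This holds iff $\langle s,\sigma\rangle=\langle t,\tau\rangle$, which is exactly membership in $\id_{\Sigma_S^*\times\Sigma_S^\omega}$.
    \item The condition ``$s\sim_{R_i}t$ for some $i$, and $\sigma=\tau$'' is equivalent to ``for some $i$, both $s\sim_{R_i}t$ and $\sigma=\tau$''. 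So $\Lift(\bigcup_i R_i)=\bigcup_i\Lift(R_i)$.
    \item Since $\id^{-1}=\id$, we have $\langle s,\sigma,t,\tau\rangle\in\Lift(R^{-1})$ iff $t\sim_R s$ and $\tau=\sigma$. This is iff $\langle t,\tau,s,\sigma\rangle\in\Lift(R)$.
\end{enumerate}

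For (4), I apply Lemma~\ref{distLem}:
\[
\Lift(R)\circ\Lift(T)=(R\times\id)\circ(T\times\id)=(R\circ T)\times(\id\circ\id)=(R\circ T)\times\id=\Lift(R\circ T).
\]

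For (6), I first show by induction on $k$ that $\Lift(R^k)=\Lift(R)^k$.
\begin{itemize}
    \item The base case $k=0$ is clause (2).
    \item For the inductive step, $\Lift(R^{k+1})=\Lift(R^k\circ R)=\Lift(R^k)\circ\Lift(R)=\Lift(R)^k\circ\Lift(R)=\Lift(R)^{k+1}$, using clause (4). This assumes $R^{k+1}$ is read as $R^k\circ R$; since composition is associative, the bracketing does not matter.
\end{itemize}
Then, by clause (3) applied to the family $\{R^k\}_{k\geq 0}$,
\[
\Lift(R^*)=\Lift\Bigl(\bigcup_k R^k\Bigr)=\bigcup_k\Lift(R^k)=\bigcup_k\Lift(R)^k=\Lift(R)^*.
\]

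No step here is a genuine obstacle. The only points needing care are two bookkeeping matters. First, the identity relations in clause (2) and in the base case of (6) must be on the right carrier sets. Second, (6) depends on the earlier clauses, so the order of proof matters: (2), (3) and (4) must be established before (6).
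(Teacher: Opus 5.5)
Your proof is correct and follows the paper's argument: clauses (1), (2), (3) and (5) are the same element-wise unfoldings of $\Lift(R)=R\times\id_{\Sigma_S^\omega}$, and (6) is obtained from the earlier clauses through the powers $R^k$. The differences are cosmetic. For (4) you invoke Lemma~\ref{distLem}, whereas the paper does the element chase directly; that chase is just the lemma specialized to $\id\circ\id=\id$. For (6) you spell out the induction on $k$ and correctly note that clause (2) is needed for the $k=0$ term, which the paper's ``follows immediately from (3) and (4)'' leaves implicit.
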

        \begin{proof}
            For (1), observe that if $R\subseteq T$, then $\Lift(R) = R \times \id_{\sigma^\omega_S} \subseteq T \times \id_{\sigma^\omega_S} = \Lift(T)$. (2) is immediate. For (3), observe that $\Lift(\bigcup_{i \in I} R_i) = (\bigcup_{i \in I} R_i) \times \id_{\Sigma^\omega_S} = \bigcup_{i \in I} (R_i \times \id_{\Sigma^\omega_S}) = \bigcup_{i \in I}\Lift(R_i)$.

            For (4), observe that $\langle s, \sigma, t, \tau\rangle \in \Lift(R \circ T)$ iff $\sigma = \tau$ and there is $u$ so that $s \sim_R u$ and $u \sim_T t$. But this happens iff $\sigma = \tau$ and there is $u$ so that $\langle s, \sigma, u, \sigma\rangle \in \Lift(R)$ and $\langle u, \sigma, t, \tau\rangle \in \Lift(T)$, which happens iff $\langle s, \sigma, t, \tau\rangle \in \Lift(R) \circ \Lift(T)$. 

            For (5), observe that $\langle s, \sigma, t, \tau\rangle \in \Lift(R^{-1})$ iff $\tau=\sigma$ and $t \sim_R s$ iff $\langle t, \tau, s, \sigma\rangle \in \Lift(R)$ iff $\langle s, \sigma, t, \tau\rangle \in \Lift(R)^{-1}$. Finally, note that (6) follows immediately from (3) and (4). 
        \end{proof}
        
        \noindent To interpret the epistemic and observation modals, we use the following relations:
        
        \begin{itemize}[leftmargin=*]\setlength\itemsep{1ex}
            \item $\dd := \{ \langle s, \sigma, t, \tau \rangle : s = t \} = \id_{\Sigma_S^*} \times \nabla_{\Sigma_S^\omega}$
            \item $\llbracket \obs \rrbracket := \{ \langle s, \sigma, s . \sigma_1, \sigma^1 \rangle : \langle s, \sigma \rangle \in \Sigma_S^* \times \Sigma_S^\omega \}$.
        \end{itemize} 
        
        Where $\alpha$ and $\beta$ are mutations, $\theta$ is either a mutation or an observation, $p\in S\subseteq\At$, $\{\phi,\psi\}\subseteq\LL^S_{\LTL}$ and $\{A,B\}\subseteq\LL^S_{\EDMon}$, we define $\llbracket-\rrbracket$ and $\vDash$ as follows:
        \begin{multicols}{2}\begin{itemize}[leftmargin=*]\setlength\itemsep{1ex}
            \item $\llbracket\mu\rrbracket = \Lift(\widehat{\mu})$ for $\mu$ an atomic mutation.
            \item $\llbracket\alpha;\beta\rrbracket=\llbracket\alpha\rrbracket\circ\llbracket\beta\rrbracket$.
            \item $\llbracket\alpha\cup\beta\rrbracket=\llbracket\alpha\rrbracket\cup\llbracket\beta\rrbracket$.
            \item $\llbracket\theta^*\rrbracket=\llbracket\theta\rrbracket^*$.
            \item $\llbracket\alpha^-\rrbracket =\llbracket\alpha\rrbracket^{-1}$.
            \item $s, \sigma \vDash p$ iff $p \in \head(s, \sigma)$ for $p \in S$.
            \item $s, \sigma \vDash \neg A$ iff $s, \sigma \not\vDash A$. 
            \item $s, \sigma \vDash A \lor B$ iff $s, \sigma \vDash A$ or $s, \sigma \vDash B$.
            \item $s, \sigma \vDash \XX \phi$ iff $\shift(s, \sigma) \vDash \phi$.
            \item $s, \sigma \vDash \phi \UU \psi$ iff for some $k \geq 0$, $\shift^k(s, \sigma) \vDash \psi$ and for $0 \leq j < k$, $\shift^j(s, \sigma) \vDash \phi$. 
            \item $s, \sigma \vDash \langle \theta \rangle A$ iff $t, \tau \vDash A$ for some $\langle s, \sigma \rangle \sim_{\llbracket\theta\rrbracket} \langle t, \tau \rangle$.
            \item $s, \sigma \vDash \Dd A$ iff $t, \tau \vDash A$ for all $\langle s, \sigma \rangle \sim_{\dd} \langle t, \tau \rangle$.
        \end{itemize}\end{multicols}
        We say that $A$ is true at $\langle s,\sigma\rangle$ when $s,\sigma\vDash A$. We say $A$ is EDMon-satisfiable when $A$ is true at some point. We say that $A$ is $S$-valid just if $s,\sigma\vDash A$ for all $\langle s,\sigma\rangle\in\Sigma_S^*\times\Sigma_S^\omega$. We say that $A$ is EDMon-valid when $A$ is $\At$-valid. 

        As an initial exercise in using these clauses, we will prove that our reading of `$\FF^{\geq n}\phi$' as `$\phi$ will happen at least $n$ times' is justified:
        \begin{lemma}\label{FLemma}
        $s,\sigma\vDash\FF^{\geq n}\phi$ iff there are $n$ distinct numbers $k_1,\dots k_n$ so that for $1\leq i\leq n$, $\shift^{k_i}(s,\sigma)\vDash\phi$
        \end{lemma}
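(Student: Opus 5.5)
The plan is to argue by induction on $n\geq 1$, proving the biconditional simultaneously for all points $\langle s,\sigma\rangle$, since the inductive step evaluates $\FF^{\geq n}\phi$ at a shifted point rather than at $\langle s,\sigma\rangle$ itself. Throughout I will use the elementary fact that $\shift^j(\shift^k(s,\sigma))=\shift^{j+k}(s,\sigma)$. This holds immediately because $\shift^k$ is defined as $k$-fold iteration with $\shift^0$ the identity. I will also use the clause for $\XX$, which gives $\shift^k(s,\sigma)\vDash\XX\chi$ iff $\shift^{k+1}(s,\sigma)\vDash\chi$. Since $\FF^{\geq n}\phi$ and $\phi\land\XX\FF^{\geq n}\phi$ are LTL-formulas, the $\XX$ and $\UU$ clauses apply to them directly.

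\emph{Base case $n=1$.} Here $\FF^{\geq 1}\phi=\FF\phi=\top\UU\phi$. By the $\UU$ clause, $s,\sigma\vDash\top\UU\phi$ iff there is $k\geq 0$ with $\shift^k(s,\sigma)\vDash\phi$ and $\shift^j(s,\sigma)\vDash\top$ for all $j<k$. The second conjunct is automatic, because $\top$ is $p\lor\neg p$ and so holds at every point. This is exactly the existence of one number $k_1$ with $\shift^{k_1}(s,\sigma)\vDash\phi$.

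\emph{Inductive step.} Assume the claim for $n$ at every point, and unfold $\FF^{\geq n+1}\phi=\FF(\phi\land\XX\FF^{\geq n}\phi)$ using the base-case reasoning.

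For the left-to-right direction, we get some $k$ with $\shift^k(s,\sigma)\vDash\phi$ and $\shift^{k+1}(s,\sigma)\vDash\FF^{\geq n}\phi$. The induction hypothesis, applied at the point $\shift^{k+1}(s,\sigma)$, yields distinct $m_1,\dots,m_n$ with $\shift^{k+1+m_i}(s,\sigma)\vDash\phi$. The $n+1$ numbers $k, k+1+m_1,\dots,k+1+m_n$ are then distinct, since the last $n$ are all strictly greater than $k$ and pairwise distinct.

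For the right-to-left direction, given distinct $k_1,\dots,k_{n+1}$, relabel them so that $k_1<k_2<\dots<k_{n+1}$. Put $m_i=k_{i+1}-k_1-1$ for $1\leq i\leq n$. These are distinct nonnegative integers, and $\shift^{m_i}(\shift^{k_1+1}(s,\sigma))=\shift^{k_{i+1}}(s,\sigma)\vDash\phi$. The induction hypothesis at $\shift^{k_1+1}(s,\sigma)$ therefore gives $\shift^{k_1+1}(s,\sigma)\vDash\FF^{\geq n}\phi$, hence $\shift^{k_1}(s,\sigma)\vDash\phi\land\XX\FF^{\geq n}\phi$. Witnessing the outer $\FF$ with $k_1$ then gives $s,\sigma\vDash\FF^{\geq n+1}\phi$.

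There is no real obstacle here; the only points needing care are bookkeeping. First, the induction hypothesis must be stated for all points, not just the fixed one. Second, in the right-to-left direction the witnesses must be sorted so that the smallest one serves as the outer $\FF$-witness and the remaining ones shift down to nonnegative offsets. I would also note explicitly that the statement is only meant for $n\geq 1$, since $\FF^{\geq n}$ is defined only from $n=1$.
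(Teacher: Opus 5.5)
Your proposal is correct and follows essentially the same route as the paper: induction on $n$, with the base case via $\top\UU\phi$ (where $\top$ holds everywhere), and in the inductive step the reindexing $k, k+1+m_i$ in one direction and, after sorting, $m_i = k_{i+1}-k_1-1$ with the least witness serving the outer $\FF$ in the other. Quantifying the induction hypothesis explicitly over all points, and noting that the offsets are only nonnegative (the paper says ``positive'', but $k_2-(k_1+1)$ can be $0$), are small improvements in precision over the paper's write-up.
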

        \begin{proof}
            By induction on $n$. For $n=1$, recall $\FF^{\geq 1}\phi:=\FF\phi:=\top\UU\phi$. Now note that $s,\sigma\vDash\top\UU\phi$ iff for some $k\geq 0$, $\shift^k(s,\sigma)\vDash\phi$ and for $0\leq j<k$, $\shift^j(s,\sigma)\vDash\top$. But since all points verify $\top$, this is equivalent to saying that $s,\sigma\vDash\FF^{\geq 1}\phi$ iff for some $k\geq 0$, $\shift^k(s,\sigma)\vDash\phi$. This completes the base case.

            For the inductive step, recall that $\FF^{\geq n+1}\phi:=\FF(\phi\land\XX\FF^{\geq n}\phi)$. Now suppose that $s,\sigma\vDash\FF^{\geq n+1}\phi$. Then by similar reasoning to the previous case, for some $k\geq 0$, $\shift^k(s,\sigma)\vDash\phi\land\XX\FF^{\geq n}\phi$. So $\shift^k(s,\sigma)\vDash\phi$ and $\shift^k(s,\sigma)\vDash\XX\FF^{\geq n}\phi$. Thus, $\shift^{k+1}=\shift(\shift^k(s,\sigma))\vDash\FF^{\geq n}\phi$. So by the inductive hypothesis, there are $n$ distinct numbers $k_1,\dots,k_n$ so that for $1\leq i\leq n$, $\shift^{k_i}(\shift^{k+1}(s, \sigma)) \vDash \phi$. Thus there are $n+1$ numbers $j_1, \dots ,j_{n+1}$ with $j_1=k$ and for $i>1$, $j_i = k_{i-1}+k+1$ so that for $1 \leq i \leq n+1$, $\shift^{j_i}(s, \sigma) \vDash \phi$. Finally, note that for $i>1$, each $j_i > k = j_1$, and that this and the fact that the $k_i$ are distinct guarantees that the numbers $j_1, \dots j_{n+1}$ are all distinct.

            For the other direction, suppose there are $n+1$ distinct numbers $k_1, \dots, k_{n+1}$ so that for $1 \leq i \leq n+1$, $\shift^{k_i}(s, \sigma) \vDash \phi$. Suppose without loss of generality that $k_1 < k_2 < \dots < k_{n+1}$. Then $\shift^{k_1}(s, \sigma) \vDash \phi$ and since $k_1 < k_2$, $k_1+1\leq k_2$. So there are $n$ distinct numbers $j_1 = k_2-(k_1+1), \dots, j_n = k_{n+1}-(k_1+1)$ so that for $1 \leq i \leq n$, $\shift^{j_i}(\shift(\shift^{k_1}(s, \sigma))) \vDash \phi$. Note that since $k_1+1\leq k_2$, each $j_i$ is positive, and since the $k$'s are all distinct, so are the $j$'s. Thus by the inductive hypothesis, $\shift(\shift^{k_1}(s, \sigma)) \vDash \FF^{\geq n} \phi$. So $\shift^{k_1}(s, \sigma) \vDash \XX \FF^{\geq n} \phi$, and thus $s, \sigma \vDash \FF (\phi \land \XX \FF^{\geq n} \phi)=\FF^{\geq n+1}\phi$. 
        \end{proof}

        As a second exercise, we prove a lemma that we will need below:
        \begin{lemma}\label{obsklemma}
            $s, \sigma \vDash \langle\obs\rangle^k A$ iff $s . \sigma_k, \sigma^k \vDash A$. 
        \end{lemma}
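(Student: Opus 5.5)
We argue by induction on $k$, using that $\llbracket\obs\rrbracket$ is a functional relation: each point $\langle s,\sigma\rangle$ has exactly one $\llbracket\obs\rrbracket$-successor, namely $\langle s.\sigma_1,\sigma^1\rangle$. So $s,\sigma\vDash\langle\obs\rangle B$ iff $s.\sigma_1,\sigma^1\vDash B$ for any formula $B$. This is the case $k=1$. The case $k=0$ is trivial: by the convention $m^0A:=A$ the left side is $s,\sigma\vDash A$, and $s.\sigma_0=s.\varepsilon=s$ while $\sigma^0=\sigma$.

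For the inductive step we must choose how to unfold $\langle\obs\rangle^{k+1}A$. The paper's definition gives $m^{k+1}A := m\,m^kA$, so we peel the \emph{outermost} modal. We have $s,\sigma\vDash\langle\obs\rangle\langle\obs\rangle^kA$ iff $s.\sigma_1,\sigma^1\vDash\langle\obs\rangle^kA$. Applying the inductive hypothesis at the point $\langle s.\sigma_1,\sigma^1\rangle$, this holds iff
\[
(s.\sigma_1).(\sigma^1)_k,\ (\sigma^1)^k\vDash A .
\]
Since the hypothesis is stated for all points, it can be used at this shifted point.

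It remains to rewrite the point using the first preliminary lemma. We have $(\sigma^1)^k=\sigma^{1+k}$. For the first coordinate, $(s.\sigma_1).(\sigma^1)_k = s.(\sigma_1.(\sigma^1)_k)=s.\sigma_{1+k}$, by associativity of concatenation and the clause $u_i.(u^i)_k=u_{i+k}$. This gives exactly $s.\sigma_{k+1},\sigma^{k+1}\vDash A$.

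The only step needing any care is this bookkeeping: matching the unfolding order of $m^{k+1}$ with the index identities, and noting that concatenation is associative. Neither should present a real obstacle.
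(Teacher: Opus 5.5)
Your proposal is correct and follows the paper's proof essentially verbatim: induction on $k$, peeling the outermost $\langle\obs\rangle$ to reach $s.\sigma_1,\sigma^1\vDash\langle\obs\rangle^k A$, applying the hypothesis at that shifted point, and then collapsing $s.\sigma_1.(\sigma^1)_k,(\sigma^1)^k$ to $s.\sigma_{k+1},\sigma^{k+1}$ via the preliminary index identities. You also make explicit two things the paper leaves tacit: that $\llbracket\obs\rrbracket$ is functional, and that the base case relies on $\sigma_0=\varepsilon$.
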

        \begin{proof}
            By induction on $k$. If $k=0$, the result is immediate. For $k\geq 0$, $s, \sigma \vDash \langle\obs\rangle^{k+1} A$ iff $s . \sigma_1, \sigma^1 \vDash \langle\obs\rangle^k A$ iff, by the inductive hypothesis, $s . \sigma_1 . (\sigma^1)_k, (\sigma^1)^k \vDash A$. But this happens iff $s . \sigma_{k+1}, \sigma^{k+1} \vDash A$.
        \end{proof}

        Recall that the usual semantics for LTL is given in terms of infinite sequences $\sigma\in\Sigma_S^\omega$ as follows:
        \begin{multicols}{2}\begin{itemize}[leftmargin=*]\setlength\itemsep{1ex}
            \item $\sigma \vDash_{\LTL} p$ iff $p \in \sigma(1)$ for $p \in S$.
            \item $\sigma \vDash_{\LTL} \neg A$ iff $\sigma \not\vDash_{\LTL} A$. 
            \item $\sigma \vDash_{\LTL} A \lor B$ iff $\sigma \vDash_{\LTL} A$ or $\sigma \vDash_{\LTL} B$.
            \item $\sigma \vDash_{\LTL} \XX \phi$ iff $\sigma^1 \vDash_{\LTL} \phi$.
            \item $\sigma \vDash_{\LTL} \phi \UU \psi$ iff for some $k \geq 0$, $\sigma^k \vDash_{\LTL} \psi$ and for $0 \leq j < k$, $\sigma^j \vDash_{\LTL} \phi$. 
        \end{itemize}\end{multicols}
        
        As you might suspect from examining $\LL_{\EDMon}$'s semantics, when restricted to $\LL_{\LTL}$, it is essentially exactly the standard semantics: 
        \begin{lemma}\label{LTLSat}
            For $\phi \in \LL_{\LTL}$, $\trace(s, \sigma) \vDash_{\LTL} \phi$ iff $s, \sigma \vDash \phi$.
        \end{lemma}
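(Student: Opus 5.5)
We prove the claim by structural induction on $\phi\in\LL_{\LTL}$. We establish it simultaneously for all points $\langle s,\sigma\rangle$, since the temporal clauses move from $\langle s,\sigma\rangle$ to shifted points. The only bridge we need between the two semantics is Lemma~\ref{shifthead}: for every $k\geq 0$, $\trace(\shift^k(s,\sigma))=\trace(s,\sigma)^k$. Throughout, write $\rho=\trace(s,\sigma)$.

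\emph{Base case.} For $p\in S$ we have $s,\sigma\vDash p$ iff $p\in\head(s,\sigma)=\rho(1)$, iff $\rho\vDash_{\LTL}p$.

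\emph{Boolean cases.} The cases $\neg$ and $\lor$ follow immediately from the inductive hypothesis applied at the same point $\langle s,\sigma\rangle$, since both semantics treat these connectives identically.

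\emph{Next.} For $\XX\phi$, $s,\sigma\vDash\XX\phi$ iff $\shift(s,\sigma)\vDash\phi$. By the inductive hypothesis at the point $\shift(s,\sigma)$, this holds iff $\trace(\shift(s,\sigma))\vDash_{\LTL}\phi$. By Lemma~\ref{shifthead} with $k=1$, $\trace(\shift(s,\sigma))=\rho^1$, so the condition is equivalent to $\rho\vDash_{\LTL}\XX\phi$.

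\emph{Until.} For $\phi\UU\psi$, $s,\sigma\vDash\phi\UU\psi$ iff there is some $k\geq 0$ with $\shift^k(s,\sigma)\vDash\psi$ and $\shift^j(s,\sigma)\vDash\phi$ for all $j<k$. We apply the inductive hypotheses for $\phi$ and $\psi$ at each of the points $\shift^j(s,\sigma)$. Then Lemma~\ref{shifthead} rewrites each $\trace(\shift^j(s,\sigma))$ as $\rho^j$. The resulting condition is exactly the LTL clause for $\rho\vDash_{\LTL}\phi\UU\psi$.

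The only real obstacle is to confirm that Lemma~\ref{shifthead} holds uniformly, including when the finite component $s$ is exhausted partway through the $k$ shifts. It does: when $s\neq\varepsilon$, $\shift$ strips the first letter of $s$, and when $s=\varepsilon$, it strips the first letter of $\sigma$. In either case the trace loses exactly its first element. Since that lemma is given, the induction above goes through with no further case analysis.
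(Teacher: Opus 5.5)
Your proposal is correct and follows essentially the same route as the paper: structural induction on $\phi$, with the atomic case via $\head(s,\sigma)=\trace(s,\sigma)(1)$, the Boolean cases immediate, and the $\XX$ and $\UU$ cases handled by Lemma~\ref{shifthead}. Your explicit note that the inductive hypothesis must hold for all points at once, because the temporal clauses move to shifted points, is a helpful clarification of something the paper leaves implicit.
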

        \begin{proof}
            By induction on $\phi$. If $\phi=p$ is an atom, then first note that $\trace(s,\sigma)(1)=\head(s,\sigma)$. Now observe that $\trace(s,\sigma)\vDash_{\LTL} p$ iff $p\in\trace(s,\sigma)(1)$ iff $p\in\head(s,\sigma)$ iff $s,\sigma\vDash p$.

            The negation and disjunction cases are immediate from the inductive hypotheses. 

            For the $\XX$-case, first note that by Lemma~\ref{shifthead}, $\trace(s, \sigma)^1 = \trace(\shift(s, \sigma))$. Now observe that $\trace(s, \sigma) \vDash_{\LTL} \XX \phi$ iff $\trace(s, \sigma)^1 \vDash_{\LTL} \phi$ iff $\trace(\shift(s, \sigma)) \vDash_{\LTL} \phi$ iff (by the inductive hypothesis) $\shift(s, \sigma) \vDash \phi$ iff $s, \sigma \vDash \XX \phi$.

            For the $\UU$-case, first note that by Lemma~\ref{shifthead}, for all $n\geq 0$, $\trace(s, \sigma)^n = \trace(\shift^n(s, \sigma))$. Now observe that $\trace(s, \sigma) \vDash \phi \UU \psi$ iff for some $k \geq 0$, $\trace(s, \sigma)^k \vDash_{\LTL} \psi$ and for all $0 \leq j < k$, $\trace(s, \sigma)^j \vDash_{\LTL} \phi$. But this happens iff for some $k\geq 0$, $\trace(\shift^k(s, \sigma)) \vDash_{\LTL} \psi$ and for all $0 \leq j < k$, $\trace(\shift^j(s, \sigma)) \vDash_{\LTL} \phi$. But by the inductive hypothesis, this happens iff for some $k\geq 0$, $\shift^k(s, \sigma) \vDash \psi$ and for all $0 \leq j < k$, $\shift^j(s, \sigma) \vDash \phi$ which in turns happens iff $s,\sigma \vDash \phi \UU \psi$.
        \end{proof}

        In plain English, what Lemma~\ref{LTLSat} says is that the location of the split between (finite) prefix and (infinite) suffix plays no role whatsoever in evaluating LTL-formulas. All that matters is the combined trace $s . \sigma$. So when it comes to evaluating LTL-formulas, the semantics treats points exactly like the combined traces they represent, as it should.
        
        In keeping with our nonproliferation policies, we will abuse notation and write `$s, \sigma \sim_\mu t, \tau$' when we should technically write `$\langle s, \sigma \rangle \sim_{\llbracket\mu\rrbracket} \langle t, \tau \rangle$'. Finally, note that the `hat' function mapping atomic mutations $\mu$ to relations $\widehat{\mu}$ on $\Sigma_S^*$ can clearly be extended to arbitrary mutations using the recursive procedure given in the semantics. Explicitly, we can define $\widehat{\mu_1; \mu_2} = \widehat{\mu_1} \circ \widehat{\mu_2}$, $\widehat{\mu_1 \cup \mu_2} = \widehat{\mu_1} \cup \widehat{\mu_2}$, $\widehat{\mu^*} = \widehat{\mu}^*$, and $\widehat{\mu^-} = \widehat{\mu}^{-1}$. As a technical matter, we also need to specify that $\widehat{\mu^0}$ is the identity relation on $\Sigma^*_S$. We then have the following results
        \begin{coro}\label{liftingcoro}
            For all mutations $\mu$, $\llbracket \mu \rrbracket = \Lift(\widehat{\mu})$. 
        \end{coro}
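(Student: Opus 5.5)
The plan is a structural induction on the mutation $\mu$, following the grammar $\mu := \st \mid \loss \mid \ooo \mid \cor \mid \mu \cup \mu \mid \mu; \mu \mid \mu^- \mid \mu^*$. At each step we compare the recursive clause defining $\llbracket-\rrbracket$ with the matching clause defining the extended hat function, and use Lemma~\ref{RSLem} to move $\Lift$ past the relevant operation.

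For the base case, let $\mu$ be an atomic mutation. Then $\llbracket\mu\rrbracket = \Lift(\widehat{\mu})$ holds by the very definition of $\llbracket-\rrbracket$.

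The inductive steps assume the claim for $\alpha$ and $\beta$.
\begin{itemize}
\item For sequential composition we compute
\[
\llbracket\alpha;\beta\rrbracket = \llbracket\alpha\rrbracket\circ\llbracket\beta\rrbracket = \Lift(\widehat{\alpha})\circ\Lift(\widehat{\beta}) = \Lift(\widehat{\alpha}\circ\widehat{\beta}) = \Lift(\widehat{\alpha;\beta}),
\]
where the third equality is Lemma~\ref{RSLem}(4).
\item Union is handled the same way using Lemma~\ref{RSLem}(3) with a two-element index set.
\item Converse uses Lemma~\ref{RSLem}(5): $\llbracket\alpha^-\rrbracket=\Lift(\widehat{\alpha})^{-1}=\Lift(\widehat{\alpha}^{-1})$.
\item The star case uses Lemma~\ref{RSLem}(6): $\llbracket\alpha^*\rrbracket=\Lift(\widehat{\alpha})^*=\Lift(\widehat{\alpha}^*)=\Lift(\widehat{\alpha^*})$.
\end{itemize}

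The one point needing care is the abbreviation $\mu^k$ and the convention that $\widehat{\mu^0}$ is the identity. Since $\mu^{k+1}$ abbreviates $\mu;\mu^k$, it is covered by the composition case. For $\mu^0$, the convention $\langle\mu^0\rangle A := A$ means its intended interpretation is the identity on points, and Lemma~\ref{RSLem}(2) gives $\Lift(\id_{\Sigma_S^*})=\id_{\Sigma_S^*\times\Sigma_S^\omega}$, so this case also agrees.

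Lemma~\ref{RSLem}(6) itself holds because $R^*=\bigcup_k R^k$, and parts (2), (3) and (4) plus a small induction on $k$ give $\Lift(R^k)=\Lift(R)^k$. The proof has no genuine obstacle: everything reduces to the homomorphism properties already established in Lemma~\ref{RSLem}.
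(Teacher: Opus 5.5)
Your proposal is correct and is exactly the paper's argument. The paper also uses structural induction on $\mu$: the atomic case holds by definition, and each compound case follows from the induction hypothesis together with the matching clause of Lemma~\ref{RSLem}. Your extra remarks on $\mu^0$, and on deriving part (6) from parts (2)--(4) by a small induction, spell out details the paper leaves implicit.
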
   
        \begin{proof}
            By induction on $\mu$. If $\mu$ is atomic, the result is immediate and for every other case, the result is an immediate consequence of the inductive hypothesis and Lemma~\ref{RSLem}.
        \end{proof}

        \begin{coro}\label{easymutlem}
            $s, \sigma \vDash \langle\mu\rangle A$ iff $t, \sigma \vDash A$ for some $s \sim_{\widehat{\mu}} t$. 
        \end{coro}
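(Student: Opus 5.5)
The plan is to unfold the semantic clause for $\langle\theta\rangle$ and then apply Corollary~\ref{liftingcoro}. Since a mutation $\mu$ is an instance of $\theta$, the satisfaction clause gives the following: $s,\sigma\vDash\langle\mu\rangle A$ holds iff there is some point $\langle t,\tau\rangle$ with $\langle s,\sigma\rangle\sim_{\llbracket\mu\rrbracket}\langle t,\tau\rangle$ and $t,\tau\vDash A$. The task is therefore to describe exactly which points are $\llbracket\mu\rrbracket$-accessible from $\langle s,\sigma\rangle$.

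By Corollary~\ref{liftingcoro}, $\llbracket\mu\rrbracket=\Lift(\widehat{\mu})$. By the definition of $\Lift$, this is $\widehat{\mu}\times\id_{\Sigma_S^\omega}=\{\langle s,\sigma,t,\tau\rangle\mid s\sim_{\widehat{\mu}}t\text{ and }\sigma=\tau\}$. So $\langle s,\sigma\rangle\sim_{\llbracket\mu\rrbracket}\langle t,\tau\rangle$ holds exactly when $s\sim_{\widehat{\mu}}t$ and $\tau=\sigma$.

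Substituting this characterization into the clause, the witness $\langle t,\tau\rangle$ must have $\tau=\sigma$. The existential then reduces to: there is some $t$ with $s\sim_{\widehat{\mu}}t$ and $t,\sigma\vDash A$. This is the right-hand side of the corollary, and the argument is reversible: given such a $t$, the point $\langle t,\sigma\rangle$ is $\llbracket\mu\rrbracket$-accessible and verifies $A$. Both directions are thus immediate.

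There is no real obstacle here. The only point needing care is that Corollary~\ref{liftingcoro} covers all mutations, including compound ones such as $\mu^*$ and $\mu^-$. This is why we route the argument through $\widehat{\mu}$ for arbitrary $\mu$ rather than doing induction on $\mu$ directly.
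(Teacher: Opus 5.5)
Your proposal is correct and follows essentially the same route as the paper's proof. Both unfold the $\langle\theta\rangle$ clause, use Corollary~\ref{liftingcoro} to identify $\llbracket\mu\rrbracket$ with $\Lift(\widehat{\mu})$, and then observe that accessibility amounts to $s\sim_{\widehat{\mu}}t$ together with $\tau=\sigma$.
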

        \begin{proof}
            $s, \sigma \vDash \langle \mu \rangle A$ iff $t, \tau \vDash A$ for some $s, \sigma \sim_{\mu} t, \tau$. But since $\llbracket \mu \rrbracket = \Lift(\widehat{\mu})$, $s, \sigma \sim_{\mu} t, \tau$ iff $s \sim_{\widehat{\mu}} t$ and $\sigma = \tau$. Thus, $t, \tau \vDash A$ for some $s, \sigma \sim_{\mu} t, \tau$ iff $t, \sigma \vDash A$ for some $s \sim_{\widehat{\mu}} t$. 
        \end{proof}

        A similar result to Corollary~\ref{easymutlem} is the following, which follows immediately from the definition of $\dd$:
        \begin{lemma}\label{DandPLemma}
            $s, \sigma \vDash \Dd A$ iff $s, \tau \vDash A$ for all $\tau \in \Sigma^\omega$ and $s, \sigma \vDash \Pp A$ iff $s, \tau \vDash A$ for some $\tau \in \Sigma^\omega$. 
        \end{lemma}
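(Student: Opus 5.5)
The plan is to unfold the semantic clause for $\Dd$ and then use the explicit description of the relation $\dd$; the clause for $\Pp$ will follow by duality.

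For the first conjunct, the clause for $\Dd$ gives that $s,\sigma\vDash\Dd A$ iff $t,\tau\vDash A$ for every point $\langle t,\tau\rangle$ with $\langle s,\sigma\rangle\sim_{\dd}\langle t,\tau\rangle$. By definition $\dd=\id_{\Sigma_S^*}\times\nabla_{\Sigma_S^\omega}$, so this relation holds exactly when $t=s$, with $\tau$ an arbitrary member of $\Sigma^\omega$. I will show the set of $\dd$-successors of $\langle s,\sigma\rangle$ is precisely $\{\langle s,\tau\rangle:\tau\in\Sigma^\omega\}$. Substituting this into the clause yields the first biconditional.

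For the second conjunct, I will unfold the abbreviation $\Pp A:=\neg\Dd\neg A$. By the negation clause, $s,\sigma\vDash\Pp A$ iff $s,\sigma\not\vDash\Dd\neg A$. By the first part (applied to $\neg A$), this holds iff it is not the case that $s,\tau\vDash\neg A$ for all $\tau$. That is, iff $s,\tau\not\vDash\neg A$ for some $\tau$, which by the negation clause again is equivalent to $s,\tau\vDash A$ for some $\tau$.

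There is no genuine obstacle here. The only point requiring care is that the first part has to be stated for arbitrary formulas $A$, so that it can legitimately be instantiated at $\neg A$. Beyond that, I only need to be careful that $\Sigma^\omega$ is read as $\Sigma_S^\omega$ for the structure in question.
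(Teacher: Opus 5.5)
Your proposal is correct and matches the paper, which gives no separate proof and simply notes that the lemma ``follows immediately from the definition of $\dd$.'' That is exactly your unfolding of $\dd=\id_{\Sigma_S^*}\times\nabla_{\Sigma_S^\omega}$, with the $\Pp$ clause obtained by duality through $\Pp A:=\neg\Dd\neg A$.
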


        \begin{lemma}\label{concatLemma}
            For all mutations $\mu$ and all $r \in \Sigma^*$, if $s \sim_{\widehat{\mu}} t$, then $s . r \sim_{\widehat{\mu}} t . r$.
        \end{lemma}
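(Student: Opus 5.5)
We argue by structural induction on the mutation $\mu$, working with the extended hat function $\widehat{\mu}$ defined just before Corollary~\ref{liftingcoro}. Fix $r\in\Sigma^*$. Let $C_r$ be the relation on $\Sigma^*$ consisting of all pairs $\langle s,t\rangle$ such that $\langle s.r,t.r\rangle$ is also in that relation's target. The statement then reduces to showing that every $\widehat{\mu}$ is ``closed under right concatenation by $r$'', in the sense that $s\sim_{\widehat{\mu}} t$ implies $s.r\sim_{\widehat{\mu}} t.r$.

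\textbf{Base cases.} Take the atomic mutations $\st,\loss,\ooo,\cor$. If $s=t$, then $s.r=t.r$ and the reflexive disjunct of the definition applies.

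Otherwise the witnessing decomposition extends to $s.r$ and $t.r$:
\begin{itemize}
\item For $\ooo$: if $s=uxyv$ and $t=uyxv$, then $s.r=uxy(v.r)$ and $t.r=uyx(v.r)$, with $v.r\in\Sigma^*$.
\item For $\loss$, $\cor$ and $\st$ the same argument works, replacing $v$ by $v.r$.
\end{itemize}
The only thing to note is associativity of concatenation, which is routine.

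\textbf{Inductive cases.}
\begin{itemize}
\item For $\mu_1\cup\mu_2$: a pair in the union lies in one of $\widehat{\mu_1}$ or $\widehat{\mu_2}$, and the corresponding inductive hypothesis applies.
\item For $\mu_1;\mu_2$: if $s\sim_{\widehat{\mu_1}}u\sim_{\widehat{\mu_2}}t$, then applying both hypotheses gives $s.r\sim_{\widehat{\mu_1}}u.r\sim_{\widehat{\mu_2}}t.r$.
\item For $\mu^-$: $s\sim_{\widehat{\mu}^{-1}}t$ means $t\sim_{\widehat{\mu}}s$. The hypothesis gives $t.r\sim_{\widehat{\mu}}s.r$, i.e.\ $s.r\sim_{\widehat{\mu^-}}t.r$.
\item For $\mu^*$: a pair in $\widehat{\mu}^*$ lies in some $\widehat{\mu}^k$. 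An inner induction on $k$ handles this, using the identity relation for $k=0$ (trivially closed) and the composition argument just given for the step. The same remark covers the technical clause $\widehat{\mu^0}=\id$.
\end{itemize}

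\textbf{Expected obstacle.} The only point needing care is the $\mu^-$ case. It works only because the property is proved uniformly for every relation in the induction, and the property is not itself ``directional'': it uses the same $r$ on both sides. So inverting a relation preserves it. It is therefore important to fix $r$ outside the induction, or equivalently to quantify over all $r$ inside the inductive hypothesis; either choice works. Everything else is bookkeeping with the sequence identities of the first lemma of the preliminaries.
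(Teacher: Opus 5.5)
Your proof is correct and is essentially the paper's. Both argue by structural induction on $\mu$: the atomic cases come from replacing the suffix $v$ by $v.r$ in the witnessing decomposition, and the union, composition, inverse, and star cases follow directly from the inductive hypothesis. For $\mu^*$ the paper applies the hypothesis along a finite chain, and your inner induction on $k$ is a tidier form of that which also covers the zero-step case explicitly; the only blemish is your opening ``definition'' of $C_r$, which is circular and never used, so you can simply delete it.
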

        \begin{proof}
            By induction on $\mu$. For each of the atomic cases the result is immediate on inspection. For the semicolon case, suppose $s \sim_{\widehat{\mu_1; \mu_2}} t$. Then for some $u$, $s \sim_{\widehat{\mu_1}} u$ and $u \sim_{\widehat{\mu_2}} t$. So by the inductive hypothesis, $s . r \sim_{\widehat{\mu_1}} u . r$ and $u . r \sim_{\widehat{\mu_2}} t . r$. Thus $s . r \sim_{\widehat{\mu_1; \mu_2}} t . r$ as required.

            The choice case is immediate from the inductive hypothesis. For the star case, suppose $s \sim_{\widehat{\mu^*}} t$. Then for some $k$, there are $u_1, \dots, u_k$ so that $s \sim_{\widehat{\mu}} u_1$ and for $1\leq i\leq k-1$, $u_i \sim_{\widehat{\mu}} u_{i+1}$ and $u_k \sim_{\widehat{\mu}} t$. But then by the inductive hypothesis, $s . r  \sim_{\widehat{\mu}} u_1 . r$ and for $1\leq i\leq k-1$, $u_i . r  \sim_{\widehat{\mu}} u_{i+1} . r$ and $u_k . r \sim_{\widehat{\mu}} t . r$. So $s . r \sim_{\widehat{\mu^*}} t . r$. 

            For the inverse case, like the choice case, is immediate from the inductive hypothesis. 
        \end{proof}

    \subsection{Understanding the Language}
        
        We understand points $\langle s,\sigma\rangle$ as follows: $\trace(s,\sigma)$ is the sequence containing \emph{all} the data some sensor will produce and transmit---intuitively, to a monitor---over its lifetime; $s$ is the part of $\trace(s,\sigma)$ that has been transmitted so far. The two important details here are that (a) every point describes information that is being \emph{sent}, and (b) every point is, in a certain minimal sense, temporally located. To see these two ideas in action, it helps to have a look at how the modals behave. We take each class in turn, along the way proving a number of sanity check lemmas showing that the semantic concepts behave as they intuitively ought.
        
        \subsubsection{Observation modals} Parsing semantic clauses gives that $s, \sigma \vDash \langle \obs \rangle A$ iff $s . \sigma_1, \sigma^1 \vDash A$. Note that $\trace(s . \sigma_1, \sigma^1)=\trace(s, \sigma)$. So while $\obs$ induces a change in which point we are evaluating formulas from, it doesn't change the trace at which the evaluation is happening---all it does is move us from a situation where $|s|$ states from $\trace(s,\sigma)$ have been transmitted to a situation where $|s|+1$ have. It follows that $\langle \obs \rangle A$ is true just if $A$ will be true after the sensor in question transmits its next observation. Building on this, $s, \sigma \vDash \langle \obs^* \rangle A$ just if for some $k\geq 0$, $s . \sigma_k, \sigma^k \vDash A$, which is to say that $\langle \obs^* \rangle A$ is just when $A$ will become true after the sensor makes (and sends) some not necessarily nonzero number of additional observations. 

        Since sending information doesn't (either intuitively or, as we just saw in Lemma~\ref{LTLSat}, in the semantics) change `worldly' information, one hopes that `worldly' formulas---which to say $\LL_{\LTL}$-formulas---will display an appropriate sort of observation-invariance. That they do is the content of our first `sanity check' lemma: 
        \begin{lemma}\label{LTLobsLemma}
            If $A\in\LL_{\LTL}$, then $s, \sigma \vDash A$ iff $s, \sigma \vDash \langle\obs^*\rangle A$. 
        \end{lemma}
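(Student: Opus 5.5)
The plan is to reduce everything to Lemma~\ref{LTLSat}, which says that the truth value of an $\LL_{\LTL}$-formula at a point depends only on the combined trace $\trace(s,\sigma)$. The key steps are: unfold $\langle\obs^*\rangle$ into finitely many applications of $\langle\obs\rangle$, check that observation preserves the combined trace, and then close both directions of the biconditional.

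First I will unfold the semantics of $\langle\obs^*\rangle$. Since $\llbracket\obs^*\rrbracket=\llbracket\obs\rrbracket^*=\bigcup_{k\geq 0}\llbracket\obs\rrbracket^k$, we get $s,\sigma\vDash\langle\obs^*\rangle A$ iff for some $k\geq 0$ there is $\langle t,\tau\rangle$ with $\langle s,\sigma\rangle\sim_{\llbracket\obs\rrbracket^k}\langle t,\tau\rangle$ and $t,\tau\vDash A$. Because $\llbracket\obs\rrbracket$ is a function on points, the $k$-fold composition is also functional. By an easy induction matching the one in Lemma~\ref{obsklemma}, its unique output on $\langle s,\sigma\rangle$ is $\langle s.\sigma_k,\sigma^k\rangle$. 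Hence $s,\sigma\vDash\langle\obs^*\rangle A$ iff for some $k\geq 0$, $s.\sigma_k,\sigma^k\vDash A$; equivalently, iff $s,\sigma\vDash\langle\obs\rangle^k A$ for some $k$, using Lemma~\ref{obsklemma}.

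Second, I will observe that $\trace(s.\sigma_k,\sigma^k)=s.\sigma_k.\sigma^k=s.\sigma=\trace(s,\sigma)$. This uses the preliminary identity $u=u_k.u^k$ together with associativity of concatenation.

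Finally, the two directions. Left to right: take $k=0$, since $\langle s,\sigma\rangle\sim_{\llbracket\obs\rrbracket^0}\langle s,\sigma\rangle$ because $R^0$ is the identity. Right to left: suppose $s.\sigma_k,\sigma^k\vDash A$ for some $k$. By Lemma~\ref{LTLSat}, $\trace(s.\sigma_k,\sigma^k)\vDash_{\LTL}A$. This trace equals $\trace(s,\sigma)$, so $\trace(s,\sigma)\vDash_{\LTL}A$, and applying Lemma~\ref{LTLSat} again gives $s,\sigma\vDash A$.

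I do not expect a genuine obstacle. The only point needing care is identifying the relation $\llbracket\obs\rrbracket^k$ with the map $\langle s,\sigma\rangle\mapsto\langle s.\sigma_k,\sigma^k\rangle$ via a small induction on $k$. That induction uses $s.\sigma_1.(\sigma^1)_k=s.\sigma_{k+1}$ and $(\sigma^1)^k=\sigma^{k+1}$, both from the preliminary lemma.
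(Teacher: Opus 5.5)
Your proof is correct and follows essentially the same route as the paper: unfold $\langle\obs^*\rangle A$ to $s.\sigma_k,\sigma^k\vDash A$ for some $k$ (Lemma~\ref{obsklemma}), note that $\trace(s.\sigma_k,\sigma^k)=\trace(s,\sigma)$, and apply Lemma~\ref{LTLSat} on both sides. The only minor differences are that you explicitly identify $\llbracket\obs\rrbracket^k$ with the map $\langle s,\sigma\rangle\mapsto\langle s.\sigma_k,\sigma^k\rangle$, which the paper leaves implicit in its appeal to Lemma~\ref{obsklemma}, and that you get the left-to-right direction directly from $k=0$ rather than from the paper's chain of biconditionals.
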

        \begin{proof}
            $s, \sigma \vDash \langle\obs^*\rangle A$ iff (by Lemma~\ref{obsklemma}) for some $k\geq 0$, $s, \sigma \vDash \langle\obs\rangle^k A$ iff for some $k\geq 0$, $s . \sigma_k, \sigma^k \vDash A$. But by Lemma~\ref{LTLSat}, this happens iff $\trace(s . \sigma_k, \sigma^k) \vDash_{\LTL} A$. But notice that $\trace(s . \sigma_k, \sigma^k) = s . \sigma_k . \sigma^k = s . \sigma = \trace(s, \sigma)$. So $\trace(s . \sigma_k, \sigma^k) \vDash_{\LTL} A$ iff $\trace(s, \sigma) \vDash_{\LTL} A$ iff (again by Lemma~\ref{LTLSat}), $s, \sigma\vDash A$. 
        \end{proof}
            
        \subsubsection{Mutation modals} 
        
            For $\mu$ a mutation, $s, \sigma \vDash \langle \mu \rangle A$ just if (by Corollary~\ref{liftingcoro}) $t, \sigma\vDash A$ for some $s \sim_{\widehat{\mu}} t$. So in a situation where we are \emph{receiving} information that verifies $A$ along a channel subject to mutations $\mu$, $s$ is a description of what might have been sent. So receiving a message that `says' $B$ across a channel subject to mutations $\mu$ can be modeled as a sent-message that `says' $\langle\mu\rangle B$. In particular, when $\langle\mu\rangle \langle\obs\rangle A$ is true, we have received a message that will support $A$ after the next observation. When this is true, it will also be true that after the next observation, we will have received a message that supports $A$. Appealing to the above translation, we expect the following lemma, which is our next sanity check:
           \begin{lemma}\label{muobslemma}
                For all $A$ and $\mu$, $\langle\mu\rangle\langle \obs\rangle A \to \langle\obs\rangle\langle\mu\rangle A$ is valid.
            \end{lemma}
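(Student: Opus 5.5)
We prove the validity directly from the semantic clauses, using Corollary~\ref{easymutlem} to unpack the mutation modal and the explicit description of $\llbracket\obs\rrbracket$ to unpack the observation modal. The only substantive ingredient will be Lemma~\ref{concatLemma}.

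First, unpack the antecedent. Suppose $s,\sigma\vDash\langle\mu\rangle\langle\obs\rangle A$. By Corollary~\ref{easymutlem}, there is $t$ with $s\sim_{\widehat{\mu}} t$ and $t,\sigma\vDash\langle\obs\rangle A$. The only $\llbracket\obs\rrbracket$-successor of $\langle t,\sigma\rangle$ is $\langle t.\sigma_1,\sigma^1\rangle$, so this says $t.\sigma_1,\sigma^1\vDash A$.

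Next, unpack the consequent. By the same reasoning, $s,\sigma\vDash\langle\obs\rangle\langle\mu\rangle A$ holds iff $s.\sigma_1,\sigma^1\vDash\langle\mu\rangle A$. By Corollary~\ref{easymutlem} again, this holds iff there is some $t'$ with $s.\sigma_1\sim_{\widehat{\mu}} t'$ and $t',\sigma^1\vDash A$.

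It therefore suffices to take $t'=t.\sigma_1$ and check that $s.\sigma_1\sim_{\widehat{\mu}} t.\sigma_1$. This is exactly Lemma~\ref{concatLemma} applied to $s\sim_{\widehat{\mu}} t$ with $r=\sigma_1\in\Sigma^*$. The suffix component is $\sigma^1$ on both sides, as the lifting to $\Lift(\widehat{\mu})$ requires, so the witness is legitimate.

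The main obstacle is simply making sure Lemma~\ref{concatLemma} covers every mutation, including the inverse and star cases. Since we may assume it as stated, the argument is a short chain of unpackings. The only care needed is to note that $\llbracket\obs\rrbracket$ is functional, so the diamond and box readings of $\langle\obs\rangle$ coincide.
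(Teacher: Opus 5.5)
Your proposal is correct and follows essentially the same route as the paper's proof. Both unpack the antecedent to get $t$ with $s\sim_{\widehat{\mu}} t$ and $t.\sigma_1,\sigma^1\vDash A$, then apply Lemma~\ref{concatLemma} with $r=\sigma_1$ to obtain $s.\sigma_1\sim_{\widehat{\mu}} t.\sigma_1$, which yields $s,\sigma\vDash\langle\obs\rangle\langle\mu\rangle A$.
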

            \begin{proof}
                Suppose $s, \sigma \vDash \langle\mu\rangle\langle \obs\rangle A$. Then $t, \sigma \vDash \langle\obs\rangle A$ for some $s \sim_{\widehat{\mu}} t$. Thus $t . \sigma_1, \sigma^1 \vDash A$. But since $s \sim_{\widehat{\mu}} t$, it follows by Lemma~\ref{concatLemma} that $s . \sigma_1 \sim_{\widehat{\mu}} t . \sigma_1$. Thus $s . \sigma_1, \sigma^1 \vDash \langle\mu\rangle A$ from which it follows that $s, \sigma \vDash \langle\obs\rangle\langle \mu\rangle A$. 
            \end{proof}

            Note that the converse of this Lemma both should and does fail. It \emph{should} fail because it might be the case that after the next observation, we will have received a message that supports $A$, even though we have not currently received a message that \emph{tells us} that after our next observation we will have received a message that supports $A$. That it \emph{does} fail can be seen by considering the point $\langle \{p\}, \{q\} . \overline{\emptyset}\rangle$. Note that $\{p\}, \{q\} . \overline{\emptyset} \vDash \langle\obs\rangle \langle\ooo\rangle q$ because $\{q\}\{p\},\overline{\emptyset}\vDash q$, and $\{q\} \{p\} \sim_{\widehat{\ooo}} \{p\} \{q\}$. So $\{p\} \{q\}, \overline{\emptyset} \vDash \langle\ooo\rangle q$ and thus $\{p\}, \{q\} . \overline{\emptyset} \vDash \langle\obs\rangle \langle\ooo\rangle q$. But on the other hand $\{p\}, \{q\} . \overline{\emptyset} \not \vDash \langle\ooo\rangle \langle\obs\rangle q$ since (due to the fact the length of the sequence $\{p\}$ is 1) only points of the form $\{p\}, \tau$ are $\ooo$-related to $\{p\}, \{q\} . \overline{\emptyset}$ and no such point supports $\langle\obs\rangle q$. 
        
        \subsubsection{Epistemic modals} 
        
            It follows from Lemma~\ref{DandPLemma} that alternative version of the semantic clause for $\Dd$ reads ``$s, \sigma \vDash \Dd A$ iff $s, \tau \vDash A$ for all $\tau \in \Sigma_S^\omega$.'' From this we see that $\Dd A$ is true of exactly those formulas that are guaranteed to be (or, if you prefer, $\Dd$etermined to be) true by the finite signal $s$ that has been transmitted so far. Dually, $\Pp A := \neg\Dd\neg A$ is true of exactly those formulas that \emph{could be} true, given what's been sent so far.

            We will provide two different sanity checks for our epistemic modals. For the first, note that what you can determine after making an additional observation includes---but is, in general, strictly more than---what you can determine you will observe. This leads us to expect the following:
        \begin{lemma}\label{obsCommutativity}
            $\Dd\langle\obs\rangle A\to \langle\obs\rangle\Dd A$ is valid for all $A$.
        \end{lemma}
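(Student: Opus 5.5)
The plan is to unfold both sides using Lemma~\ref{DandPLemma} and the explicit clause for $\langle\obs\rangle$ (the case $k=1$ of Lemma~\ref{obsklemma}), and then to choose a suitable witness trace. Fix a point $\langle s,\sigma\rangle$ and suppose $s,\sigma\vDash\Dd\langle\obs\rangle A$.

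By Lemma~\ref{DandPLemma}, the hypothesis says that $s,\tau\vDash\langle\obs\rangle A$ for every $\tau\in\Sigma^\omega$. By the semantics of $\langle\obs\rangle$, this means $s.\tau_1,\tau^1\vDash A$ for every $\tau$. The goal $s,\sigma\vDash\langle\obs\rangle\Dd A$ unfolds to $s.\sigma_1,\sigma^1\vDash\Dd A$. By Lemma~\ref{DandPLemma} again, this amounts to showing that $s.\sigma_1,\rho\vDash A$ for every $\rho\in\Sigma^\omega$.

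So let $\rho\in\Sigma^\omega$ be arbitrary. The key step is to instantiate the hypothesis at the trace $\tau:=\sigma_1.\rho$. Since $|\sigma_1|=1$, the last clause of the first lemma on sequences gives $\tau_1=(\sigma_1.\rho)_1=\sigma_1$ and $\tau^1=(\sigma_1.\rho)^1=\rho$. Hence the hypothesis yields $s.\sigma_1,\rho\vDash A$, which is what we need. Since $\rho$ was arbitrary, $s.\sigma_1,\sigma^1\vDash\Dd A$, and therefore $s,\sigma\vDash\langle\obs\rangle\Dd A$.

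There is no real obstacle in this argument. The only step that needs care is the choice of the witness $\tau=\sigma_1.\rho$: it must agree with $\sigma$ on the first observed state, so that the received prefix is $s.\sigma_1$, while having an arbitrary tail $\rho$. The argument should also make clear why the converse fails. After observing, $\Dd$ quantifies only over tails that follow the actual next state $\sigma_1$, whereas $\Dd\langle\obs\rangle$ quantifies over all possible next states.
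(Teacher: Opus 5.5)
Your proof is correct and follows the paper's argument: you unfold $\Dd$ via Lemma~\ref{DandPLemma} and $\langle\obs\rangle$ via its semantic clause, then instantiate the hypothesis at the trace $\sigma_1.\rho$, using $(\sigma_1.\rho)_1=\sigma_1$ and $(\sigma_1.\rho)^1=\rho$, to get $s.\sigma_1,\rho\vDash A$ for arbitrary $\rho$.
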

        \begin{proof}
            Suppose $s, \sigma \vDash \Dd \langle\obs\rangle A$. Then $s, \tau \vDash \langle\obs\rangle A$ for all $\tau$. So $s . \tau_1, \tau^1 \vDash A$ for all $\tau$. It follows that  $s . (\sigma_1 . \tau)_1, \sigma_1 . \tau)^1 \vDash A$ for all $\tau$. But $(\sigma_1 . \tau)_1=\sigma_1$ and $(\sigma_1 . \tau)^1 = \tau$. So $s . \sigma_1, \tau \vDash A$ for all $\tau$. Thus $s . \sigma_1, \sigma^1 \vDash \Dd A$. So $s, \sigma \vDash \langle\obs\rangle \Dd A$. 
        \end{proof}
        As mentioned, we expect that in general what we can determine after making an additional observation is \emph{strictly more} than what we can determine we will observe prior to making the observation. So we expect that the converse of Lemma~\ref{obsCommutativity} should fail. That it does can be seen by again thinking about $\langle\{p\}, \{q\} . \overline{\emptyset}\rangle$. Observe that $\{q\}, \tau = \shift(\{p\} \{q\}, \tau) \vDash q$ for all $\tau$. Thus $\{p\} \{q\}, \tau \vDash \XX q$ for all $\tau$. So $\{p\} \{q\}, \overline{\emptyset} = \{p\} . (\{q\} . \overline{\emptyset})_1, (\{q\} . \overline{\emptyset})^1 \vDash \Dd \XX q$. Thus $\{p\}, \{q\} . \overline{\emptyset} \vDash \langle\obs\rangle \Dd \XX q$---in words, if $q$ is true at the next tick, then observing the next tick will let you determine that $q$ is true at the next tick. But the reverse is of course not true---the fact that $q$ is true at the next tick doesn't mean you can determine, now, that you will observe $q$ at the next tick. And the semantics bears this out. To see this, note that $\emptyset, \overline{\emptyset} = \shift(\{p\} \emptyset, \overline{\emptyset}) \not\vDash q$. So $\{p\} \emptyset, \overline{\emptyset} = \{p\} . \overline{\emptyset}_1, \overline{\emptyset}^1 \not\vDash \XX q$. Thus $\{p\}, \overline{\emptyset} \not\vDash \langle\obs\rangle \XX q$. Thus it is not the case for all $\tau$ that $\{p\}, \tau \vDash \langle\obs\rangle \XX q$. So $\{p\}, \{q\} . \overline{\emptyset} \not\vDash \Dd \langle\obs\rangle \XX q$. 

        So epistemic and observation modalities \emph{half} commute. When it comes to mutation, things are friendlier. Mathematically, it's clear enough why this is the case: on a point $\langle s,\sigma\rangle$, mutations only interact with the $s$-coordinate and determination only interacts with the $\sigma$-coordinate. So mutation and determination modals simply don't interact which should lead us to expect they fully commute. 
        
        To see why this should \emph{intuitively} hold is a bit harder. But the difficulty of explaining this fact intuitively can be explained by the mathematical fact: since noise and determination act on different coordinates, the order in which they act is irrelevant. What records this in ordinary language is something that comes across as trite---e.g. `if, no matter how my message is mutated and no matter what the future holds, my message still says $A$, then no matter what the future holds and no matter how my message is mutated, my message still says $A$ (and vice-versa).' But triteness is what's to be expected when recording something essentially tautological, so this isn't a failure of the model or of the paraphrase; it's in fact what we should expect. At any rate, here's the lemma:
        \begin{lemma}\label{mucomm}
            For all $A$ and all mutations $\mu$, $\Dd [\mu] A \otto [\mu] \Dd A$ is valid.
        \end{lemma}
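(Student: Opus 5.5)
The plan is to unfold both sides using the coordinate-wise characterizations already proved: Corollary~\ref{easymutlem} for the mutation modal and Lemma~\ref{DandPLemma} for $\Dd$. The underlying reason is that $\llbracket\mu\rrbracket=\Lift(\widehat{\mu})=\widehat{\mu}\times\id_{\Sigma^\omega}$ (Corollary~\ref{liftingcoro}) acts only on the first coordinate. Dually, $\dd=\id_{\Sigma^*}\times\nabla_{\Sigma^\omega}$ acts only on the second. So the proof amounts to showing that two universal quantifiers over independent coordinates can be swapped.

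First I will dualize Corollary~\ref{easymutlem}, since $[\mu]:=\neg\langle\mu\rangle\neg$. This gives: $s,\sigma\vDash[\mu]A$ iff $t,\sigma\vDash A$ for all $t$ with $s\sim_{\widehat{\mu}}t$. Next I unfold the left-hand side. By Lemma~\ref{DandPLemma}, $s,\sigma\vDash\Dd[\mu]A$ iff for all $\tau\in\Sigma^\omega$, $s,\tau\vDash[\mu]A$. By the dualized corollary, this holds iff for all $\tau$ and all $t$ with $s\sim_{\widehat{\mu}}t$, we have $t,\tau\vDash A$.

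Then I unfold the right-hand side. $s,\sigma\vDash[\mu]\Dd A$ iff for all $t$ with $s\sim_{\widehat{\mu}}t$, $t,\sigma\vDash\Dd A$. By Lemma~\ref{DandPLemma} again, this holds iff for all such $t$ and all $\tau$, $t,\tau\vDash A$. Both sides now reduce to the same condition: $t,\tau\vDash A$ for every pair with $s\sim_{\widehat{\mu}}t$ and $\tau\in\Sigma^\omega$. The only reordering needed is the commutation of two universal quantifiers, so the biconditional holds at every point and is therefore valid.

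There is no real obstacle. The one point needing care is that the $\widehat{\mu}$-relatedness condition mentions only $s$ and $t$, never $\sigma$ or $\tau$. That is exactly what Corollary~\ref{liftingcoro} guarantees for arbitrary (including starred and inverted) mutations, and it is what allows the quantifiers to be swapped. Alternatively, one could observe via Lemma~\ref{distLem} that $\dd\circ\llbracket\mu\rrbracket=\widehat{\mu}\times\nabla=\llbracket\mu\rrbracket\circ\dd$. Box modalities over equal composite relations agree, so this also gives the result.
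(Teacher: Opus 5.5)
Your proof is correct and matches the paper's. Like the paper, you unfold both sides with Corollary~\ref{easymutlem} and Lemma~\ref{DandPLemma} and observe that they differ only in the order of two independent universal quantifiers, one over $t$ with $s\sim_{\widehat{\mu}}t$ and one over $\tau$. Your alternative via Lemma~\ref{distLem}, $\dd\circ\llbracket\mu\rrbracket=\widehat{\mu}\times\nabla_{\Sigma^\omega}=\llbracket\mu\rrbracket\circ\dd$, is also valid and restates the same point in terms of relations.
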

        \begin{proof}
            By Corollary~\ref{easymutlem} and Lemma~\ref{DandPLemma}, $s, \sigma \vDash \Dd [\mu] A$ iff $t, \tau \vDash A$ for all $\tau \in \Sigma^\omega$ and all $s \sim_{\widehat{\mu}} t$ which happens iff $t, \tau \vDash A$ for all $s \sim_{\widehat{\mu}} t$ and all $\tau \in \Sigma^\omega$, iff $s, \sigma \vDash [\mu] \Dd A$.
        \end{proof}

    \subsection{Whence Monitors?}

        We now turn to describing how to use $\LL_{\EDMon}$ to model the epistemic state of a monitor. There are two ways to do this, and they correspond to two different questions we might imagine a (personified) monitor asking. First question: what can I, qua monitor, conclude about the sender on the basis of what I've received? Recall that \emph{receiving} $A$ along a channel subject to mutations $\mu$ is modeled here by $\langle \mu \rangle A$. Of course, typically one receives a whole trace, not just a single formula. But we can model the former as the latter in real-world scenarios by replacing traces with their characteristic formulas. Putting the trace/formula matter aside, what we are asking when we ask this question is the following: given that $\langle \mu \rangle A$ is what we know about the state of the sent message, what is \emph{determined}? Stating it this way makes the answer clear: what's determined are those $B$ so that $\langle \mu \rangle A\to \Dd B$ is valid. Note that though this is clearly an epistemic question, it isn't really a monitoring question; monitors look for particular $A$ that will trigger an intervention. 
        
        So we move on to the second question: given we want to act when $A$ is true of the sender, what received messages should trigger our action? Again we recall that receiving $B$ along a channel subject to mutations $\mu$ is modeled by $\langle\mu\rangle B$. The monitor wants to act on all $B$ that, when received, guarantee $A$ was true of the sender. So they should act when $\langle\mu\rangle B \to A$ is valid. Let's call any such formula a $\mu$-sufficient surrogate for $A$, or (when context allows it) just a \emph{surrogate}.

        The main problem with surrogates is that there are too many of them. To see this, suppose $B$ is a surrogate for $A$ and let $C$ be any formula at all. Then $B\land C$ is (as we leave the reader to check) also a surrogate. Continuing down this road leads to the ur-problem: $\bot$ is a surrogate for absolutely every $A$.

        To start solving this problem, it helps to observe that each step down this road leaves us with a \emph{less} sensitive surrogate. Consider, for example, the very first step, where we decide to use $B\land C$ as our surrogate rather than $B$. If we need to act whenever $A$ is true and $\langle\mu\rangle B\to A$ really is valid, then anytime we receive a message that says $B$, we should act. Replacing $B$ with $B\land C$ means we'll instead act only when both $B$ and $C$ are true. In general this means we'll have decided to act \emph{less often}, which means we will (in general) more frequently fail to act in cases when we ought to. 

        So what we really want to do, qua monitor, is find the \emph{weakest} surrogate---that is, the weakest $B$ so that $\langle\mu\rangle B\to A$ is valid. But adjoint behaviors of the converse operator tell us essentially immediately what the weakest such $B$ is: if $\langle\mu\rangle B\to A$ is valid, then so is $B\to[\mu^-] A$, so $[\mu^-]A$ is implied by every surrogate for $A$. And $[\mu^-]A$ is itself a surrogate for $A$ because $\langle\mu\rangle[\mu^-]A\to A$ is also valid.\footnote{This can be seen either by thinking about the unit for the adjunction here or by straightforwardly working through the semantic clauses.} So monitoring for $[\mu^-]A$ is monitoring for the most sensitive surrogate for $A$ one can possibly monitor for, given the noise at hand. 

        The problem is that (again speaking as a monitor) this isn't a useful solution. There are two reasons for this. The first is strictly practical: existing monitors (e.g. R2U2 \cite{johannsen2023r2u2,aurandt2025towards} or RTLola \cite{baumeister2020rtlola,baumeister2024tutorial}) can't work with EDMon formulas. What they can work with are formulas in LTL-like languages. The second is more theoretical: it's not at all clear how one would monitor $[\mu^-]A$. In fact, there's a sense in which the `discovery' that $[\mu^-]A$ is the most sensitive surrogate for $A$ is less a discovery and more a formal restatement of the problem at hand---monitoring for $[\mu^-]A$ \emph{just is} checking one's message to see whether it could only have arisen, given $\mu$, from a message that verified $A$. But we already knew that that's what we wanted to do. What we don't know is what doing it entails, and $[\mu^-]A$ doesn't give us any hints.

        These problems are, it turns out, two sides of the same coin. And we can solve them (in a range of cases) simultaneously. We won't have space to provide the solution in full detail in this paper. Instead, we will simply examine one case study and leave the matter to the next paper. The point, to state it clearly, is that EDMon can play a role in providing actual, implementable solutions to monitorability problems. 

        Before getting to the case study (Theorem~\ref{FFtheorem}), we need a Lemma:
        \begin{lemma}\label{propositional_lem}
            If $A$ is propositional and $\head(s, \sigma) = \head(t, \tau)$, then $s, \sigma \vDash A$ iff $t, \tau \vDash A$. 
        \end{lemma}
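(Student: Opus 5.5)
The plan is a structural induction on the propositional formula $A$. Recall that $A$ is propositional exactly when it lies in $\LL_{\LTL}$ and contains no occurrences of $\XX$ or $\UU$. So $A$ is built from atoms $p\in S$ using only $\neg$ and $\lor$. The induction therefore has three cases: atoms, negations and disjunctions. Throughout, I fix points $\langle s,\sigma\rangle$ and $\langle t,\tau\rangle$ with $\head(s,\sigma)=\head(t,\tau)$.

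For the base case, let $A=p$. The semantic clause gives $s,\sigma\vDash p$ iff $p\in\head(s,\sigma)$. Since $\head(s,\sigma)=\head(t,\tau)$, this holds iff $p\in\head(t,\tau)$, which holds iff $t,\tau\vDash p$.

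For the negation case, let $A=\neg B$ with $B$ propositional. Then $s,\sigma\vDash\neg B$ iff $s,\sigma\not\vDash B$. By the inductive hypothesis for $B$, this holds iff $t,\tau\not\vDash B$, which holds iff $t,\tau\vDash\neg B$.

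For the disjunction case, let $A=B\lor C$ with both disjuncts propositional. Applying the inductive hypothesis to each disjunct, $s,\sigma\vDash B\lor C$ iff $s,\sigma\vDash B$ or $s,\sigma\vDash C$, iff $t,\tau\vDash B$ or $t,\tau\vDash C$, iff $t,\tau\vDash B\lor C$.

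The only point needing care is to confirm that the induction really ranges only over these three cases. Being propositional is a property of formulas in $\LL_{\LTL}$, and the $\LL_{\LTL}$ grammar contains no $\Dd$, $\langle\mu\rangle$ or $\langle o\rangle$. Excluding $\XX$ and $\UU$ removes the remaining clauses, so no other cases arise. Beyond this check there is no real obstacle.

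As an alternative, one could invoke Lemma~\ref{LTLSat} and argue in plain LTL that propositional formulas depend only on the first state of a trace, since $\head(s,\sigma)=\trace(s,\sigma)(1)$. That route would still require the same induction, so the direct induction above is the simpler choice.
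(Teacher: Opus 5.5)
Your proof is correct and matches the paper's approach: the paper's proof is just ``by a straightforward induction on $A$,'' and your three cases (atoms via the $\head$ clause, then negation and disjunction via the inductive hypothesis) spell out exactly that induction.
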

        \begin{proof}
            By a straightforward induction on $A$. 
        \end{proof}
        
        \begin{theorem}\label{FFtheorem}
            For $k\geq 1$, $l\geq 0$, $A\in\LL_{\LTL}$ and $B$ propositional, if $\langle\cor^l\rangle A \to\FF^{\geq k}B $ is EDMon-valid, then $A \to \FF^{\geq k+l}B $ is too.
        \end{theorem}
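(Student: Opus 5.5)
We prove the contrapositive: assuming some point falsifies $A \to \FF^{\geq k+l}B$, we build a point that falsifies $\langle\cor^l\rangle A \to \FF^{\geq k}B$. Throughout we use two observations. First, by Lemma~\ref{FLemma}, Lemma~\ref{shifthead} and Lemma~\ref{propositional_lem}, for propositional $B$ the truth of $\FF^{\geq n}B$ at $\langle s,\sigma\rangle$ depends only on the combined trace $\trace(s,\sigma)$. Specifically, it holds iff there are at least $n$ indices $j\geq 0$ with $\trace(s,\sigma)^j(1)=\trace(s,\sigma)(j+1)$ a state verifying $B$. Second, by Lemma~\ref{LTLSat}, $A$ also depends only on the combined trace.

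First we dispose of a degenerate case. Suppose every state verifies $B$; by Lemma~\ref{propositional_lem} that happens iff $B$ holds at every point. Then $\FF^{\geq k+l}B$ is valid by Lemma~\ref{FLemma}, and there is nothing to prove. So fix a state $z\in\Sigma$ at which $B$ fails.

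Now suppose $s,\sigma\vDash A$ and $s,\sigma\not\vDash\FF^{\geq k+l}B$. Put $\rho=\trace(s,\sigma)$. Then the set $J$ of positions $i$ with $\rho(i)$ verifying $B$ is finite, with $|J|\leq k+l-1$. Choose $N$ at least as large as every member of $J$, and consider the point $\langle t,\tau\rangle=\langle \rho_N,\rho^N\rangle$. Its trace is $\rho$, so $t,\tau\vDash A$ by Lemma~\ref{LTLSat}. Next pick $m=\min(l,|J|)$ positions in $J$. Let $u$ be the result of replacing, in $t$, each chosen position by $z$.

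Each single replacement is one $\widehat{\cor}$-step. Since $\widehat{\cor}$ contains the identity, we may pad with $l-m$ identity steps. Since $\widehat{\cor}$ is symmetric, we get $u\sim_{\widehat{\cor^l}} t$. Hence, by Corollary~\ref{easymutlem}, $u,\tau\vDash\langle\cor^l\rangle A$.

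Finally we count the positions of $u.\tau$ that verify $B$. Since $N$ exceeds every member of $J$, these are exactly the positions in $J$ that were not replaced, so there are $|J|-m$ of them. If $m=l$, this is at most $k-1$. If $m=|J|<l$, it is $0<k$, using $k\geq 1$. Either way, by the first observation, $u,\tau\not\vDash\FF^{\geq k}B$. This refutes the validity of $\langle\cor^l\rangle A\to\FF^{\geq k}B$.

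The main obstacle is the bookkeeping in the last two steps. We must check that composing $l$ copies of $\widehat{\cor}$ really allows fewer than $l$ genuine changes, which uses the identity clause built into $\widehat{\cor}$. We must also check that the count of $B$-positions transfers correctly through $\shift^j$, via Lemma~\ref{shifthead} and Lemma~\ref{propositional_lem}. Choosing $N$ beyond $J$ is what ensures every corruption lands in the finite prefix, where mutations act.
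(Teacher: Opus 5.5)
Your proof is correct and follows the paper's argument. Both argue by contraposition. Both move to a point whose finite prefix contains every $B$-position of $\trace(s,\sigma)$, overwrite at most $l$ of those positions with a $B$-falsifying state, and use Lemma~\ref{LTLSat} for $A$ and Lemma~\ref{FLemma} for the count.

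The only difference is cosmetic. You always corrupt $\min(l,|J|)$ positions, whereas the paper corrupts $|M|-(k-1)$ of them and treats separately the case where that number is nonpositive, using the original point itself as the counterexample. Your version avoids that case split.
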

        \begin{proof}
            First note that if $B$ is a propositional tautology, then $B$ is true at every state. So by Lemma~\ref{FLemma}, $\FF^{\geq k+l}B$ is valid, and thus so too is $A \to \FF^{\geq k+l}B$, verifying the theorem. So we suppose $B$ is not a propositional tautology and let $x\in\Sigma_S$ be a state that falsifies $B$. 

            Our argument is by contraposition. So, we suppose that some point falsifies $A \to \FF^{\geq k+l} B$ and construct a point that falsifies $\langle\cor^l\rangle A \to \FF^{\geq k} B$. 

            To begin let $s,\sigma \not\vDash A \to \FF^{\geq k+l} B$. Then $s, \sigma \vDash A$ and $s, \sigma \not\vDash \FF^{\geq k+l}B$. Let $M=\{m:\shift^{m-1}(s, \sigma) \vDash B\}$.\footnote{The `$-1$' might seem mysterious. It can be explained by observing that if $\shift^m(s,\sigma)\vDash B$, then the first member of $\trace(\shift^m(s, \sigma))$ is what actually verifies $B$, but the first member of $\trace(\shift^m(s, \sigma))$ is $\trace(s, \sigma)(m+1)$.} By Lemma~\ref{FLemma}, since $s, \sigma \not\vDash \FF^{\geq k+l}B$, $|M| < k+l$. Let $j=|M|-(k-1)$, and note that $j\leq l$.
            
            If $j\leq 0$, then $|M| <k$, so (again by Lemma~\ref{FLemma}), $s, \sigma \not\vDash \FF^{\geq k} B$. Since $s, \sigma \vDash A$ and $s, \sigma \sim_{\cor^l} s, \sigma$, it follows that $s, \sigma \not\vDash \langle\cor^l\rangle A \to \FF^{\geq k} B$. If $j>0$, let $m^+=\max(M)$ and $\Delta$ be a $j$-element subset of $M$. Define $\tau'$ by $\tau'(i) = \trace(s, \sigma)(i)$ for $i \not\in \Delta$ and $\tau'(i)=x$ otherwise. Let $t=\tau'_{m^+}$ and $\tau=(\tau')^{m^+}$. We claim that $t, \tau \not\vDash \langle\cor^l\rangle A \to \FF^{\geq k}B$. 

            To see this, first define $s' = \trace(s, \sigma)_{m^+}$ and $\sigma' = \trace(s, \sigma)^{m^+}$. By construction, $s'$ and $t$ differ at exactly the points in $\Delta$, which has cardinality $j \leq l$. So 
            $t, \tau \sim_{\cor^l} s', \sigma'$. Also note that $\trace(s', \sigma') = \trace(s, \sigma)_{m^+} . \trace(s, \sigma)^{m^+} = \trace(s, \sigma)$, so by Lemma~\ref{LTLSat}, since $A\in\LL_{\LTL}$ and $s, \sigma \vDash A$ we also get that $s', \sigma' \vDash A$. So $t, \tau \vDash \langle \cor^l\rangle A$. Now observe that by construction, for $m \not\in \Delta$, $\head(\shift^m(t, \tau)) = \head(\shift^m(s, \sigma))$.
            So since $B$ is propositional, for $m \not\in \Delta$ $\shift^m(t, \tau) \vDash B$
            iff $\shift^m(s, \sigma) \vDash B$. Again by construction $\shift^m(t, \tau) \not\vDash B$ for $m \in \Delta$. So there are at most $|M|-j\leq k-1$ numbers $n$ so that $\shift^n(t, \tau) \vDash B$. So $t, \tau \not\vDash \FF^{\geq k} B$ and thus $t, \tau \not\vDash \langle\cor^l\rangle A \to \FF^{\geq k}B$. 
        \end{proof}

        \begin{theorem}
             For $k\geq 1$, $l\geq 0$, $A\in\LL_{\LTL}$ and $B$ propositional, if $\langle\cor^l\rangle A \to\neg\FF^{\geq k+l}B $ is valid then $A \to\neg\FF^{\geq k}B $ is too. 
        \end{theorem}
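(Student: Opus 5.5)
We argue by contraposition, dualizing the construction in the proof of Theorem~\ref{FFtheorem}. There we \emph{removed} occurrences of $B$ by corrupting states to a $B$-falsifying state. Here we will \emph{add} occurrences of $B$ by corrupting states to a $B$-verifying state.

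First we dispose of the degenerate cases. If $B$ is propositionally unsatisfiable, then no point verifies $B$, so by Lemma~\ref{FLemma} $\neg\FF^{\geq k}B$ is valid, and so is $A\to\neg\FF^{\geq k}B$. If $l=0$, then $\langle\cor^0\rangle A$ is just $A$ and there is nothing to prove. So assume $l\geq 1$ and fix a state $y\in\Sigma_S$ with $y$ verifying $B$. Suppose $s,\sigma\not\vDash A\to\neg\FF^{\geq k}B$, i.e.\ $s,\sigma\vDash A$ and $s,\sigma\vDash\FF^{\geq k}B$. We must produce a point falsifying $\langle\cor^l\rangle A\to\neg\FF^{\geq k+l}B$. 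As before, let $M=\{m:\shift^{m-1}(s,\sigma)\vDash B\}$; by Lemma~\ref{FLemma}, $|M|\geq k$.

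Next we split on the size of the complement of $M$.
\begin{itemize}
\item If $\mathbb{N}^{+}\setminus M$ has fewer than $l$ elements, then $M$ is cofinite, hence infinite. By Lemma~\ref{FLemma}, $s,\sigma\vDash\FF^{\geq k+l}B$. Since $s,\sigma\sim_{\cor^l}s,\sigma$ (each atomic $\widehat{\cor}$ contains the identity), $s,\sigma$ itself is the required countermodel.
\item Otherwise, choose an $l$-element set $\Delta\subseteq\mathbb{N}^{+}\setminus M$ and let $m^+=\max(\Delta)$. Define $\tau'(i)=y$ for $i\in\Delta$ and $\tau'(i)=\trace(s,\sigma)(i)$ otherwise. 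Set $t=\tau'_{m^+}$ and $\tau=(\tau')^{m^+}$, and as before $s'=\trace(s,\sigma)_{m^+}$ and $\sigma'=\trace(s,\sigma)^{m^+}$.
\end{itemize}

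In the second case we verify the two required facts.
\begin{itemize}
\item \emph{$t,\tau\vDash\langle\cor^l\rangle A$.} First, $\tau=\sigma'$. Second, $t$ and $s'$ are finite sequences of the same length that differ exactly at the $l$ positions in $\Delta$, all of which are $\leq m^+$. Performing one single-state corruption per position shows $t\sim_{\widehat{\cor^l}}s'$, so $t,\tau\sim_{\cor^l}s',\sigma'$ by Corollary~\ref{liftingcoro}. Since $\trace(s',\sigma')=\trace(s,\sigma)$, Lemma~\ref{LTLSat} gives $s',\sigma'\vDash A$.
\item \emph{$t,\tau\vDash\FF^{\geq k+l}B$.} By Lemma~\ref{shifthead}, the head of $\shift^{m-1}(t,\tau)$ is $\tau'(m)$. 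For $m\notin\Delta$ this equals the head of $\shift^{m-1}(s,\sigma)$, so by Lemma~\ref{propositional_lem} every $m\in M$ still verifies $B$. For $m\in\Delta$ the head is $y$, so $B$ holds there as well. Since $M\cap\Delta=\emptyset$, we obtain at least $|M|+l\geq k+l$ distinct witnesses, and Lemma~\ref{FLemma} finishes.
\end{itemize}
Hence $t,\tau\not\vDash\langle\cor^l\rangle A\to\neg\FF^{\geq k+l}B$.

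The main obstacle is the bookkeeping around $\widehat{\cor^l}$. We need that $l$ disjoint single-position substitutions compose to a $\cor^l$-step, in the right direction from $t$ to $s'$, which is fine since $\widehat{\cor}$ is symmetric. We also need the off-by-one indexing between $M$, $\shift^{m-1}$ and positions in $\tau'$ to be handled exactly as in Theorem~\ref{FFtheorem}. The case where too few $B$-falsifying positions exist is the only genuinely new wrinkle relative to that proof.
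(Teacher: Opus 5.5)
Your proof is correct and follows essentially the paper's route: argue by contraposition, corrupt $l$ prefix positions to a $B$-verifying state, and split the point at the last corrupted position so that the suffix is unchanged. The only difference is which positions get corrupted. The paper takes a $k$-element subset $M_k\subseteq M$ and corrupts positions $m^+ +1,\dots,m^+ +l$ with $m^+=\max(M_k)$, so these are disjoint from the counted $B$-positions automatically, whether or not they already verify $B$; it therefore needs neither your cofinite-$M$ case nor a separate treatment of $l=0$.
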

        \begin{proof}
            First note that if $B$ is a propositional contradiction, then $B$ is false at every state. So by Lemma~\ref{FLemma}, $\neg\FF^{\geq k}B$ is valid, and thus so too is $A \to \neg\FF^{\geq k+l}B$, verifying the theorem. So we suppose $B$ is not a propositional contradiction and let $x\in\Sigma_S$ be a state that verifies $B$. 

            As before, we argue by contraposition. Let $s, \sigma \not\vDash A \to \neg \FF^{\geq k} B$. Then $s, \sigma \vDash A$ and $s, \sigma \vDash \FF^{\geq k} B$. Let $M=\{m : \shift^{m-1}(s, \sigma) \vDash B\}$. Then by Lemma~\ref{FLemma}, $|M| \geq k$. Choose a $k$-element subset, $M_k$, of $M$ and let $m^+=\max(M_k)$. We define $\tau'$ by $\tau'(i)=x$ if $m^+ < i \leq m^+ + l$, and $\tau'(i) = \trace(s, \sigma)(i)$ otherwise. Let $t = \tau'_{m^+ + l}$ and $\tau = (\tau')^{m^+ + l}$, also let $s' = \trace(s, \sigma)_{m^+ +l}$ and $\sigma' = \trace(s, \sigma)^{m^+ + l}$. 

            Since by construction $\trace(s', \sigma') = \trace(s, \sigma)$ and $A\in\LL_{\LTL}$, $s', \sigma' \vDash A$. Also note that by construction, $t,\tau\sim_{\cor^l} s',\sigma'$. So $t,\tau \vDash \langle\cor^l\rangle A$. But also note that there are at least $k+l$ numbers $n$ so that $\shift^n(t, \tau) \vDash B$: the $k$-members of $\{m-1 : m\in M_k\}$ and the $l$ numbers $m^+, \dots, m^+ + l -1$, none of which occur in the previous set since $m^+=\max(M_k)$. So by Lemma~\ref{FLemma} again, $t, \tau \vDash \FF^{\geq k+l} B$ as well, so $t, \tau \not\vDash \langle\cor^l\rangle A \to \neg \FF^{\geq k+l} B$. 
        \end{proof}

        \begin{theorem}
            For $k\geq 1$ and $l\geq 0$, $\langle\cor^l\rangle \FF^{\geq k+l} B \to \FF^{\geq k}B$ is EDMon-valid for all propositional $B$.
        \end{theorem}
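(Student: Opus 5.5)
Unpacking the semantics directly, we assume $s,\sigma\vDash\langle\cor^l\rangle\FF^{\geq k+l}B$ and aim for $s,\sigma\vDash\FF^{\geq k}B$. By Corollary~\ref{easymutlem} there is $t$ with $s\sim_{\widehat{\cor^l}} t$ and $t,\sigma\vDash\FF^{\geq k+l}B$. Since $\widehat{\cor^l}$ is the $l$-fold composite of $\widehat{\cor}$, $t$ arises from $s$ by at most $l$ single-position replacements. Because $\widehat{\cor}$ contains the identity and preserves length, $|t|=|s|$, and the set $\Delta$ of indices $i$ with $s(i)\neq t(i)$ has $|\Delta|\leq l$. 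This is checked by a short induction on $l$: each $\widehat{\cor}$ step changes at most one position and never changes length.

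Next we compare the two combined traces $\trace(t,\sigma)=t.\sigma$ and $\trace(s,\sigma)=s.\sigma$. They agree at every position $i\notin\Delta$, since the suffix $\sigma$ is shared and the lengths match. By Lemma~\ref{shifthead}, $\head(\shift^m(s,\sigma))=\trace(s,\sigma)(m+1)$, and likewise for $t$. So whenever $m+1\notin\Delta$ we have $\head(\shift^m(s,\sigma))=\head(\shift^m(t,\sigma))$. Since $B$ is propositional, Lemma~\ref{propositional_lem} then gives $\shift^m(s,\sigma)\vDash B$ iff $\shift^m(t,\sigma)\vDash B$.

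Now apply Lemma~\ref{FLemma} to $t,\sigma\vDash\FF^{\geq k+l}B$. This yields $k+l$ distinct numbers $n_1,\dots,n_{k+l}$ with $\shift^{n_i}(t,\sigma)\vDash B$. At most $|\Delta|\leq l$ of these satisfy $n_i+1\in\Delta$, so at least $k$ of them have $n_i+1\notin\Delta$. For each such $n_i$ the previous step gives $\shift^{n_i}(s,\sigma)\vDash B$. One more application of Lemma~\ref{FLemma} yields $s,\sigma\vDash\FF^{\geq k}B$.

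The only step needing any care is the first: extracting from $s\sim_{\widehat{\cor^l}}t$ that the lengths are equal and that the two sequences differ in at most $l$ positions. This is a routine induction on $l$ using the definition of $\widehat{\cor}$ and $\widehat{\mu^{k+1}}=\widehat{\mu}\circ\widehat{\mu^k}$. Keeping the index offset between $\shift^m$ and trace position $m+1$ straight is the other place an error could creep in.
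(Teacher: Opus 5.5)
Your proof is correct and follows the paper's argument. The paper runs it by contraposition: if $s,\sigma$ has at most $k-1$ positions satisfying $B$, then any $\cor^l$-variant has at most $k-1+l$. You run the same counting directly, by pigeonhole on the $k+l$ witnesses for $t$, and you also make explicit two details the paper leaves implicit: that $\widehat{\cor^l}$ preserves length, and the offset between $\shift^m$ and trace position $m+1$.
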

        \begin{proof}
            Again by contraposition. Let $s, \sigma \not\vDash \FF^{\geq k}B$. Then by Lemma~\ref{FLemma}, there are at most $k-1$ numbers $i$ so that $\shift^i(s, \sigma)\vDash B$. Choose $t,\tau$ so that $s,\sigma\sim_{\cor^l} t,\tau$. Then there are at most $l$ numbers $i$ so that $\trace(s, \sigma)(i) \neq \trace(t, \tau)(i)$. It follows that there are at most $k-1+l < k+l$ numbers $i$ so that $\shift^i(t, \tau) \vDash B$. So $t, \tau \not\vDash \FF^{\geq k+l} B$. So since $t, \tau$ was an arbitrary point $\cor^l$-related to $s, \sigma$ we have that $s, \sigma \not\vDash \langle\cor^l\rangle \FF^{\geq k+l} B$. 
        \end{proof}

        \begin{theorem}
            For $k\geq 1$ and $l\geq 0$, $\langle\cor^l\rangle \neg \FF^{\geq k} B \to \neg \FF^{\geq k+l} B$ is EDMon-valid for all propositional $B$.
        \end{theorem}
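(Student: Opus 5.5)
Our plan is to argue by contraposition, in the same way as the preceding theorem. Suppose $s,\sigma \vDash \FF^{\geq k+l}B$; we aim to show $s,\sigma \not\vDash \langle\cor^l\rangle\neg\FF^{\geq k}B$. Equivalently, we show $s,\sigma\vDash[\cor^l]\FF^{\geq k}B$: every point $\cor^l$-related to $s,\sigma$ satisfies $\FF^{\geq k}B$.

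First, we unpack the relation. By Corollary~\ref{easymutlem}, any $t,\tau$ with $s,\sigma\sim_{\cor^l}t,\tau$ has $\tau=\sigma$ and $s\sim_{\widehat{\cor}^l}t$. An easy induction on $l$ from the definition of $\widehat{\cor}$ shows that $s\sim_{\widehat{\cor}^l}t$ implies $|t|=|s|$ and that $s$ and $t$ differ in at most $l$ positions. Each $\widehat{\cor}$ step either is the identity or changes exactly one letter while preserving length, so after $l$ steps at most $l$ positions can differ. Hence $\trace(s,\sigma)$ and $\trace(t,\tau)$ agree at all but at most $l$ indices $i$.

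Second, we combine this with Lemma~\ref{shifthead}, which gives $\head(\shift^i(s,\sigma))=\trace(s,\sigma)(i+1)$. If $\trace(s,\sigma)(i+1)=\trace(t,\tau)(i+1)$, then Lemma~\ref{propositional_lem} yields $\shift^i(s,\sigma)\vDash B$ iff $\shift^i(t,\tau)\vDash B$. By Lemma~\ref{FLemma}, $s,\sigma\vDash\FF^{\geq k+l}B$ supplies $k+l$ distinct numbers $i$ with $\shift^i(s,\sigma)\vDash B$. At most $l$ of these can be indices where the traces differ, so at least $k$ of them satisfy $\shift^i(t,\tau)\vDash B$. Applying Lemma~\ref{FLemma} again gives $t,\tau\vDash\FF^{\geq k}B$, so $t,\tau\not\vDash\neg\FF^{\geq k}B$. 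Since $t,\tau$ was arbitrary, $s,\sigma\not\vDash\langle\cor^l\rangle\neg\FF^{\geq k}B$, and the implication holds at every point.

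The only step needing real care is the first one, the claim that $\widehat{\cor}^l$ alters at most $l$ positions and preserves length. It is routine, but it is what makes the counting work. The previous theorem used this fact implicitly, so we may cite it in the same spirit, giving the one-line induction for completeness. Everything else is bookkeeping about the off-by-one between shift indices and trace positions, which Lemma~\ref{shifthead} handles.
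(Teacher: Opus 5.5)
Your proof is correct and matches the paper's argument: contraposition from $s,\sigma\vDash\FF^{\geq k+l}B$, the observation that any $\cor^l$-related point has a trace differing from $\trace(s,\sigma)$ in at most $l$ positions, and a count via Lemma~\ref{FLemma}. You make explicit what the paper leaves tacit, namely that $\tau=\sigma$, that $\widehat{\cor}$ preserves length, and the index shift handled by Lemma~\ref{shifthead} together with Lemma~\ref{propositional_lem}.
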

        \begin{proof}
            Essentially the same as in the previous theorem: let $s, \sigma \vDash \FF^{\geq k+l} B$ and $s,\sigma\sim_{\cor^l} t,\tau$. Then $\trace(s,\sigma)(i)$ and $\trace(t,\tau)(i)$ differ at most $l$ times. So $t,\tau\vDash\FF^{\geq k}B$. Thus $s, \sigma \not\vDash \langle\cor^l\rangle \neg \FF^{\geq k} B$. 
        \end{proof}

        It follows that for propositional $B$, $\FF^{\geq k+l}B$ is the weakest $\cor^l$-sufficient surrogate for $\FF^{\geq k}B$ and $\neg\FF^{\geq k}B$ is the weakest $\cor^l$-sufficient surrogate for $\neg\FF^{\geq k+l}B$. Since we also know that $[(\cor^l)^-] \FF^{\geq k} B$ is the weakest $\cor^l$-sufficient surrogate for $\FF^{\geq k}B$ and $[(\cor^l)^-] \neg\FF^{\geq k+l}B$ is the weakest $\cor^l$-sufficient surrogate for $\neg\FF^{\geq k+l}B$, this also demonstrates that $[(\cor^l)^-] \FF^{\geq k} B$ is equivalent to $\FF^{\geq k+l}B$ and $[(\cor^l)^-] \neg \FF^{\geq k+l} B$ is equivalent to $\neg\FF^{\geq k}B$. So, at least some of the time, $[\mu^-]A$ can be shown to be equivalent to an LTL-formula. In such cases, we can monitor for it in existing systems in a well-defined way, and the EDMon framework lets us certify that doing so is both sound and optimal.

        \subsection{Monitoring on Multiple Channels}

        It's not uncommon (see citations) to have multiple channels monitoring the same instrument. In such cases, we could have different types of noise arise on the different monitoring channels. Traditional approaches to noisy monitoring \emph{can} make use of this. But in the present framework, we can make much better use of it.

        To say more, we introduce two abbreviations:
            \begin{itemize}
                \item $\triangleright_{i=0}^{n}\phi_i$ will abbreviate $\displaystyle\bigwedge_{i=0}^n\XX^i\phi_i$.
                \item $\widehat{\phi}^n_i$ will abbreviate $\displaystyle\bigwedge_{\substack{0\leq j\leq n\\ j\neq i}}\neg\phi_j$.
            \end{itemize}
        So, loosely, $\XX^n\phi$ means `$n$ steps from now $\phi$ will be true'; `$\triangleright_{i=0}^{n}\phi_i$' means `$\phi_0$ and then $\phi_1$ and then\dots and then $\phi_n$'; and `$\widehat{\phi}^n_i$' means `maybe $\phi_i$, but none of the other $\phi$'s'. 
        \begin{theorem}\label{multichan}
        For all $n\geq 1$, all \emph{propositional} instances of the following are EDMon-valid:
            \begin{displaymath}
                (\langle\ooo\rangle \triangleright_{i=0}^n \phi_i 
                \land 
                \langle\cor\rangle \triangleright_{i=0}^n \widehat{\phi}^n_i)
                \to \triangleright_{i=0}^{n-1}\phi_i
            \end{displaymath}
        \end{theorem}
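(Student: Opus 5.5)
Throughout I write $w=\trace(s,\sigma)$. The plan is to argue semantically, turning every formula in the statement into a condition on single positions of a trace. Since the $\phi_i$ are propositional, Lemma~\ref{LTLSat} and Lemma~\ref{shifthead} show that $\XX^i\phi$ holds at a point exactly when $\phi$ holds at position $i+1$ of its combined trace. Lemma~\ref{propositional_lem} shows that the latter depends only on that single state. So for any point $\langle r,\sigma\rangle$, we have $r,\sigma\vDash\triangleright_{i=0}^n\psi_i$ iff, for every $0\le i\le n$, the state $\trace(r,\sigma)(i+1)$ verifies $\psi_i$ (read as a propositional formula at a state).

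Now assume $s,\sigma$ verifies the antecedent. By Corollary~\ref{easymutlem} there are $t$ with $s\sim_{\widehat{\ooo}}t$ and $u$ with $s\sim_{\widehat{\cor}}u$. Put $a=\trace(t,\sigma)$ and $b=\trace(u,\sigma)$. Then:
\begin{itemize}
\item $a(i+1)\vDash\phi_i$ for $0\le i\le n$;
\item $b(p)\vDash\neg\phi_j$ whenever $1\le p\le n+1$ and $j\neq p-1$.
\end{itemize}
By the definitions of $\widehat{\ooo}$ and $\widehat{\cor}$, which act only on the finite prefix while $\sigma$ is fixed:
\begin{itemize}
\item $a$ is either $w$ or $w$ with one pair of adjacent positions swapped;
\item $b$ agrees with $w$ at every position except possibly one.
\end{itemize}
The key consequence is this. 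If two distinct positions $p,q\le n+1$ of $w$ each verify some ``wrong'' $\phi_j$, meaning $j\neq p-1$ (respectively $j\neq q-1$), then $b$ must differ from $w$ at both $p$ and $q$. That contradicts the fact that $b$ differs from $w$ in at most one position.

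Next, suppose toward contradiction that $w(i+1)\not\vDash\phi_i$ for some $0\le i\le n-1$. Since $a(i+1)\vDash\phi_i$, the swap producing $a$ must move position $i+1$, so there are two cases.
\begin{itemize}
\item The swap is of positions $i+1,i+2$. Then $w(i+1)=a(i+2)\vDash\phi_{i+1}$ and $w(i+2)=a(i+1)\vDash\phi_i$. Both indices are wrong for their positions, and both positions are $\le n+1$ because $i\le n-1$.
\item The swap is of positions $i,i+1$, which requires $i\ge1$. Then $w(i)=a(i+1)\vDash\phi_i$ and $w(i+1)=a(i)\vDash\phi_{i-1}$. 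Again both are wrong, at positions $\le n$.
\end{itemize}
In either case the key consequence yields the contradiction. Hence $w(i+1)\vDash\phi_i$ for all $i<n$, which is $s,\sigma\vDash\triangleright_{i=0}^{n-1}\phi_i$.

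The main point needing care is the bookkeeping in the first step. I must make sure the position indexing (the shift by one between $\XX^i$ and trace position $i+1$) is consistent with the swap and corruption positions in $s$. Positions beyond $|s|$ are never mutated, so there $a$ and $b$ agree with $w$ and the cases above cover everything. Degenerate mutations are harmless: the identity pairs and a corruption with $x=y$ only make $a$ or $b$ closer to $w$. The proof also explains why the conclusion stops at $n-1$. Position $n+1$ is needed as the partner of position $n$ in the first case, which is where the conjunct $\XX^n\phi_n$ is used.
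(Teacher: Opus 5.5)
Your proof is correct. It rests on the same idea as the paper's: a swap that breaks the conclusion puts a misplaced $\phi_j$ at two positions among the first $n+1$, while the single $\cor$-change can account for at most one of them.

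The difference is in how the argument is organized.
\begin{itemize}
\item The paper proves only the case $n=2$. It enumerates the three possible $\widehat{\ooo}$-situations (a)--(c) and the three possible pairs of uncorrupted positions (d)--(f), then checks each combination. It asserts that the general case follows ``similar principles.''
\item You isolate the general fact that two positions verifying a wrong $\phi_j$ would each force a difference between $\trace(u,\sigma)$ and $\trace(s,\sigma)$. You then argue from a hypothetical failing index $i\le n-1$, splitting only on whether the swap is at $(i+1,i+2)$ or at $(i,i+1)$.
\end{itemize}

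What your route buys is a complete proof for every $n\geq 1$ rather than a worked instance. It also shows exactly why the conclusion stops at $n-1$: the first case needs $a(i+2)\vDash\phi_{i+1}$, which requires $i+1\le n$. The paper's enumeration is more concrete, but it grows with $n$, and for general $n$ the paper leaves it as a sketch.
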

        \begin{proof}
            We provide the proof for the $n=2$ case. The general case can be proved following similar principles. So, supposing that $n=2$, let $s,\sigma\vDash\langle\ooo\rangle\triangleright_{i=0}^2\phi_i$ and $s,\sigma\vDash\langle\cor\rangle\triangleright_{i=0}^2\widehat{\phi}^2_i$. Then there are $t$ and $r$ with $s,\sigma\sim_{\ooo}t,\sigma$ and $s,\sigma\sim_{\cor}r,\sigma$ such that $\trace(t,\sigma)(1)\vDash\phi_0$ and $\trace(t,\sigma)(2)\vDash\phi_1$ and $\trace(t,\sigma)(3)\vDash\phi_2$ and such that $\trace(r,\sigma)(1)\vDash\widehat{\phi}^2_0$ and $\trace(r,\sigma)(2)\vDash\widehat{\phi}^2_1$ and
            $\trace(r,\sigma)(3)\vDash\widehat{\phi}^2_2$.
            
            Since $s,\sigma\sim_{\ooo}t,\sigma$, at least one of the following is true:
            \begin{enumerate}[(a)]
                \item $\trace(s,\sigma)(1)\vDash\phi_0$ and $\trace(s,\sigma)(2)\vDash\phi_1$, or
                \item $\trace(s,\sigma)(1)\vDash\phi_1$ and $\trace(s,\sigma)(2)\vDash\phi_0$ and $\trace(s,\sigma)(3)\vDash\phi_2$, or
                \item $\trace(s,\sigma)(1)\vDash\phi_0$ and $\trace(s,\sigma)(2)\vDash\phi_2$ and $\trace(s,\sigma)(3)\vDash\phi_1$.
            \end{enumerate}
            And since $s,\sigma\sim_{\cor}r,\sigma$, at least one of the following is true:
            \begin{enumerate}[(a)]\setcounter{enumi}{3}
                \item $\trace(s,\sigma)(1)\vDash\widehat{\phi}^2_0$ and $\trace(s,\sigma)(2)\vDash\widehat{\phi}^2_1$, or
                \item $\trace(s,\sigma)(1)\vDash\widehat{\phi}^2_0$ and $\trace(s,\sigma)(3)\vDash\widehat{\phi}^2_2$, or
                \item $\trace(s,\sigma)(2)\vDash\widehat{\phi}^2_1$ and $\trace(s,\sigma)(3)\vDash\widehat{\phi}^2_2$.
            \end{enumerate}
            Now observe the following:
            \begin{itemize}
                \item If (d) is true then (b) is not (because (d) tells us $\trace(s,\sigma)(1)\vDash\neg\phi_1$ but (b) tells us $\trace(s,\sigma)(1)\vDash\phi_1$) and (c) is not (because (d) tells us $\trace(s,\sigma)(2)\vDash\neg\phi_2$ but (c) tells us $\trace(s,\sigma)(2)\vDash\phi_2$).
                \item If (e) is true then (b) is not (because (e) tells us $\trace(s,\sigma)(1)\vDash\neg\phi_1$ but (b) tells us $\trace(s,\sigma)(1)\vDash\phi_1$) and (c) is not (because (e) tells us $\trace(s,\sigma)(3)\vDash\neg\phi_1$ but (c) tells us $\trace(s,\sigma)(3)\vDash\phi_1$).
                \item If (f) is true then (b) is not (because (f) tells us $\trace(s,\sigma)(2)\vDash\neg\phi_0$ but (b) tells us $\trace(s,\sigma)(2)\vDash\phi_0$) and (c) is not (because (f) tells us $\trace(s,\sigma)(3)\vDash\neg\phi_1$ but (c) tells us $\trace(s,\sigma)(3)\vDash\phi_1$).
            \end{itemize}
            So (a) is the only possible option, and on (a), $s,\sigma\vDash\phi_0\land\XX\phi_1$, as required.
        \end{proof}

        Note that the antecedent of the theorem is not as unusual as it looks. As an example, a reading that fixes a particular value for an instrument will, by virtue of so fixing it, also guarantee that it doesn't take on any \emph{other} value---which is exactly what the $\langle\cor\rangle$-clause of the antecedent says. So a partial and not-quite-but-nearly correct rephrasing of the theorem would have it say that whenever you receive the same prediction for the next $n$ ticks from both a reordering channel and a corrupting channel, you can trust the first $n-1$ parts of the joint prediction. 

        There are a range of similar results that are valid in EDMon, each specifying a way of extracting more information by combining information from a dually monitored stream than can be captured from the two individual streams themselves. We will give a few instances of these to give a sense for the variety of validated inferences made possible by the move to EDMon, but will leave producing more such to the reader.

    \section{EDMon's NRV-adequacy}

        To demonstrate EDMon's usefulness as a formal framework for monitorability, we now turn to showing how to capture some of the central concepts in the NRV literature in EDMon. We begin with the monitorability itself in its various forms. We then turn to describing liveness classes, which are important to much of the classificatory work in the area. Finally, we turn to immunity which, as mentioned above, is a central aim of much work in the area. Each time, we show that many of the important concepts and many of the main results can be captured. 

    \subsection{Monitorability}

        A property is monitorable when a finite trace suffices to determine whether it holds. This loose intuition has been cashed out in a number of different ways. The discussion in \cite{kauffman2021can} identifies four varieties of monitorability: classical monitorability, $s$-monitorability, weak monitorability, and four-valued monitorability.\footnote{What we call $s$-monitorability is elsewhere called $\sigma$-monitorability. But this clashes with the metavariable conventions in this paper, so we've changed the name slightly.} We will restrict our attention to the first three of these since (a) this will take up enough space already and (b) it seems that EDMon is not the right framework for discussing four-valued monitorability and (c) four-valued monitorability is currently a minority position anyways.
    
        For our purposes, it suffices to think of each of classical monitorability, $s$-monitorability, and weak monitorability as describing a class of functions mapping pairs $\langle P,s\rangle$ where $P\subseteq\Sigma_S^\omega$ and $s\in\Sigma_S^*$ to a verdict in $\{\top, \bot, ?\}$. Intuitively, both the verdict $\top$ and the verdict $\bot$ mean that the finite sequence $s$ is sufficient to determine whether property $P$ holds, with the `$\top$' verdict meaning that $P$ determinately holds and the `$\bot$' verdict meaning that $P$ determinately doesn't hold. The meaning of the `$?$' verdict is (as we'll see) one of the areas of disagreement. 

    \subsubsection{$s$-monitorability}

        This form of monitorability was introduced in \cite{pnueli2006psl}. Philosophically, it differs from the other monitorability concepts we will discuss primarily in that it takes seriously the question of whether to continue monitoring. Intuitively, then, we might understand it reading the three verdicts as follows: `$\top$' is interpreted as `it is worth continuing to monitor because it's possible that $P$ will be positively determined by some further observation'; `$\bot$' is interpreted as `it is worth continuing to monitor because it's possible that $P$ will be negatively determined by some further observation'; and `$?$' is interpreted as `it is not worth continuing to monitor because no further observation will make either positive or a negative determination possible'. One thing this should make clear is that $s$-monitorability is done relative to a received string---that is, we make the judgment about whether continuing to monitor is worthwhile relative to what we've seen at some point. 

        Concretely, $s$-monitorability judgments are made via the following function:
        \begin{displaymath}
            \ev^{\smon}(P, s) = 
                \left\{
                    \begin{array}{ll}
                        \top & \text{if for some }t\in\Sigma_S^*, s . t . \sigma \in P\text{ for all }\sigma \in \Sigma_S^\omega \\
                        \bot & \text{if for some }t\in\Sigma_S^*, s . t . \sigma \not\in P\text{ for all }\sigma \in \Sigma_S^\omega \\
                        ? & \text{otherwise}
                    \end{array}
                \right.
        \end{displaymath}

        Note how this accomplishes the intended goal we described above: $\ev^{\smon}(P,s)=\top$ when there is something, $t$, that if we were to observe it, would put us in a position to say that, no matter how things turn out from there, $P$ will hold of the result. $\ev^{\smon}(P,s)=\bot$, similarly, holds if there is something that would let us say that no matter how things turn out, $P$ will fail to hold of the result. Both of these give us reason to continue monitoring. When they fail, we don't know whether $P$ will hold of the result---thus why the verdict is named `$?$'---but we do know it's not worth watching to find out because no finite amount of additional information will settle the fact. 
        
        One further note before moving on: we've written $s$-monitorability as a two-place function to facilitate comparison with the other types of monitorability we will introduce below. But there's a sense in which (a) $s$-monitorability is not a single monitorability concept but a family of monitorability concepts, one for each $s$ and (b) that, thought of in that way, each of them is in fact better understood as a \emph{unary} function $\ev^s(P)$, with $s$ as a parameter. Exactly one interesting thing hinges on this distinction (more on this below) but even if nothing did, it would nonetheless stand out as a difference worth marking.
            
        EDMon is equipped to model $s$-monitorability. To begin, we restrict our attention (as is usual) to the case where $P=\{\sigma\in\Sigma_S^\omega:\sigma\vDash_{\LTL} A\}$ for some $A\in\LL_{\LTL}$. Standardly, this set is called the language defined by $A$, and represented by $L\llbracket A\rrbracket$. To save on notation, we will usually write $\ev^{\smon}(A, s)$ rather than $\ev^{\smon}(L\llbracket A\rrbracket, s)$.
        
        \begin{theorem}
            $\ev^{\smon}(A, s) = \top$ iff $s, \overline{\emptyset} \vDash \Pp \langle\obs^*\rangle \Dd A$.
        \end{theorem}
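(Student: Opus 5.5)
The plan is to unfold the right-hand side through the semantic clauses, using the lemmas already proved, until it reads exactly as the $\top$-clause of $\ev^{\smon}$. The key point is that the $\Pp$ lets us pick any future $\sigma$ to follow $s$, and that $\overline{\emptyset}$ plays no role.

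First, by Lemma~\ref{DandPLemma}, $s,\overline{\emptyset}\vDash\Pp\langle\obs^*\rangle\Dd A$ holds iff there is some $\rho\in\Sigma_S^\omega$ with $s,\rho\vDash\langle\obs^*\rangle\Dd A$. Next, since $\llbracket\obs^*\rrbracket=\llbracket\obs\rrbracket^*$, Lemma~\ref{obsklemma} turns this into: for some $k\geq 0$, $s.\rho_k,\rho^k\vDash\Dd A$. Applying Lemma~\ref{DandPLemma} again, this says that $s.\rho_k,\tau\vDash A$ for all $\tau$. Finally, since $A\in\LL_{\LTL}$, Lemma~\ref{LTLSat} turns each such clause into $s.\rho_k.\tau\vDash_{\LTL}A$, that is, $s.\rho_k.\tau\in L\llbracket A\rrbracket$.

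For the left-to-right direction, I set $t=\rho_k$. This gives $s.t.\tau\in L\llbracket A\rrbracket$ for all $\tau$, which is the $\top$-clause.

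For the converse, suppose there is a $t\in\Sigma_S^*$ with $s.t.\sigma\in L\llbracket A\rrbracket$ for all $\sigma$. I choose $\rho=t.\overline{\emptyset}$ and $k=|t|$. Then $\rho_k=t$ by the preliminary sequence lemma ($(u_k.v)_k=u_k$ when $|u|\geq k$). Reading the chain above backwards gives $s,\rho\vDash\langle\obs^*\rangle\Dd A$, and hence $s,\overline{\emptyset}\vDash\Pp\langle\obs^*\rangle\Dd A$.

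The main obstacle is bookkeeping rather than anything deep. I need to check that the verdict function's cases are well defined, so that ``$=\top$'' means exactly the first clause holds; presumably the $\top$-case takes precedence over the $\bot$-case if both hold. I also need the convention $\rho_0=\varepsilon$ to cover the case $t=\varepsilon$. If the cases in $\ev^{\smon}$ are meant to be read in order, the equivalence goes through directly as above.
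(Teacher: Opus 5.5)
Your proof is correct and follows the paper's route: unfold $\Pp$, $\langle\obs^*\rangle$ and $\Dd$ via Lemmas~\ref{DandPLemma} and~\ref{obsklemma}, then apply Lemma~\ref{LTLSat}. Your witness $\rho=t.\overline{\emptyset}$, $k=|t|$ is exactly the paper's point $s, t.\overline{\emptyset}$, and your choice $t=\rho_k$ supplies the direction the paper leaves to the reader. Your caveat about $\ev^{\smon}$ is apt, since the $\top$- and $\bot$-clauses can hold at once (e.g.\ $A=p$, $s=\varepsilon$), and the paper, like you, implicitly reads $\ev^{\smon}(A,s)=\top$ as ``the $\top$-clause holds.''
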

        \begin{proof}
            Suppose $\ev^{\smon}(A, s)=\top$. Then for some $t\in\Sigma_S^*$, $s . t . \sigma \vDash_{\LTL} A$. So by Lemma~\ref{LTLSat} $s. t, \sigma \vDash A$ for all $\sigma \in \Sigma_S^\omega$. So $s . t, \overline{\emptyset} \vDash \Dd A$. But then by Lemma~\ref{obsklemma}, this happens iff $s, t . \overline{\emptyset} \vDash \langle\obs^*\rangle \Dd A$. But then by Lemma~\ref{DandPLemma}, $s, \overline{\emptyset} \vDash \Pp \langle \obs^* \rangle \Dd A$. 

            The reverse direction is equally straightforward and left to the reader. 
        \end{proof}
        Note that there is nothing special about $\overline{\emptyset}$ in the above proof. It simply makes a convenient `basepoint' as it were.

        \begin{theorem}
            $\ev^{\smon}(A, s) = \bot$ iff $s, \overline{\emptyset} \vDash \Pp \langle\obs^*\rangle \Dd \neg A$.
        \end{theorem}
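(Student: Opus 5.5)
The plan mirrors the proof of the preceding theorem for the $\top$ verdict, with $\neg A$ in place of $A$. The key observation is that $\neg A\in\LL_{\LTL}$ whenever $A\in\LL_{\LTL}$. Moreover, $s.t.\sigma\not\in L\llbracket A\rrbracket$ iff $s.t.\sigma\vDash_{\LTL}\neg A$. So the $\bot$ clause of $\ev^{\smon}(A,s)$ is literally the $\top$ clause of $\ev^{\smon}(\neg A,s)$. One could therefore simply observe that $\ev^{\smon}(A,s)=\bot$ iff the $\top$-condition holds for $L\llbracket\neg A\rrbracket$, and invoke the previous theorem instantiated at $\neg A$. I would note this reduction, but I would also spell out both directions so that the direction left to the reader in the previous theorem is actually handled.

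For the forward direction, I would suppose $\ev^{\smon}(A,s)=\bot$ and fix $t$ with $s.t.\sigma\vDash_{\LTL}\neg A$ for all $\sigma$. By Lemma~\ref{LTLSat} applied to the point $\langle s.t,\sigma\rangle$, whose trace is $s.t.\sigma$, we get $s.t,\sigma\vDash\neg A$ for every $\sigma$, hence $s.t,\overline{\emptyset}\vDash\Dd\neg A$ by Lemma~\ref{DandPLemma}. Writing $k=|t|$ and $\tau=t.\overline{\emptyset}$, we have $s.\tau_k=s.t$ and $\tau^k=\overline{\emptyset}$. Lemma~\ref{obsklemma} then gives $s,\tau\vDash\langle\obs\rangle^k\Dd\neg A$, and since $\llbracket\obs^*\rrbracket$ contains $\llbracket\obs\rrbracket^k$, also $s,\tau\vDash\langle\obs^*\rangle\Dd\neg A$. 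Finally, Lemma~\ref{DandPLemma} turns the existence of such a $\tau$ into $s,\overline{\emptyset}\vDash\Pp\langle\obs^*\rangle\Dd\neg A$.

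For the converse, I would suppose $s,\overline{\emptyset}\vDash\Pp\langle\obs^*\rangle\Dd\neg A$. By Lemma~\ref{DandPLemma} there is some $\tau$ with $s,\tau\vDash\langle\obs^*\rangle\Dd\neg A$. Unpacking $\llbracket\obs\rrbracket^*$ as the union of the $\llbracket\obs\rrbracket^k$, there is some $k$ with $s,\tau\vDash\langle\obs\rangle^k\Dd\neg A$, so by Lemma~\ref{obsklemma}, $s.\tau_k,\tau^k\vDash\Dd\neg A$. Setting $t=\tau_k\in\Sigma_S^*$, Lemma~\ref{DandPLemma} gives $s.t,\sigma\vDash\neg A$ for all $\sigma$. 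Lemma~\ref{LTLSat} then gives $s.t.\sigma\vDash_{\LTL}\neg A$, i.e.\ $s.t.\sigma\notin L\llbracket A\rrbracket$, for all $\sigma$, which is exactly the $\bot$ clause.

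The one real obstacle is a definitional wrinkle rather than the logic. The cases in $\ev^{\smon}$ are ordered, so literally $\ev^{\smon}(A,s)=\bot$ also requires that the $\top$ clause fail. I would handle this by reading the statement, as the paper implicitly does for the $\top$ theorem, as characterizing the $\bot$-condition itself. I would also remark that the two conditions can hold simultaneously (e.g.\ when $A$ is undetermined now but can later be settled either way), in which case the first-listed verdict takes precedence. If one insists on the ordered reading, the statement becomes the conjunction of the displayed formula with the negation of the $\top$ formula from the previous theorem, and the proof above goes through unchanged.
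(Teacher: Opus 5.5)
Your proof is correct and follows the paper's route exactly: the paper says only that the $\top$-case argument carries over, which is your substitution of $\neg A \in \LL_{\LTL}$ for $A$. You write out both directions, including the one the paper leaves to the reader, via Lemmas~\ref{LTLSat}, \ref{DandPLemma}, and \ref{obsklemma}. Your caveat about the ordered cases is also right: for $\XX p$ at $s=\varepsilon$ both the $\top$ and $\bot$ clauses hold, so the statement is correct only when `$\ev^{\smon}(A,s)=\bot$' is read as the $\bot$-clause itself, as the paper tacitly does.
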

        \begin{proof}
            By essentially the same reasoning as in the previous theorem. 
        \end{proof}
        
        If we say that $A$ is $s$-monitorable when either $\ev^{\smon}(A, s) = \top$ or $\ev^{\smon}(A, s) = \bot$, then the corresponding EDMon-concept is that $A$ is $s$-monitorable when $s, \overline{\emptyset} \vDash \Pp \langle\obs^*\rangle(\Dd A \lor \Dd \neg A)$.\smallskip

        \noindent\textit{Examples:} The choice of $s$ can matter a good deal in deciding whether $A$ is $s$-monitorable. As an example: $\XX p \lor \GG \FF p$ is $s$-monitorable for all length-1 sequences $s$, because for any such $s$ and all $\sigma\in\Sigma^\omega$, $s . \{p\}, \sigma \vDash \XX p \lor \GG \FF p$. But if $t \in \Sigma^*$ and $p \not\in t(2)$, then $\XX p \lor \GG \FF p$ is not $t$-monitorable because $t, \tau \vDash \XX p \lor \GG \FF p$ iff $t, \tau \vDash \GG \FF p$ and no finite prefix of $\trace(t, \tau)$ is sufficient to determine this. 

    \subsubsection{Classical Monitorability}

        Classical monitorability (per the discussion in \cite{bauer2011runtime}) was introduced as a generalization of $s$-monitorability. Of note is that we can capture the exact sense in which it succeeds to this end by comparing the two notions in the EDMon semantic framework; we will return to this below. 

        Mathematically, getting from $\ev^{\smon}$ to $\ev^{\cmon}$ is easy: all we have to do is drop the $t$'s. More to the point, and skipping straight to the properties we care about, we have the following:
        \begin{displaymath}
            \ev^{\cmon}(A, s) = 
                \left\{
                    \begin{array}{ll}
                        \top & s . \sigma \vDash_{\LTL} A \text{ for all }\sigma \in \Sigma_S^\omega \\
                        \bot & s . \sigma \vDash_{\LTL} \neg A \text{ for all }\sigma \in \Sigma_S^\omega \\
                        ? & \text{otherwise}
                    \end{array}
                \right.
        \end{displaymath}
        
        Note that this changes how we ought to interpret the verdicts $\top$, $\bot$, and $?$. The $s$-monitorability verdict $\top$ held when it was useful to continue monitoring because it remained possible to see something that would let us determine $A$. The classical monitorability verdict $\top$, on the other hand, makes the stronger claim that we have in fact determined $A$. (Similar things can obviously be said about $\bot$ and $?$.) 
            
        \begin{theorem}\label{pos_evc_theorem}
            $\ev^{\cmon}(A, s) = \top$ iff $s, \tau \vDash \Dd A$ for some $\tau\in\Sigma_S^\omega$.
        \end{theorem}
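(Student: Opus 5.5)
The plan is to unfold both sides using Lemma~\ref{LTLSat} and Lemma~\ref{DandPLemma}, and then notice that the side condition ``for some $\tau$'' does no real work. This is because $\Dd A$ at $\langle s,\tau\rangle$ depends only on $s$.

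First, I will rewrite the right-hand side. By Lemma~\ref{DandPLemma}, $s,\tau\vDash\Dd A$ iff $s,\sigma\vDash A$ for all $\sigma\in\Sigma_S^\omega$. This condition does not mention $\tau$, so ``$s,\tau\vDash\Dd A$ for some $\tau$'' is equivalent to ``$s,\tau\vDash\Dd A$ for all $\tau$''. Both are equivalent to the $\tau$-free condition that $s,\sigma\vDash A$ for every $\sigma$. The only thing needed here is that $\Sigma_S^\omega$ is nonempty, so that some $\tau$ exists; for instance $\overline{\emptyset}$ will do.

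Next, I will rewrite the left-hand side. Since $A\in\LL_{\LTL}$, Lemma~\ref{LTLSat} gives $s,\sigma\vDash A$ iff $\trace(s,\sigma)\vDash_{\LTL}A$, that is, iff $s.\sigma\vDash_{\LTL}A$. So $\ev^{\cmon}(A,s)=\top$, which says $s.\sigma\vDash_{\LTL}A$ for all $\sigma$, is equivalent to $s,\sigma\vDash A$ for all $\sigma$.

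Chaining these two equivalences proves the theorem in both directions at once.

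I expect no real obstacle. The only point to watch is that the definition of $\ev^{\cmon}$ is by cases, so $\ev^{\cmon}(A,s)=\top$ holds exactly when the first clause holds. Since the first clause is checked first, no ambiguity arises when both the $\top$ and $\bot$ conditions could hold; in any case that cannot happen, because some $\sigma$ exists. A second point is that the formula is evaluated over $\Sigma_S$ rather than $\Sigma$. Using the $S$-restricted versions of Lemmas~\ref{DandPLemma} and~\ref{LTLSat} handles this, and those lemmas hold verbatim for $S$-structures.
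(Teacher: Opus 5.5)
Your proposal is correct and follows essentially the same route as the paper: Lemma~\ref{LTLSat} turns $\ev^{\cmon}(A,s)=\top$ into ``$s,\sigma\vDash A$ for all $\sigma$'', and this condition is then identified with $\Dd A$ at $\langle s,\tau\rangle$. Your added remarks make explicit what the paper's one-line proof leaves implicit: $\Dd A$ does not depend on $\tau$, and $\Sigma_S^\omega$ is nonempty, so the existential over $\tau$ is harmless.
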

        \begin{proof}
            By Lemma~\ref{LTLSat}, $\ev^{\cmon}(A, s) = \top$ iff $s, \sigma \vDash A$ for all $\sigma \in \Sigma_S^\omega$. But this happens iff $s, \tau \vDash \Dd A$. 
        \end{proof}

        \begin{theorem}\label{neg_evc_theorem}
            $\ev^{\cmon}(A, s) = \bot$ iff $s, \tau \vDash \Dd \neg A$ for some $\tau\in\Sigma_S^\omega$.
        \end{theorem}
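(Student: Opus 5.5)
The argument follows Theorem~\ref{pos_evc_theorem} exactly, with $\neg A$ in place of $A$. We first unfold the definition of $\ev^{\cmon}$: $\ev^{\cmon}(A,s)=\bot$ iff $s.\sigma \vDash_{\LTL} \neg A$ for every $\sigma\in\Sigma_S^\omega$. Since $A\in\LL_{\LTL}$, we have $\neg A\in\LL_{\LTL}$, and $\trace(s,\sigma)=s.\sigma$. Lemma~\ref{LTLSat} therefore converts each instance: $s.\sigma\vDash_{\LTL}\neg A$ iff $s,\sigma\vDash\neg A$. So $\ev^{\cmon}(A,s)=\bot$ iff $s,\sigma\vDash\neg A$ for all $\sigma\in\Sigma_S^\omega$.

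Next we apply Lemma~\ref{DandPLemma}, which says $s,\tau\vDash\Dd\neg A$ iff $s,\sigma\vDash\neg A$ for all $\sigma$. The right-hand side does not depend on $\tau$, so the condition from the first step holds iff $s,\tau\vDash\Dd\neg A$ for every $\tau$. That in turn holds iff $s,\tau\vDash\Dd\neg A$ for some $\tau$, since $\Sigma_S^\omega$ is nonempty; $\overline{\emptyset}$ is a witness.

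Some care is needed with the ambient vocabulary. Lemma~\ref{DandPLemma} quantifies over $\Sigma^\omega$, while $\ev^{\cmon}$ quantifies over $\Sigma_S^\omega$. We therefore work in the structure $\MM_S$, whose carrier is $\Sigma_S^*\times\Sigma_S^\omega$, exactly as in the proof of Theorem~\ref{pos_evc_theorem}, so that both quantifiers range over the same set. Beyond this there is no real obstacle: the proof is a routine chain of biconditionals.
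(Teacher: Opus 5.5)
Your proof is correct and follows essentially the paper's route: the paper just says the argument is the same as for Theorem~\ref{pos_evc_theorem}, namely Lemma~\ref{LTLSat} followed by the semantics of $\Dd$ (Lemma~\ref{DandPLemma}). The extra steps you spell out are ones the paper leaves tacit. These are the \emph{for some $\tau$} quantifier, which is harmless because $\Sigma_S^\omega$ is nonempty and the condition does not depend on $\tau$, and the choice to work in $\MM_S$. Nonemptiness is also why the $\top$-clause of $\ev^{\cmon}$ can never preempt the $\bot$-clause.
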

        \begin{proof}
            Essentially exactly as in the previous theorem.
        \end{proof}

        Classical monitorability is (unlike $s$-monitorability) genuinely a single monitorability concept. This has as a consequence that there is a univocal story about which properties are monitorable to be had, rather than the prefix-by-prefix story we saw in $s$-monitorability. To say more, we follow \cite{kupferman2001model} and say that $s$ is a good prefix for $A$ when $\ev^{\cmon}(A, s) = \top$ and say that $s$ is a bad prefix for $A$ when $\ev^{\cmon}(A, s) = \bot$. An ugly prefix for $A$ is prefix $s$ that can neither be extended to become a good prefix for $A$ nor extended to become a bad prefix for $A$---more formally, $s$ is an ugly prefix if for all $t \in \Sigma_S^*$, $\ev^{\cmon}(A, s . t)=?$. $A$ is classically monitorable when no prefix is ugly for $A$.

        \begin{theorem}
            $A$ is classically monitorable iff $\Pp \langle\obs^*\rangle (\Dd A \lor \Dd \neg A)$ is EDMon-valid iff $A$ is $s$-monitorable for all $s \in \Sigma_S^*$.
        \end{theorem}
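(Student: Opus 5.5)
The plan is to prove the two biconditionals by unwinding the semantics of $\Pp\langle\obs^*\rangle(\Dd A\lor\Dd\neg A)$ at an arbitrary point $\langle s,\sigma\rangle$ into a statement about $s$ alone, and then to read that statement in two ways.

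\emph{Unwinding the formula.} By Lemma~\ref{DandPLemma}, $s,\sigma\vDash\Pp\langle\obs^*\rangle(\Dd A\lor\Dd\neg A)$ iff $s,\tau\vDash\langle\obs^*\rangle(\Dd A\lor\Dd\neg A)$ for some $\tau$. By Lemma~\ref{obsklemma} (with $\llbracket\obs^*\rrbracket=\llbracket\obs\rrbracket^*$), this holds iff for some $\tau$ and some $k\geq 0$ we have $s.\tau_k,\tau^k\vDash\Dd A\lor\Dd\neg A$. Since every $t\in\Sigma_S^*$ arises as $\tau_k$ for a suitable $\tau$ (take $k=|t|$ and $\tau=t.\overline{\emptyset}$), this becomes: there is $t\in\Sigma_S^*$ with $s.t,\rho\vDash\Dd A\lor\Dd\neg A$ for some (equivalently, by Lemma~\ref{DandPLemma}, every) $\rho$. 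Theorems~\ref{pos_evc_theorem} and~\ref{neg_evc_theorem} convert this to: there is $t$ with $\ev^{\cmon}(A,s.t)\in\{\top,\bot\}$. Let $(\ast)$ denote the condition ``there is $t$ with $\ev^{\cmon}(A,s.t)\neq ?$''.

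\emph{First equivalence.} The computation shows that the truth value of the formula at $\langle s,\sigma\rangle$ depends only on $s$, and that it holds exactly when $(\ast)$ holds. By definition, $(\ast)$ says precisely that $s$ is not ugly for $A$. So the formula is valid iff no $s$ is ugly for $A$, which is classical monitorability.

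\emph{Second equivalence.} This uses the preceding theorems characterising $\ev^{\smon}(A,s)=\top$ and $\ev^{\smon}(A,s)=\bot$ by $s,\overline{\emptyset}\vDash\Pp\langle\obs^*\rangle\Dd A$ and $s,\overline{\emptyset}\vDash\Pp\langle\obs^*\rangle\Dd\neg A$ respectively. The paper has already noted that $s$-monitorability amounts to $s,\overline{\emptyset}\vDash\Pp\langle\obs^*\rangle(\Dd A\lor\Dd\neg A)$. To justify that remark, distribute the diamonds over the disjunction: $\Pp$ and $\langle\obs^*\rangle$ are existential modalities, so $\Pp\langle\obs^*\rangle(B\lor C)$ is equivalent to $\Pp\langle\obs^*\rangle B\lor\Pp\langle\obs^*\rangle C$. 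Because the truth value depends only on $s$, truth at $\langle s,\overline{\emptyset}\rangle$ for every $s$ is the same as validity. Hence the formula is valid iff $A$ is $s$-monitorable for every $s$.

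\emph{Main obstacle.} The delicate step is the reduction in the first paragraph: showing that ranging over $\tau$ and $k$ is the same as ranging over arbitrary finite extensions $t$, and that the inner $\Dd$ makes the choice of suffix irrelevant. Both facts rest on Lemma~\ref{DandPLemma} together with the prefix/suffix identities of the first lemma. Once that reduction is in place, everything else is bookkeeping against the definitions.
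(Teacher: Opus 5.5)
Your proposal is correct and takes essentially the same route as the paper. It uses the same ingredients: Lemma~\ref{DandPLemma}, Lemma~\ref{obsklemma}, Theorems~\ref{pos_evc_theorem} and~\ref{neg_evc_theorem}, and the $\langle s,\overline{\emptyset}\rangle$ characterization of $s$-monitorability. The difference is only organizational: you isolate the pointwise fact that the formula holds at $\langle s,\sigma\rangle$ iff $s$ is not ugly and read off two biconditionals, while the paper runs the cycle classical monitorability $\Rightarrow$ validity $\Rightarrow$ $s$-monitorability for all $s$ $\Rightarrow$ classical monitorability, using the two halves of that same pointwise fact.
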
 
        \begin{proof}
            Suppose $A$ is classically monitorable and choose a point $s,\sigma$. Since $A$ is classically monitorable, either there is $t_1\in\Sigma_S^*$ so that $\ev^{\cmon}(A, s . t_1) = \top$ or there is $t_2\in\Sigma_S^*$ so that $\ev^{\cmon}(A, s . t_2) = \bot$. So by Theorems~\ref{pos_evc_theorem} and \ref{neg_evc_theorem}, either there is $t_1 \in \Sigma_S^*$ so that for some $\tau_1 \in \Sigma_S^\omega$, $s . t_1, \tau_1 \vDash \Dd A$ or there is $t_2 \in \Sigma_S^*$ so that for some $\tau_2 \in \Sigma_S^\omega$, $s . t_2, \tau_2 \vDash \Dd \neg A$. In either case, clearly $s . t_i, \tau_i \vDash \Dd A \lor \Dd \neg A$, and so by Lemma~\ref{obsklemma}, $s, t_i . \tau_i \vDash \langle\obs^*\rangle(\Dd A \lor \Dd \neg A)$. Thus, $s, \sigma \vDash \Pp \langle\obs^*\rangle(\Dd A \lor \Dd \neg A)$.

            Next, since $A$ is $s$-monitorable just if $s, \overline{\emptyset} \vDash \Pp \langle\obs^*\rangle (\Dd A \lor \Dd \neg A)$, clearly if $\Pp \langle\obs^*\rangle (\Dd A \lor \Dd \neg A)$ is valid, then for all $s$, $A$ is $s$-monitorable.

            Finally, suppose $A$ is $s$-monitorable for all $s$ and choose $s\in \Sigma_S^*$. Since $A$ is $s$-monitorable, $s, \overline{\emptyset} \vDash \Pp \langle\obs^*\rangle (\Dd A \lor \Dd \neg A)$. So by Lemma~\ref{DandPLemma} there is $\tau \in \Sigma_S^\omega$ so that $s, \tau \vDash \langle\obs^*\rangle (\Dd A \lor \Dd \neg A)$. Thus by Lemma~\ref{obsklemma}, for some $k$, $s . \tau_k, \tau^k \vDash \Dd A \lor \Dd \neg A$. It follows that either $s . \tau_k, \tau^k \vDash \Dd A$ or $s . \tau_k, \tau^k \vDash \Dd \neg A$. But either option guarantees that $s$ is not an ugly prefix, so $A$ is classically monitorable. 
        \end{proof}

        \noindent\textit{Examples:} We saw above that $\XX p\lor \GG \FF p$ is $s$-monitorable for some $s$ but not others. It follows from the previous theorem, then, that this formula is not classically monitorable. On the other hand, the formula $\FF p$ \emph{is} classically monitorable, since for any $s$ and any $\tau \in \Sigma^\omega$, $s, \{p\} . \tau \vDash \FF p$. 

        \subsubsection{Weak Monitorability} 
            
            As noted in \cite{kauffman2021can} there are properties that, even if they fail to be monitorable in any of the above senses, are nonetheless \emph{useful} to monitor. Weak monitorability, introduced in \cite{chen2018deciding,peled2019refining}, is aimed at exactly such cases. To say more, let's consider the example given in \cite{kauffman2021can}: $p \land \GG\FF p$. 
            \begin{theorem}
                If $s, \overline{\emptyset} \vDash p$, then $s, \overline{\emptyset} \not\vDash \Pp\langle\obs^*\rangle \Dd (p \land \GG \FF p)$ and $s, \overline{\emptyset} \not\vDash \Pp \langle\obs^*\rangle \Dd \neg (p \land \GG \FF p)$
            \end{theorem}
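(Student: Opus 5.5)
Both conjuncts come from unfolding the modals with Lemmas~\ref{DandPLemma} and \ref{obsklemma}. Together they turn $s, \overline{\emptyset} \vDash \Pp\langle\obs^*\rangle \Dd C$ into: for some $\tau\in\Sigma_S^\omega$ and some $k\geq 0$, $s.\tau_k, \rho \vDash C$ for all $\rho\in\Sigma_S^\omega$. So in each case we fix an arbitrary finite sequence $u = s.\tau_k$ extending $s$. We then exhibit one continuation $\rho$ falsifying $\Dd(p\land\GG\FF p)$ and one falsifying $\Dd\neg(p\land\GG\FF p)$. Throughout we use Lemma~\ref{LTLSat} to evaluate the LTL formula on the combined trace $u.\rho$.

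\emph{First conjunct.} Take $\rho = \overline{\emptyset}$. Then $u.\overline{\emptyset}$ contains $p$ at only finitely many positions. Via Lemma~\ref{LTLSat} and the semantics of $\GG$ and $\FF$ (or Lemma~\ref{FLemma}), $u,\overline{\emptyset}\not\vDash \GG\FF p$, so $u,\overline{\emptyset}\not\vDash p\land\GG\FF p$. Hence $u\not\vDash\Dd(p\land \GG\FF p)$ at any point with prefix $u$. Since $u$ was arbitrary, $s,\overline{\emptyset}\not\vDash \Pp\langle\obs^*\rangle\Dd(p\land\GG\FF p)$. This half does not even need the hypothesis $s,\overline{\emptyset}\vDash p$.

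\emph{Second conjunct.} Here the hypothesis is used. Take $\rho=\overline{\{p\}}$. We need $u.\overline{\{p\}}\vDash_{\LTL} p\land\GG\FF p$. The $\GG\FF p$ part holds because $p$ recurs at every position of the tail. For the $p$ part, we need $p\in (u.\overline{\{p\}})(1)$. If $s\neq\varepsilon$, then $(u.\rho)(1)=s(1)=\head(s,\overline{\emptyset})$, which contains $p$ by hypothesis. If $s=\varepsilon$, the hypothesis says $p\in\emptyset$, which is false, so that case is vacuous. Also $u$ extends $s$, so $u(1)=s(1)$ whenever $s\neq\varepsilon$. Hence $u,\overline{\{p\}}\vDash p\land\GG\FF p$, so $u\not\vDash \Dd\neg(p\land\GG\FF p)$, and the conclusion follows as before.

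The main obstacle is bookkeeping rather than any real difficulty. We must track that the first state of the combined trace is fixed by $s$ (not by the observed extension $\tau_k$), using the identity $(s.t)_1 = s_1$ for nonempty $s$ and the fact that $\head$ is invariant under observation. We must also check that the empty-$s$ case is excluded by the hypothesis. Everything else is a direct reading of the semantic clauses.
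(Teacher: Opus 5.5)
Your proof is correct and follows essentially the paper's argument: both unfold $\Pp\langle\obs^*\rangle\Dd$ into a universal claim over all continuations of the extended prefix $s.\sigma_k$, use $s\neq\varepsilon$ (the empty case being vacuous) to fix the head as containing $p$, and defeat the two universal claims with the continuations $\overline{\emptyset}$ and $\overline{\{p\}}$. The only difference is organizational: the paper first reduces both conjuncts to claims about $\GG\FF p$ alone, whereas you treat them separately, and so you correctly observe that the first conjunct holds even without the hypothesis $s,\overline{\emptyset}\vDash p$.
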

            \begin{proof}
                If $s=\varepsilon$, then since $s, \overline{\emptyset} \not\vDash p$, the theorem is trivially true. So suppose $s\neq\varepsilon$. By parsing semantic clauses we see that $s, \overline{\emptyset} \vDash \Pp \langle\obs^*\rangle \Dd (p \land \GG \FF p)$ iff there is $\sigma \in \Sigma_S^\omega$ and a number $k$ so that for all $\tau$, $s . \sigma_k, \tau \vDash p \land \GG \FF p$ while $s, \overline{\emptyset} \vDash \Pp \langle\obs^*\rangle \Dd \neg(p \land \GG \FF p)$ iff there is $\sigma \in \Sigma_S^\omega$ and a number $k$ so that for all $\tau$, $s . \sigma_k, \tau \vDash \neg(p \land \GG \FF p)$. Since $s \neq \varepsilon$, $\head(s, \overline{\emptyset}) = \head(s . \sigma_k, \tau)$. So since $s, \overline{\emptyset} \vDash p$, $s . \sigma_k, \tau \vDash p$. 

                So $s . \sigma_k, \tau \vDash p \land \GG \FF p$ iff $s . \sigma_k, \tau \vDash \GG \FF p$ and $s . \sigma_k, \tau \vDash \neg (p \land \GG \FF p)$ iff $s . \sigma_k, \tau \vDash \neg \GG \FF p$. But now observe that $s . \sigma_k, \overline{\{p\}} \vDash \GG \FF p$ while $s . \sigma_k, \overline{\emptyset} \vDash \neg \GG \FF p$. So it cannot be the case that for all $\tau$, $s . \sigma_k, \tau \vDash p \land \GG \FF p$ and it cannot be the case that for all $\tau$, $s . \sigma_k, \tau \vDash \neg (p \land \GG \FF p)$.
            \end{proof}

            It follows that $p \land \GG \FF p$ is not $s$-monitorable for any $s$ with $s, \overline{\emptyset} \vDash p$. In other words, every such $s$ is an ugly prefix for this formula. And since it admits ugly prefixes, it's also not classically monitorable. 
            
            But it's not as though it's completely useless to monitor $p \land \GG \FF p$---after all, $p \not\in \head(s, \sigma)$ immediately guarantees that $s, \sigma \not\vDash p \land \GG \FF p$. So there are \emph{some} prefixes that aren't ugly, and perhaps in some circumstances that gives us enough reason to monitor. Intuitively, weak monitorability is meant to be the kind of monitorability you should reach for when you have such reasons. Where classical monitorability demanded that \emph{no} prefix \emph{be} ugly, weak monitorability instead demands that \emph{some} prefix be \emph{non-}ugly. 
            \begin{theorem}
                $A$ is weakly monitorable iff $\Dd A \lor \Dd \neg A$ is satisfiable iff $A$ is $s$-monitorable for some $s$. 
            \end{theorem}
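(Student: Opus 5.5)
The definition of weak monitorability is not stated explicitly before the theorem, so I will fix it from the informal gloss given just before: $A$ is \emph{weakly monitorable} iff some prefix is non-ugly for $A$. That is, there exist $s\in\Sigma_S^*$ and $t\in\Sigma_S^*$ with $\ev^{\cmon}(A,s.t)\in\{\top,\bot\}$. I will prove the cycle (weakly monitorable) $\Rightarrow$ ($\Dd A\lor\Dd\neg A$ satisfiable) $\Rightarrow$ ($A$ is $s$-monitorable for some $s$) $\Rightarrow$ (weakly monitorable).

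\textbf{Step 1: weakly monitorable $\Rightarrow$ satisfiable.} Suppose $s$ is non-ugly, so there is $t$ with $\ev^{\cmon}(A,s.t)=\top$ or $=\bot$. Theorems~\ref{pos_evc_theorem} and \ref{neg_evc_theorem} then give some $\tau$ with $s.t,\tau\vDash\Dd A$ or $s.t,\tau\vDash\Dd\neg A$. Either way, $\Dd A\lor\Dd\neg A$ is true at the point $\langle s.t,\tau\rangle$, so it is satisfiable.

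\textbf{Step 2: satisfiable $\Rightarrow$ $s$-monitorable for some $s$.} Let $u,\sigma\vDash\Dd A\lor\Dd\neg A$. By Lemma~\ref{DandPLemma}, $\Dd$ depends only on the first coordinate, so $u,\tau\vDash\Dd A\lor\Dd\neg A$ for every $\tau$.
\begin{itemize}
\item Taking $k=0$ in Lemma~\ref{obsklemma} (i.e.\ $\langle\obs\rangle^0$), we get $u,\overline{\emptyset}\vDash\langle\obs^*\rangle(\Dd A\lor\Dd\neg A)$.
\item Hence $u,\overline{\emptyset}\vDash\Pp\langle\obs^*\rangle(\Dd A\lor\Dd\neg A)$, since $\Pp B$ follows from $B$ by reflexivity of $\dd$.
\end{itemize}
By the EDMon characterization of $s$-monitorability, this says exactly that $A$ is $u$-monitorable.

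\textbf{Step 3: $s$-monitorable for some $s$ $\Rightarrow$ weakly monitorable.} This reverses the argument in the last paragraph of the proof of the classical-monitorability theorem. From $s,\overline{\emptyset}\vDash\Pp\langle\obs^*\rangle(\Dd A\lor\Dd\neg A)$:
\begin{itemize}
\item Lemma~\ref{DandPLemma} gives some $\tau$ with $s,\tau\vDash\langle\obs^*\rangle(\Dd A\lor\Dd\neg A)$.
\item Lemma~\ref{obsklemma} then gives some $k$ with $s.\tau_k,\tau^k\vDash\Dd A$ or $s.\tau_k,\tau^k\vDash\Dd\neg A$.
\item Theorems~\ref{pos_evc_theorem}/\ref{neg_evc_theorem} yield $\ev^{\cmon}(A,s.\tau_k)\neq ?$.
\end{itemize}
So $s$ is not ugly, and $A$ is weakly monitorable.

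The only real obstacle is definitional. Since the paper only glosses weak monitorability, I must fix a formal definition (some non-ugly prefix) and check that it matches the cited notion. If the intended definition is instead ``some prefix $s$ has $\ev^{\cmon}(A,s)\neq ?$,'' the argument is unaffected: Step 1 and Step 3 already produce and consume such a prefix ($s.t$ and $s.\tau_k$ respectively). The rest is routine bookkeeping with Lemmas~\ref{obsklemma}, \ref{DandPLemma} and \ref{LTLSat}.
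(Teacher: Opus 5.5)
Your proof is correct and follows essentially the same route as the paper, which runs a single chain of biconditionals through Theorems~\ref{pos_evc_theorem} and \ref{neg_evc_theorem}, Lemma~\ref{obsklemma}, and Lemma~\ref{DandPLemma}. Your cycle of implications, including the zero-observation witness in Step 2, is that same chain split into its two directions, and your reading of weak monitorability as ``some prefix is non-ugly'' matches the definition the paper uses.
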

            \begin{proof}
                $A$ is weakly monitorable iff some $s$ is non-ugly iff there is an $s$ so that for some $t$, either $\ev^{\cmon}(A, s . t)=\top$ or $\ev^{\cmon}(A, s . t)=\bot$. By Theorems~\ref{pos_evc_theorem} and \ref{neg_evc_theorem}, this happens iff there is an $s$ so that for some $t$ either $s . t, \tau_1 \vDash \Dd A$ for some $\tau_1 \in \Sigma_S^\omega$ or $s . t, \tau_2 \vDash \Dd \neg A$ for some $\tau_2 \in \Sigma_S^\omega$, which happens iff there is an $s$ so that for some $t$ and some $\tau$, $s . t, \tau \vDash \Dd A \lor \Dd \neg A$ (which is to say, iff $\Dd A \lor \Dd \neg A$ is satisfiable). But this happens iff there is an $s$ so that for some $t$ and some $\tau$ $s, t . \tau \vDash \langle\obs^*\rangle (\Dd A \lor \Dd \neg A)$ which happens iff there is an $s$ so that $s, \overline{\emptyset} \vDash \Pp \langle\obs^*\rangle (\Dd A \lor \Dd \neg A)$ which is to say iff $A$ is $s$-monitorable for some $s$.
            \end{proof}

    \subsection{Liveness Classes}

        In \cite{peled2019refining}, Peled and Havelund introduce an extended version of Lamport's (see \cite{lamport1977proving}) safety-liveness hierarchy. In the extended hierarchy, there are six properties to track. Again, restricting to the case of properties defined by an LTL-formula, they can be characterized in our vocabulary as follows:
        \begin{itemize}
            \item $A \in \AFR$ (always finitely refutable) iff for all $\sigma \in \Sigma_S^\omega$, if $\sigma \not\vDash_{\LTL} A$, then for some $k$, $\sigma_k . \beta \not\vDash_{\LTL} A$ for all $\beta \in \Sigma_S^\omega$. 
            \item $A \in \AFS$ (always finitely satisfiable) iff for all $\sigma \in \Sigma_S^\omega$, if $\sigma \vDash_{\LTL} A$, then for some $k$, $\sigma_k . \beta \vDash_{\LTL} A$ for all $\beta \in \Sigma_S^\omega$.
            \item $A \in \SFR$ (sometimes finitely refutable) iff for some $\sigma \in \Sigma_S^\omega$, $\sigma \not\vDash_{\LTL} A$ and there is a $k$ so that $\sigma_k . \beta \not\vDash_{\LTL} A$ for all $\beta \in \Sigma_S^\omega$.
            \item $A \in \SFS$ (sometimes finitely satisfiable) iff for some $\sigma \in \Sigma_S^\omega$, $\sigma \vDash_{\LTL} A$ and there is a $k$ so that $\sigma_k . \beta \vDash_{\LTL} A$ for all $\beta \in \Sigma_S^\omega$.
            \item $A \in \NFR$ (never finitely refutable) iff for all $\sigma \in \Sigma_S^\omega$ and all $k$ there is $\beta$ so that $\sigma_k . \beta \vDash_{\LTL} A$
            \item $A \in \NFS$ (never finitely satisfiable) iff for all $\sigma \in \Sigma_S^\omega$ and all $k$ there is $\beta$ so that $\sigma_k . \beta \not\vDash_{\LTL} A$
        \end{itemize}

        The following theorem demonstrates that each of these classes is characterizable in EDMon:
        \begin{theorem}
            For all $A\in \LL_{\LTL}$, we have the following:
            \begin{itemize}
                \item $A \in \AFR$ iff $\neg A \to \langle\obs^*\rangle \Dd \neg A$ is EDMon-valid. 
                \item $A \in \AFS$ iff $A \to \langle\obs^*\rangle \Dd A$ is EDMon-valid. 
                \item $A \in \SFR$ iff $\neg A \land \langle\obs^*\rangle \Dd \neg A$ is EDMon-satisfiable. 
                \item $A \in \SFS$ iff $A \land \langle\obs^*\rangle \Dd A$ is EDMon-satisfiable. 
                \item $A \in \NFR$ iff $\Pp A$ is EDMon-valid
                \item $A \in \NFS$ iff $\Pp \neg A$ is EDMon-valid. 
            \end{itemize}
        \end{theorem}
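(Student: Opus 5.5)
The plan is to reduce each of the six clauses to a statement about points by combining three earlier results. Lemma~\ref{LTLSat} lets us replace $\sigma\vDash_{\LTL} A$ with truth at any point whose trace is $\sigma$. Lemma~\ref{obsklemma} unfolds $\langle\obs^*\rangle$: $s,\sigma\vDash\langle\obs^*\rangle B$ iff for some $k$, $s.\sigma_k,\sigma^k\vDash B$. Lemma~\ref{DandPLemma} unfolds $\Dd$ and $\Pp$ as quantification over the suffix coordinate. Together these give the key computation: for $A\in\LL_{\LTL}$, $s,\sigma\vDash\langle\obs^*\rangle\Dd A$ iff there is $k$ with $s.\sigma_k.\beta\vDash_{\LTL}A$ for all $\beta\in\Sigma_S^\omega$. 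In words, some finite prefix of $\trace(s,\sigma)$ that extends $s$ is a good prefix for $A$. The same holds with $\neg A$ and bad prefixes. I will state this once as a preliminary observation.

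For $\AFS$, I would argue both directions. Left to right: take a point $s,\sigma$ with $s,\sigma\vDash A$. Then $\trace(s,\sigma)\vDash_{\LTL}A$, so there is $k$ with $\trace(s,\sigma)_k.\beta\vDash_{\LTL}A$ for all $\beta$. The only wrinkle is that $\langle\obs^*\rangle$ only reaches prefixes of length at least $|s|$. So I replace $k$ by $\max(k,|s|)$. This is harmless because a longer prefix of the same trace is still good: any extension of $\trace(s,\sigma)_{k'}$ for $k'\ge k$ is an extension of $\trace(s,\sigma)_k$. The preliminary observation then gives $s,\sigma\vDash\langle\obs^*\rangle\Dd A$. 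Right to left: given $\sigma$ with $\sigma\vDash_{\LTL}A$, evaluate at the initial point $\varepsilon,\sigma$ and read off the required $k$. $\AFR$ is the same argument with $\neg A$ in place of $A$; this is legitimate because $\neg A\in\LL_{\LTL}$.

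For $\SFS$ and $\SFR$, satisfiability at an initial point $\varepsilon,\sigma$ is literally the definition once unfolded. Conversely, satisfiability at any point $s,\sigma$ transfers to the trace $\trace(s,\sigma)$ by the prefix computation above. For $\NFR$, $\Pp A$ is valid iff for every $s$ there is $\tau$ with $s.\tau\vDash_{\LTL}A$. Since every finite $s$ is some $\sigma_k$ (pad it with $\overline{\emptyset}$) and conversely every $\sigma_k$ is a finite string, this matches the definition exactly. $\NFS$ is the same argument with $\neg A$.

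I expect the main obstacle to be only the bookkeeping in the $\AFR$ and $\AFS$ left-to-right directions. There the witness $k$ from the definition may be shorter than the already transmitted prefix $s$, and $\langle\obs^*\rangle$ cannot move backwards. The monotonicity of good and bad prefixes under extension, used above, resolves this. Everything else is direct unfolding of the semantic clauses.
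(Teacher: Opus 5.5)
Your proposal is correct and follows essentially the same route as the paper: unfold $\langle\obs^*\rangle$, $\Dd$ and $\Pp$ via Lemmas~\ref{obsklemma} and~\ref{DandPLemma}, pass to $\vDash_{\LTL}$ via Lemma~\ref{LTLSat}, and match the result against the definitions, using initial points $\varepsilon,\sigma$ and the padding $s.\overline{\emptyset}$ for the converse directions. Your explicit replacement of $k$ by $\max(k,|s|)$, justified by the fact that any extension of a good (or bad) prefix is again good (or bad), supplies exactly the step the paper leaves as ``we take it to be clear'' in the $\AFR$ and $\AFS$ cases.
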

        \begin{proof}
            For $\AFR$, observe that $\neg A \to \langle\obs^*\rangle \Dd \neg A$ is EDMon-valid iff for all $\langle s, \sigma\rangle$, if $s, \sigma \vDash \neg A$, then $s, \sigma \vDash \langle\obs^*\rangle \Dd \neg A$, which happens iff for all $\langle s, \sigma\rangle$, if $\trace(s, \sigma) \not\vDash_{\LTL} A$, then for some $k$ and all $\beta$, $\trace(s . \sigma_k, \beta) \not\vDash_{\LTL} A$. But observe that $\trace(s . \sigma_k, \beta) = \trace((s . \sigma)_k, (s . \sigma_k)^k . \beta)$, so this happens iff for all $\langle s, \sigma\rangle$, if $\trace(s, \sigma) \not\vDash_{\LTL} A$, then for some $k$ and all $\beta$, $\trace((s . \sigma)_k, (s . \sigma_k)^k. \beta) \not\vDash_{\LTL} A$. We take it to be clear that this happens iff $A \in \AFR$. The $\AFS$ case is essentially the same.

            For $\SFS$, note that $A \land \langle\obs^*\rangle \Dd A$ is EDMon-satisfiable iff $s, \sigma \vDash A \land \langle\obs^*\rangle \Dd A$ for some $\langle s, \sigma\rangle$, which happens iff for some $\langle s, \sigma\rangle$, $\trace(s, \sigma) \vDash A$ and for some $k$ and all $\beta$, $\trace(s . \sigma_k, \beta) \vDash A$. As in the previous case, this happens iff for some $\langle s, \sigma\rangle$, $\trace(s, \sigma) \vDash A$ and for some $k$ and all $\beta$, $\trace((s . \sigma)_k, (s . \sigma_k)^k . \beta) \vDash A$. Also as before, we take it to be clear that this happens iff $A \in \SFS$. The $\SFR$ case is essentially the same.

            Finally, for $\NFR$, note that $\Pp A$ is EDMon-valid iff for all $\langle s, \sigma\rangle$ there is $\beta$ so that $s, \beta \vDash A$ which happens iff for all $\langle s, \sigma\rangle$, there is $\beta$ so that $s . \beta \vDash_{\LTL} A$ which is essentially exactly what $\NFR$ demands. The $\NFS$ case is essentially the same. 
        \end{proof}

    \subsection{Further Property Classes}

        In \cite{kauffman2021can}, the authors introduced five further classes of properties and discussed their relations to the classes introduced above. We show here that all five of these can be easily represented in EDMon. 

        \subsubsection{The Classes}\label{classes_sec}

            The five classes introduced in \cite{kauffman2021can} are the classes of proximate, tolerant, permissive, inclusive, and exclusive properties. In the vocabulary of this paper (and again restricting to the formula-defined properties that are of interest in this paper) they are defined as follows:
            \begin{itemize}
                \item $A \in \Prox$ iff there is $s \in \Sigma_S^*$, $x \in \Sigma_S$ and $\sigma \in \Sigma_S^\omega$ so that either $s . x . \sigma \vDash_{\LTL} A$ and $s . x . x . \sigma \not\vDash_{\LTL} A$ or $s . x . \sigma \not\vDash_{\LTL} A$ and $s . x . x . \sigma \vDash_{\LTL} A$.
                \item $A \in \Tolr$ iff for all $\{s, t\} \subseteq \Sigma_S^*$ and all $\sigma \in \Sigma_S^\omega$, if $s . \sigma \vDash_{\LTL} A$, then $s . t . \sigma \vDash_{\LTL} A$.
                \item $A \in \Perm$ iff for all $\{s, t\} \subseteq \Sigma_S^*$ and all $\sigma \in \Sigma_S^\omega$, if $s . \sigma \not\vDash_{\LTL} A$, then $s . t . \sigma \not\vDash_{\LTL} A$.
                \item $A \in \Incl$ iff there is finite $T \subseteq \Sigma_S$ so that $\sigma \vDash_{\LTL} A$ iff each $t\in T$ occurs in $\sigma$.
                \item $A \in \Excl$ iff there is finite $T \subseteq \Sigma_S$ so that $\sigma \vDash_{\LTL} \neg A$ iff each $t\in T$ occurs in $\sigma$.
            \end{itemize}

            Before characterizing these classes in EDMon, it is useful to temporarily extend our vocabulary and allow a mutation we will call `blocked loss' and write `$\bloss$'. As with the other mutations considered so far, we interpret $\bloss$ by first describing its behavior on finite traces:
            \begin{itemize}
                \item $\widehat{\bloss} := \{\langle s, t \rangle \mid s = t$ or for some $\{u,x,v\} \subseteq \Sigma_S^*$, $s = u . x . v$ and $t = u . v\}$.
            \end{itemize}
        Observe that the difference between $\loss$ and $\bloss$ is in what they allow one to drop---in $\widehat{\loss}$, it's a single state; in $\widehat{\bloss}$ it's a finite sequence (block) of states. We interpret $\langle\bloss\rangle$ by appeal to $\llbracket\bloss\rrbracket := \Lift(\widehat{\bloss})$ just as we do with all the other mutations. As we leave to the reader to check, all the results proved above still carry over for the so-enriched language. 
        
        We now pause to address a somewhat delicate point. In \cite{kauffman2021can}, the authors say of the foursome $\ooo$, $\loss$, $\cor$, and $\st$ that `it is possible to apply a combination of these mutations to a trace to transform it into any other trace.' They formalize this in the following theorem that they describe as showing `Completeness of Mutations':
        \begin{quote}
            Given any two sets of non-empty [finite] traces $S,S'\subseteq\Sigma_S^*\setminus\{\varepsilon\}$, [there is] $k\in\mathbb{N}$ [such that] $(\loss\cup\cor\cup\st)^k = S \times S'$.
        \end{quote}
        As strictly written, this claim is false. Here are two straightforward ways to see this:
        \begin{itemize}        
            \item $\Sigma_S^*\setminus\{\varepsilon\}$ has uncountably many subsets. So there are uncountably many sets $S \times S'$ with $S,S'\subseteq\Sigma_S^*\setminus\{\varepsilon\}$. But there are only countably many sets of the form $(\loss\cup\cor\cup\st)^k$---one for each number $k$. So clearly it cannot be the case that all of the uncountably many sets $S\times S'$ are identical to some set of the form $(\loss\cup\cor\cup\st)^k$.
            \item Each member of $(\loss \cup \cor \cup \st)^k$ is infinite. If $S$ and $S'$ are both finite, then $S\times S'$ is finite. So for any such pair, it will not be the case that there is a $k$ so that $(\loss\cup\cor\cup\st)^k = S\times S'$.
        \end{itemize}
        Examining the proof they offer for this claim, one suspects the authors \emph{meant} to prove the following (true) claim that, in any event, is a better statement of the claim that combinations of $\ooo$, $\loss$, $\cor$, and $\st$ can transform any trace into any other trace:
        \begin{theorem}
            Given any two sets of non-empty finite traces $S,S'\subseteq\Sigma_S^*\setminus\{\varepsilon\}$, there is $k\in\mathbb{N}$ such that $(\loss\cup\cor\cup\st)^k \supseteq S \times S'$.
        \end{theorem}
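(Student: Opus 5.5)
The plan is to exhibit, for each pair $\langle s,t\rangle\in S\times S'$, an explicit chain of single mutation steps taking $s$ to $t$, bound its length by a quantity depending only on $|s|$ and $|t|$, and then take $k$ to be the maximum of these bounds. Throughout, I read $(\loss\cup\cor\cup\st)^k$ as the relation $(\widehat{\loss}\cup\widehat{\cor}\cup\widehat{\st})^k$ on $\Sigma_S^*$.

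The first observation is a padding lemma. Each of $\widehat{\loss}$, $\widehat{\cor}$, $\widehat{\st}$ contains $\id_{\Sigma_S^*}$ by definition, so $R:=\widehat{\loss}\cup\widehat{\cor}\cup\widehat{\st}$ is reflexive. Hence $R^j\subseteq R^j\circ R^{k-j}=R^k$ whenever $j\leq k$. It therefore suffices to find, for each pair, \emph{some} $j\leq k$ with $s\sim_{R^j} t$.

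Next, the chain construction. Fix nonempty $s$ and $t$, and build the chain in four stages.
\begin{enumerate}[(i)]
\item Apply $\widehat{\loss}$ $|s|-1$ times to delete all but the first state of $s$.
\item Apply $\widehat{\cor}$ once to replace that state with $t(1)$.
\item Apply $\widehat{\st}$ $|t|-1$ times to stutter $t(1)$ until the string is $t(1)^{|t|}$.
\item Apply $\widehat{\cor}$ once at each position $2,\dots,|t|$ to install $t(i)$.
\end{enumerate}
Each stage is a legitimate instance of the relevant clause, taking $u,v$ as the appropriate prefix and suffix. This gives $s\sim_{R^{j}} t$ with $j=|s|+2|t|-1$. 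Nonemptiness of $s$ and $t$ is exactly what makes stages (ii) and (iii) possible: $\widehat{\st}$ can only duplicate an existing state, so nothing nonempty is reachable from $\varepsilon$.

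Finally, set $k=\max\{|s|+2|t|-1 : s\in S,\ t\in S'\}$. Padding then yields $S\times S'\subseteq R^k$.

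The main obstacle is that this maximum must exist. If $S$ or $S'$ contains strings of unbounded length, the claim actually fails. Each step of $R$ changes length by at most one, so $s\sim_{R^k}t$ forces $\big||s|-|t|\big|\leq k$, and no single $k$ can then work. So the theorem has to be read with $S$ and $S'$ finite, or at least of bounded length, which is presumably the intended reading and matches the finiteness the paper's second objection already has in view. I would state this restriction explicitly and include the length argument as a remark showing that it is necessary.
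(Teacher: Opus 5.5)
Your proof is correct, and your caveat is a genuine correction rather than a technicality. Read literally, with arbitrary $S,S'\subseteq\Sigma_S^*\setminus\{\varepsilon\}$ and a single $k$ for all of $S\times S'$, the statement is false. Take $S=\{x\}$ and $S'=\Sigma_S^*\setminus\{\varepsilon\}$: each step of $\widehat{\loss}\cup\widehat{\cor}\cup\widehat{\st}$ changes length by at most one, so $\langle x,x^{k+2}\rangle$ is never in the $k$-th power.

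The paper asserts the claim is correct without noticing this. What it actually proves later (Theorem~\ref{Theorem:AlmostUnivReln}) is membership in $(\widehat{\st}\cup\widehat{\cor}\cup\widehat{\loss})^*$, and that version does hold for arbitrary $S,S'$. Restricting to finite or bounded-length $S,S'$, as you do, is the right fix; take $k=0$ if either set is empty.

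One small slip: your four stages use $(|s|-1)+1+(|t|-1)+(|t|-1)=|s|+2|t|-2$ steps, not $|s|+2|t|-1$. Your reflexivity padding makes this harmless.

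Your route also differs from the paper's. The paper gives no proof of this statement itself; it defers to the argument of \cite{kauffman2021can}. Its closest argument, Lemma~\ref{Lemma:AlmostUniversal}, splits on lengths:
\begin{itemize}
\item If $|s|\le|t|$, it corrupts $s$ position by position into the length-$|s|$ prefix of $t$ (Lemma~\ref{Lemma:CorruptingRel}). It then appends the remaining states one at a time, each by a stutter followed by a corruption (Lemma~\ref{Lemma:SequenceExtend}).
\item If $|s|>|t|$, it drops the surplus suffix with $\loss$ (Lemma~\ref{tdrop}) and then corrupts.
\end{itemize}
You instead collapse $s$ to a single state and rebuild $t$ from it. This avoids the case split and comes with an explicit step count. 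Together with $R^j\subseteq R^k$ for $j\le k$, that count is exactly what gives one $k$ for all pairs.

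The paper's decomposition gives shorter chains ($2|t|-|s|$ or $|s|$ steps). But it is phrased with ${}^*$ and never counts steps, so on its own it only yields the closure version. Both arguments need $|s|\geq 1$ for the same reason: $\widehat{\st}$ can only duplicate a state already present.
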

        This claim is not only correct but can be proved along the exact lines of the proof they provide. However, it somewhat lacks the necessary power to be labeled a `completeness' result. After all, the set containing the single relation $\Sigma^* \times \Sigma^*$ has the same feature.%
        \footnote{Of course, there's nothing special about the particular threesome of $\loss$, $\cor$, and $\st$ here: given at most countably many generating relations and at most countably many ways of combining them, one can generate at most countably many relations.} The point, bringing it back to the matter of $\bloss$, is that $\bloss$ is one of the many relations inexpressible in terms of just $\loss$, $\cor$, $\st$, and $\ooo$. We omit a detailed proof of this fact, but note that in essence such a proof can be extracted from the observation that $\bloss$ does not move \emph{scattered} states but only sequential states and that it sometimes changes more than two states at a go. It's not hard to prove that no mutation generated by the usual foursome can have this combination of features. 
        
        Even so, $\bloss$ will be a temporary addition to our vocabulary for reasons that we will make clear in the next section of the paper. It's added, however, for its usefulness in stating the next result, before which we need one further definition: given a finite set of atoms $S$, an $S$-state description is a conjunction of the form $(\bigwedge_{t \in T} t) \land (\bigwedge_{t \in S \setminus T}\neg t)$ for some $T\subseteq S$. Each case of the following theorem is now essentially an immediate consequence of the definitions of the terms involved:
            
        \begin{theorem}\label{five_char}
            For all $A \in \LL_\LTL$ we have the following:
            \begin{itemize}
                \item $A \in \Prox$ iff either $(A \land \langle\st\rangle \neg A)$ or $(\neg A \land \langle\st\rangle A)$ is EDMon-satisfiable. 
                \item $A \in \Tolr$ iff $A \to [\bloss^-] A$ is EDMon-valid. 
                \item $A \in \Perm$ iff $\neg A \to [\bloss^-]\neg A$ is EDMon-valid. 
                \item For finite $S\subseteq \At$, $A\in\Incl$ iff there are state descriptions $c_1, \dots, c_n$ so that $A\otto \bigwedge_{i=1}^n \FF c_i$ is $S$-valid.
                \item For finite $S\subseteq \At$, $A\in\Excl$ iff there are state descriptions $c_1, \dots, c_n$ so that $\neg A \otto \bigwedge_{i=1}^n \FF c_i$ is $S$-valid.
            \end{itemize}
        \end{theorem}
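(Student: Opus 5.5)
The plan is to treat the five items separately. In each case we unwind the EDMon semantic clauses until they coincide with the defining condition of the class from Section~\ref{classes_sec}. The tools are Lemma~\ref{LTLSat}, which moves between $s,\sigma\vDash\phi$ and $\trace(s,\sigma)\vDash_{\LTL}\phi$, and Corollary~\ref{easymutlem}, which reduces $\langle\mu\rangle$ and $[\mu]$ to quantification over $\widehat{\mu}$-related prefixes with $\sigma$ held fixed. Throughout, the free choice of where a point splits its trace into prefix and suffix is what lets us place a mutation wherever the class definition needs it.

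\emph{$\Prox$.} Suppose $s.x.\sigma\vDash_{\LTL}A$ and $s.x.x.\sigma\not\vDash_{\LTL}A$. Take the point $\langle s.x,\sigma\rangle$. Since $s.x\sim_{\widehat{\st}}s.x.x$ (with $u=s$, $v=\varepsilon$), Lemma~\ref{LTLSat} and Corollary~\ref{easymutlem} give that this point satisfies $A\land\langle\st\rangle\neg A$. The other disjunct of $\Prox$ yields $\neg A\land\langle\st\rangle A$ in the same way. Conversely, suppose $s,\sigma\vDash A\land\langle\st\rangle\neg A$. Then there is $t$ with $s\sim_{\widehat{\st}}t$ and $t,\sigma\vDash\neg A$. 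Here $t\neq s$, so $s=uxv$ and $t=uxxv$. Rewriting the traces as $u.x.(v.\sigma)$ and $u.x.x.(v.\sigma)$ gives exactly the $\Prox$ witness, with $u$ as the prefix and $v.\sigma$ as the suffix.

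\emph{$\Tolr$ and $\Perm$.} The only delicate point is the direction of the converse. The formula $s,\sigma\vDash[\bloss^-]A$ holds iff $t,\sigma\vDash A$ for every $t$ with $t\sim_{\widehat{\bloss}}s$. Such $t$ are $s$ itself or $u.x.v$ where $s=u.v$, so $[\bloss^-]$ quantifies over block \emph{insertions} into the prefix. For the left-to-right direction, take any point $\langle u.v,\sigma\rangle$ with $u.v.\sigma\vDash_{\LTL}A$. Applying $\Tolr$ with prefix $u$, inserted block $x$, and suffix $v.\sigma$ gives $u.x.v.\sigma\vDash_{\LTL}A$. For the right-to-left direction, given $s,t,\sigma$ with $s.\sigma\vDash_{\LTL}A$, evaluate at $\langle s,\sigma\rangle$ and use $s.t\sim_{\widehat{\bloss}}s$ (with $u=s$, $v=\varepsilon$). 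The $\Perm$ case is identical with $\neg A$ in place of $A$.

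\emph{$\Incl$ and $\Excl$.} Fix finite $S$, so that each $y\in\Sigma_S$ has a state description $c_y$. By a propositional induction, in the spirit of Lemma~\ref{propositional_lem}, $c_y$ is true at a point iff its head equals $y$. By Lemma~\ref{FLemma} with $n=1$, $\FF c_y$ is then true iff $y$ occurs in the trace. So a finite $T$ witnessing $\Incl$ gives $A\otto\bigwedge_{y\in T}\FF c_y$ as $S$-valid, via Lemma~\ref{LTLSat}. For the converse, take $T$ to be the set of states described by $c_1,\dots,c_n$. The empty case, $n=0$ and $T=\emptyset$, is treated as $\top$. $\Excl$ is the same argument applied to $\neg A$.

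I expect the main obstacle to be bookkeeping rather than ideas. The most error-prone points are getting the direction of $\bloss^-$ right, and checking that quantifying over all split points of the trace matches the class definitions, which insert at the prefix/suffix boundary. The latter is handled by choosing where a point splits its trace in each direction.
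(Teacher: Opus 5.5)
Your proposal is correct and follows the paper's approach. The paper gives no separate proof and only remarks that each item is an immediate consequence of the definitions; your unwinding via Lemma~\ref{LTLSat}, Corollary~\ref{easymutlem}, and a free choice of where each point splits its trace into prefix and suffix is that argument made explicit, and your handling of the direction of $\bloss^-$ and of the degenerate case $n=0$ (which covers, e.g., $A=\top\in\Incl$ with $T=\emptyset$) fills in details the paper leaves implicit.
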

        In the latter two cases we say that $A$ (resp. $\neg A$) is equivalent to a conjunction of $\FF$-ed state descriptions. 

            \subsubsection{Results about the Classes}

            The following inclusions were claimed in \cite{kauffman2021can}:
            \begin{itemize}
                \item $\Tolr \subseteq \NFR$. 
                \item $\Perm \subseteq \NFS$.
                \item $\Incl \subseteq \Tolr \setminus \NFS$
                \item $\Excl \subseteq \Perm \setminus \NFR$.
            \end{itemize}
            The first two are not quite correct---the empty property ($\bot$) is tolerant, but obviously not in $\NFR$; the trivial property ($\top$) is permissive, but obviously not in $\NFS$. But once we correct for that, the results hold, as we leave the reader to check. The EDMon-validities that correspond to the corrected inclusions are the following:

            \begin{theorem}
                For all $A \in \LL_\LTL$ we have the following:
                \begin{itemize}
                    \item If $A \to [\bloss^-]A$ is EDMon-valid then $A \to \Pp A$ is EDMon-valid.
                    \item If $\neg A \to [\bloss^-]\neg A$ is EDMon-valid then $\neg A \to \Pp \neg A$ is EDMon-valid.
                    \item If $A$ is equivalent to a conjunction of $\FF$-ed state descriptions, then $A\to[\bloss^-]A$ is EDMon-valid and $\Pp \neg A$ is not EDMon-valid.
                    \item If $\neg A$ is equivalent to a conjunction of $\FF$-ed state descriptions, then $\neg A\to[\bloss^-]\neg A$ is EDMon-valid and $\Pp A$ is not EDMon-valid.
                \end{itemize}
            \end{theorem}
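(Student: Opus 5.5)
The plan is to treat the four items separately. The first two turn out to be nearly immediate, and the real content lies in the last two.

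\emph{First two items.} I would start by observing that the relation $\dd$ is reflexive: $\langle s,\sigma\rangle\sim_{\dd}\langle s,\sigma\rangle$ because $\dd=\id_{\Sigma_S^*}\times\nabla_{\Sigma_S^\omega}$. So by Lemma~\ref{DandPLemma}, if $s,\sigma\vDash A$ then $s,\sigma\vDash\Pp A$, witnessed by $\tau=\sigma$. Hence $A\to\Pp A$ and $\neg A\to\Pp\neg A$ are EDMon-valid outright, and the first two items hold whatever the hypothesis. I would remark that this is exactly why the corrected inclusions have this conditional form: the counterexamples $\bot\in\Tolr\setminus\NFR$ and $\top\in\Perm\setminus\NFS$ are absorbed by the antecedent $A$ (resp.\ $\neg A$).

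\emph{Third item, tolerance.} Let $A$ be equivalent (over the finite $S$ of its atoms) to $\bigwedge_{i=1}^n\FF c_i$ with each $c_i$ an $S$-state description.
\begin{enumerate}
\item Unwind the box. $s,\sigma\sim_{\bloss^-}t,\sigma$ iff $t=u.x.v$ and $s=u.v$ for some finite $u,x,v$ (or $t=s$). So it suffices to show that if $u.v,\sigma\vDash A$ then $u.x.v,\sigma\vDash A$.
\item Reduce to occurrences. By Lemma~\ref{LTLSat}, Lemma~\ref{FLemma}, Lemma~\ref{shifthead} and Lemma~\ref{propositional_lem}, a point satisfies $\FF c_i$ iff some state of its trace satisfies $c_i$. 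That means: iff some state of the trace agrees on $S$ with the unique $S$-state described by $c_i$.
\item Conclude. Every state occurring in $u.v.\sigma$ still occurs in $u.x.v.\sigma$, so each conjunct $\FF c_i$ is preserved, and hence so is $A$.
\end{enumerate}

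\emph{Third item, failure of $\Pp\neg A$.} Take $s$ to be a finite sequence $x_1\cdots x_n$ of states with $x_i\vDash c_i$ (if $n=0$, take $s=\varepsilon$, since then $A$ is equivalent to $\top$). Then for every $\tau$, each $c_i$ occurs in $s.\tau$, so $s,\tau\vDash A$. Thus $s,\sigma\vDash\Dd A$, i.e.\ $s,\sigma\not\vDash\Pp\neg A$, and $\Pp\neg A$ is not EDMon-valid.

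\emph{Fourth item.} This is the same argument run with $\neg A$ in place of $A$: tolerance of $\neg A$ gives $\neg A\to[\bloss^-]\neg A$, and the same witness $s$ gives $\Dd\neg A$, refuting $\Pp A$.

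The main obstacle is bookkeeping rather than ideas. The equivalence with $\bigwedge\FF c_i$ is only given as $S$-validity, whereas EDMon-validity quantifies over points in $\Sigma^*\times\Sigma^\omega$. I would handle this with a short restriction lemma, proved by a routine induction in the style of Lemma~\ref{LTLSat}: an $\LL^S_{\LTL}$ formula holds at $\langle s,\sigma\rangle$ iff it holds at the pointwise restriction $\langle s\restriction S,\sigma\restriction S\rangle$. Inserting a block commutes with this restriction, so the argument can be carried out entirely over $S$-states and then transferred back.
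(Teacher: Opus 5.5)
Your proposal is correct. On the third and fourth items it matches the paper. The paper proves tolerance by noting that inserting a block $x$ after $u$ leaves the witnesses $m_i<|u|$ for the $c_i$ in place and shifts those with $m_i\geq|u|$ by $|x|$; this is your occurrence-preservation argument in index form. It refutes $\Pp A$ with the same witness you use, the sequence $S_1\cdots S_n$ of states read off from the positive literals of the $c_i$.

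The genuine difference is on the first two items. The paper uses the hypothesis. From $s,\sigma\vDash A$ it passes to $\varepsilon,s.\sigma\vDash A$ by Lemma~\ref{LTLSat}, applies $[\bloss^-]$ along $\varepsilon\sim_{\widehat{\bloss^-}}s$, and obtains $s,s.\sigma\vDash A$, hence $s,\sigma\vDash\Pp A$. You observe instead that $A\to\Pp A$ is just the dual of $\Dd A\to A$, i.e.\ reflexivity of $\dd$, so it is valid outright and the hypothesis is idle. That is right, and it shows that, as written, the first two items say nothing about tolerance. Your aside that the antecedent ``absorbs'' the counterexamples is therefore a bit generous: it trivializes the item rather than correcting the inclusion.

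What the paper's detour buys is a template for the intended result. Run the same step from an arbitrary satisfying initial point $\langle\varepsilon,r\rangle$ instead of $\langle\varepsilon,s.\sigma\rangle$. It then gives $s,r\vDash A$ for every $s$. So $\Pp A$ is EDMon-valid for every satisfiable $A$ with $A\to[\bloss^-]A$ valid, which is the honest version of $\Tolr\subseteq\NFR$.

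Finally, your restriction lemma fills a step the paper leaves implicit. It transfers the $S$-valid equivalence with $\bigwedge_i\FF c_i$ to all of $\Sigma^*\times\Sigma^\omega$. This is genuinely needed, because both the EDMon-validity of $A\to[\bloss^-]A$ and the $\Dd$ in the refutation range over arbitrary states, not just $S$-states.
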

            \begin{proof}
                We show the first and the fourth and leave the others to the reader. For the first, suppose $A\to[\bloss^-]A$ is EDMon-valid and let $s, \sigma \vDash A$. Since $A \in \LL_\LTL$, it follows that $\varepsilon, s . \sigma \vDash A$. So $\varepsilon, s . \sigma \vDash [\bloss^-]A$. But now observe that $s \sim_{\widehat{\bloss}} \varepsilon$, so $\varepsilon \sim_{\widehat{\bloss^-}} s$. So $s, s. \sigma \vDash A$, and thus $s, \sigma \vDash \Pp A$. 

                For the fourth, suppose that $\neg A$ is equivalent to $\bigwedge_{i=1}^n \FF c_i$. If $s, \sigma \vDash \neg A$, then there are (not necessarily distinct) numbers $m_1, \dots, m_n$ so that $\shift^{m_i}(s, \sigma) \vDash c_i$ for $1\leq i\leq n$. Let $s \sim_{\widehat{\bloss^-}} t$. Then either $s=t$ or there are $\{u, x, v\} \subseteq \Sigma^*_S$ so that $s = u . v$ and $t = u . x . v$. Observe that for $m_i < |u|$, that $\head(\shift^{m_i}(s, \sigma)) = \head(\shift^{m_i}(t, \sigma))$ and that for $m_i \geq |u|$, $\head(\shift^{m_i}(s, \sigma)) = \head(\shift^{m_i + |x|}(t, \sigma))$. So there are (not necessarily distinct) numbers $m'_1, \dots, m'_n$ so that $\shift^{m'_i}(t, \sigma) \vDash c_i$ for $1\leq i\leq n$ and thus $t, \sigma \vDash \neg A$. So $s, \sigma \vDash [\bloss^-] \neg A$. Thus $\neg A \to [\bloss^-]\neg A$ is valid. 

                To see that $\Pp A$ is not EDMon-valid, let $S_i$ be the set containing all positive literals in $c_i$. Then $ S_1\dots S_n , \tau \vDash \neg A$ for all $\tau$. So for no $\tau$ does $S_1\dots S_n , \tau \vDash A$ and thus $ S_1\dots S_n , \overline{\emptyset} \not\vDash \Pp A$
            \end{proof}
        
    \subsection{Immunity}

        At last we come to immunity, which has been of central importance in NRV. For the forms of monitorability and kinds of properties that are of interest here, immunity to mutation is defined in \cite{kauffman2021can} as follows:
        $A$ is true-false immune to $\mu$ iff $\ev^{\cmon}(A, s) = \ev^{\cmon}(A, t)$ for all $\langle s, t\rangle \in \widehat{\mu}$. In EDMon, we can begin to capture this in the following way:

        \begin{lemma}
            $A$ is true-false immune to $\mu$ iff all four of the following are true:
            \begin{itemize}
                \item for all $s \sim_{\widehat{\mu}} t$, if $s, \sigma \vDash \Dd A$ for some $\sigma$ then $t, \tau \vDash \Dd A$ for some $\tau$, and 
                \item for all $s \sim_{\widehat{\mu}} t$, if $t, \sigma \vDash \Dd A$ for some $\sigma$ then $s, \tau \vDash \Dd A$ for some $\tau$, and
                \item for all $s \sim_{\widehat{\mu}} t$, if $s, \sigma \vDash \Dd \neg A$ for some $\sigma$ then $t, \tau \vDash \Dd \neg A$ for some $\tau$, and
                \item for all $s \sim_{\widehat{\mu}} t$, if $t, \sigma \vDash \Dd \neg A$ for some $\sigma$ then $s, \tau \vDash \Dd \neg A$ for some $\tau$.
            \end{itemize}
        \end{lemma}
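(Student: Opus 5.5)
The plan is to unfold the definition of true-false immunity and translate each verdict of $\ev^{\cmon}$ into EDMon using Theorems~\ref{pos_evc_theorem} and \ref{neg_evc_theorem}. The key observation is that $\ev^{\cmon}(A,s)$ takes exactly one of the three values $\top$, $\bot$, $?$. So the equation $\ev^{\cmon}(A,s)=\ev^{\cmon}(A,t)$ is equivalent to the conjunction of two biconditionals:
\[
\ev^{\cmon}(A,s)=\top \iff \ev^{\cmon}(A,t)=\top,
\qquad
\ev^{\cmon}(A,s)=\bot \iff \ev^{\cmon}(A,t)=\bot .
\]
If both biconditionals hold, then neither side can be $?$ while the other is not, so the values coincide. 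The converse is trivial. One small point must be checked: the cases $\top$ and $\bot$ in the definition of $\ev^{\cmon}$ cannot overlap, since $\Sigma_S^\omega$ is nonempty, so $\ev^{\cmon}$ really is a function into three distinct verdicts.

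Next, I would rewrite each side. By Theorem~\ref{pos_evc_theorem}, $\ev^{\cmon}(A,s)=\top$ iff $s,\sigma\vDash\Dd A$ for some $\sigma$. By Theorem~\ref{neg_evc_theorem}, $\ev^{\cmon}(A,s)=\bot$ iff $s,\sigma\vDash\Dd\neg A$ for some $\sigma$. The same holds with $t$ and $\tau$ in place of $s$ and $\sigma$.

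Substituting these into the two biconditionals and splitting each biconditional into its two directions gives exactly the four items, each quantified over all pairs $s\sim_{\widehat{\mu}} t$. The first and second items come from the $\top$ biconditional in its left-to-right and right-to-left directions. The third and fourth items come from the $\bot$ biconditional in the same two directions. Since the universal quantifier over pairs in $\widehat{\mu}$ distributes over conjunction, the conjunction of the four universally quantified items is equivalent to immunity.

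I expect no real obstacle. The only step needing care is the trichotomy argument above, which justifies replacing equality of verdicts by the two biconditionals.
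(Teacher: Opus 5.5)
Your proposal is correct and follows the paper's proof. The paper also replaces equality of verdicts with the four conditionals about $\top$ and $\bot$ (your two biconditionals), distributes the universal quantifier, and applies Theorems~\ref{pos_evc_theorem} and \ref{neg_evc_theorem}; your trichotomy argument, including the check that the $\top$ and $\bot$ cases are disjoint because $\Sigma_S^\omega$ is nonempty, makes explicit a step the paper leaves implicit.
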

        \begin{proof}
            $A$ is true false-immune to $\mu$ iff $\ev^{\cmon}(A, s) = \ev^{\cmon}(A, t)$ for all $\langle s, t\rangle \in \widehat{\mu}$. But this happens iff for all $s \sim_{\widehat{\mu}} t$, we have all of the following:
            \begin{itemize}
                \item if $\ev^{\cmon}(A, s) = \top$, then $\ev^{\cmon}(A, t) = \top$, and
                \item if $\ev^{\cmon}(A, t) = \top$, then $\ev^{\cmon}(A, s) = \top$, and
                \item if $\ev^{\cmon}(A, s) = \bot$, then $\ev^{\cmon}(A, t) = \bot$, and
                \item if $\ev^{\cmon}(A, t) = \bot$, then $\ev^{\cmon}(A, s) = \bot$, and
            \end{itemize}
            After distributing the universal and applying Theorems~\ref{pos_evc_theorem} and \ref{neg_evc_theorem}, the result is immediate. 
        \end{proof}

        With one more lemma, we can characterize immunity in a perspicuous way:

        \begin{lemma}
            $s, \sigma \vDash \Dd A$ iff there is some $\tau$ so that $s, \tau \vDash \Dd A$. 
        \end{lemma}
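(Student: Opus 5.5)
The plan is to argue directly from Lemma~\ref{DandPLemma}, which says that whether $s,\sigma \vDash \Dd A$ holds depends only on $s$ and not on $\sigma$. Concretely, $s,\sigma\vDash\Dd A$ iff $s,\tau'\vDash A$ for all $\tau'\in\Sigma_S^\omega$. This right-hand condition mentions $\sigma$ nowhere, so it is the same condition for every choice of second coordinate.

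The left-to-right direction is trivial: take $\tau=\sigma$.

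For right-to-left, suppose $s,\tau\vDash\Dd A$ for some $\tau$. By Lemma~\ref{DandPLemma}, $s,\tau'\vDash A$ for all $\tau'\in\Sigma_S^\omega$. Applying Lemma~\ref{DandPLemma} again in the other direction, this $\sigma$-free condition yields $s,\sigma\vDash\Dd A$. Equivalently, one can note that $\langle s,\sigma\rangle\sim_{\dd}\langle t,\tau'\rangle$ iff $s=t$, so $\dd(\langle s,\sigma\rangle)=\dd(\langle s,\tau\rangle)$, and hence the two points have exactly the same $\dd$-successors.

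There is no real obstacle here. The only care needed is to make explicit that the set of $\dd$-successors of $\langle s,\sigma\rangle$ does not depend on $\sigma$, which follows immediately from $\dd=\id_{\Sigma_S^*}\times\nabla_{\Sigma_S^\omega}$. As an aside, in the modal-logic reading $\dd$ is an equivalence relation, and $\Dd$ is an S5 modality whose truth is constant on equivalence classes; the lemma is an instance of that fact.
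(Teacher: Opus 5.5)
Your proof is correct and follows the paper's route: left-to-right is immediate, and right-to-left uses Lemma~\ref{DandPLemma}, i.e.\ the fact that whether $\Dd A$ holds at $\langle s,\sigma\rangle$ depends only on $s$. Your version is slightly cleaner, because you state the intermediate step as $s,\tau'\vDash A$ for all $\tau'$, whereas the paper writes $s,\tau'\vDash\Dd A$ there, which is essentially the conclusion being proved.
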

        \begin{proof}
            Left to right is immediate. For right to left, suppose $s, \tau \vDash \Dd A$. Then $s, \tau' \vDash \Dd A$ for all $\tau'$. But then $s, \sigma \vDash \Dd A$ as well. 
        \end{proof}
        
        \begin{theorem}\label{tfi}
            $A$ is true-false immune to $\mu$ iff all of the following are EDMon-valid:
            \begin{itemize}
                \item $\Dd A \to [\mu] \Dd A$
                \item $\Dd A \to [\mu^-] \Dd A$
                \item $\Dd \neg A \to [\mu] \Dd \neg A$
                \item $\Dd \neg A \to [\mu^-] \Dd \neg A$
            \end{itemize}
        \end{theorem}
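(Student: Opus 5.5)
The plan is to show that each of the four bullets of the preceding lemma (the four-part characterization of true-false immunity) is equivalent to the EDMon-validity of the corresponding bullet in Theorem~\ref{tfi}. Then the theorem follows by chaining with that lemma. All four cases share one pattern, so I will do the first in detail and obtain the others by substitution.

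For the first bullet, suppose that for all $s \sim_{\widehat{\mu}} t$, if $s,\sigma \vDash \Dd A$ for some $\sigma$ then $t,\tau\vDash \Dd A$ for some $\tau$. To show $\Dd A \to [\mu]\Dd A$ is valid, take a point $s,\sigma$ with $s,\sigma\vDash \Dd A$. By Corollary~\ref{easymutlem}, it suffices to show $t,\sigma\vDash \Dd A$ for every $t$ with $s\sim_{\widehat{\mu}} t$. The hypothesis gives some $\tau$ with $t,\tau\vDash\Dd A$. The lemma just proved (that $\Dd A$ at $t,\tau$ for some $\tau$ yields $\Dd A$ at $t,\sigma$ for every $\sigma$) then moves us from $\tau$ to $\sigma$.

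Conversely, assume validity and let $s\sim_{\widehat{\mu}} t$ with $s,\sigma\vDash\Dd A$. Validity gives $s,\sigma\vDash[\mu]\Dd A$, and Corollary~\ref{easymutlem} yields $t,\sigma\vDash \Dd A$, which is the required witness with $\tau=\sigma$.

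The remaining bullets go the same way. The second uses $\mu^-$: by Corollary~\ref{liftingcoro}, $\widehat{\mu^-}=\widehat{\mu}^{-1}$, so $s\sim_{\widehat{\mu^-}} t$ iff $t\sim_{\widehat{\mu}} s$. Hence the ``from $t$ to $s$'' clause of the lemma matches $\Dd A\to[\mu^-]\Dd A$. The third and fourth bullets are the same arguments with $\neg A$ in place of $A$; since the independence lemma holds for arbitrary formulas, it covers $\neg A$ as well.

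I expect no real obstacle. The only delicate point is bookkeeping about the $\sigma$-coordinate. Mutations preserve $\sigma$, so the witness $\tau$ from the immunity clause has to be transported back to the original $\sigma$, and the independence lemma does exactly that. The other point needing care is the direction of the relation in the $\mu^-$ cases.
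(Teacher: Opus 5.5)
Your proposal is correct and follows the paper's route: reduce each of the four validities to the matching clause of the preceding four-part characterization of true-false immunity. You do this via Corollary~\ref{easymutlem} and the lemma that $\Dd A$ does not depend on the $\sigma$-coordinate, and you handle the $\mu^-$ cases by reversing the relation, with your version simply spelling out the steps the paper compresses into ``by the previous lemma.''
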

        \begin{proof}
            Note that $\Dd A \to [\mu] \Dd A$ is valid iff for all $\langle s, \sigma\rangle$, if $s, \sigma \vDash \Dd A$, then $s, \sigma \vDash [\mu] \Dd A$. By the previous lemma, this happens iff for all $s \sim_{\widehat{\mu}} t$, if $s, \sigma \vDash \Dd A$ for some $\sigma$, then $t, \tau \vDash \Dd A$ for some $\tau$. A similar argument shows that $\Dd \neg A \to [\mu] \Dd \neg A$ is valid iff for all $s \sim_{\widehat{\mu}} t$, if $s, \sigma \vDash \Dd \neg A$ for some $\sigma$, then $t, \tau \vDash \Dd \neg A$ for some $\tau$. And a similar pair of arguments establish that $\Dd A \to [\mu^-] \Dd A$ is valid iff for all $s \sim_{\widehat{\mu}} t$, if $t, \sigma \vDash \Dd A$ for some $\sigma$ then $s, \tau \vDash \Dd A$ for some $\tau$ and $\Dd \neg A \to [\mu^-] \Dd \neg A$ is valid iff for all $s \sim_{\widehat{\mu}} t$, if $t, \sigma \vDash \Dd \neg A$ for some $\sigma$ then $s, \tau \vDash \Dd \neg A$ for some $\tau$.
        \end{proof}

        Also isolated in \cite{kauffman2021can} is the restricted form of immunity known as \emph{trustworthiness}. The definition they provide is this: a verdict $v\in\{\top,\bot,?\}$ is trustworthy for $A$ with respect to $\mu$ when for all $s\sim_\mu t$, if $\ev^{\cmon}(A,t)=v$ then $\ev^{\cmon}(A,s)=v$. Two thirds of the following theorem characterizing trustworthiness are then essentially immediate: 

        \begin{theorem}\label{twtheorem}
            For all $A\in\LL_{\LTL}$ and all mutations $\mu$, $\top$ is trustworthy for $A$ with respect to $\mu$ iff $\Dd A \to [\mu^-] \Dd A$ is EDMon-valid; $\bot$ is trustworthy for $A$ with respect to $\mu$ iff $\Dd \neg A \to [\mu^-] \Dd \neg A$ is EDMon-valid; and $?$ is trustworthy for $A$ with respect to $\mu$ iff $(\Pp A \land \Pp \neg A) \to [\mu^-](\Pp A \land \Pp\neg A)$ is valid.
        \end{theorem}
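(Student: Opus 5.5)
The plan is to mirror the proof of Theorem~\ref{tfi}, reducing each clause to a statement about finite prefixes via Theorems~\ref{pos_evc_theorem} and~\ref{neg_evc_theorem}, together with the preceding lemma that $s,\sigma\vDash\Dd A$ iff $s,\tau\vDash\Dd A$ for some $\tau$. The first step is to unpack $\Dd A\to[\mu^-]\Dd A$. By Corollary~\ref{easymutlem} (applied to $\mu^-$, using $\widehat{\mu^-}=\widehat{\mu}^{-1}$), this formula is valid iff for every point $\langle t,\sigma\rangle$ with $t,\sigma\vDash\Dd A$ and every $s$ with $s\sim_{\widehat{\mu}}t$, we have $s,\sigma\vDash\Dd A$. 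By the preceding lemma, truth of $\Dd A$ depends only on the prefix. So validity is equivalent to: for all $s\sim_{\widehat\mu}t$, if $t,\tau\vDash\Dd A$ for some $\tau$, then $s,\tau'\vDash\Dd A$ for some $\tau'$. By Theorem~\ref{pos_evc_theorem}, this reads: if $\ev^{\cmon}(A,t)=\top$ then $\ev^{\cmon}(A,s)=\top$. That is exactly the trustworthiness of $\top$. The $\bot$ case is identical, using $\Dd\neg A$ and Theorem~\ref{neg_evc_theorem}.

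For the verdict $?$, I first characterize it pointwise. Since $\ev^{\cmon}(A,s)$ is a function, $\ev^{\cmon}(A,s)=?$ iff it is neither $\top$ nor $\bot$. By the two theorems just cited and the prefix-independence lemma, this holds iff $s,\sigma\not\vDash\Dd A$ and $s,\sigma\not\vDash\Dd\neg A$ for an arbitrary (equivalently, every) $\sigma$. Unfolding $\Pp$, that is the same as $s,\sigma\vDash\Pp\neg A\land\Pp A$.

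One subtlety needs a check: when $\Dd A$ and $\Dd\neg A$ both hold, we must confirm the definition of $\ev^{\cmon}$ is not ambiguous. But $\Dd A\land\Dd\neg A$ is never true, since $\Sigma_S^\omega$ is nonempty. So the case split is clean.

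Next I observe that $\Pp A\land\Pp\neg A$ also depends only on the prefix, because $\Pp B=\neg\Dd\neg B$. The validity argument for $\top$ then goes through verbatim with $\Pp A\land\Pp\neg A$ in place of $\Dd A$. Validity of $(\Pp A\land\Pp\neg A)\to[\mu^-](\Pp A\land\Pp\neg A)$ is equivalent to: for all $s\sim_{\widehat\mu}t$, if $\ev^{\cmon}(A,t)=?$ then $\ev^{\cmon}(A,s)=?$. This is trustworthiness of $?$.

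The main obstacle is bookkeeping rather than mathematics. First, I must get the direction of $\mu^-$ right: $[\mu^-]$ at $t$ quantifies over $s$ with $t\sim_{\widehat{\mu}^{-1}}s$, i.e.\ $s\sim_{\widehat\mu}t$, matching the trustworthiness clause that runs from the received $t$ to the sent $s$. Second, I must make the prefix-independence explicit for the $\Pp$-formula, since the earlier lemma was stated only for $\Dd A$. It extends immediately because $\Dd$ ignores its $\sigma$-coordinate, and negation and conjunction preserve this.
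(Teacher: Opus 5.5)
Your proposal is correct and follows essentially the same route as the paper. The paper also treats the $\top$ and $\bot$ clauses as immediate from Theorems~\ref{pos_evc_theorem} and~\ref{neg_evc_theorem}, in the style of Theorem~\ref{tfi}, and it handles $?$ exactly as you do: it rewrites ``neither $\Dd A$ nor $\Dd\neg A$ holds at $t$'' as $\Pp A\land\Pp\neg A$ and reads the trustworthiness condition off the $[\mu^-]$ clause. Your explicit checks fill in details the paper leaves tacit: the direction of $\mu^-$, the prefix-independence of $\Pp A\land\Pp\neg A$, and the unsatisfiability of $\Dd A\land\Dd\neg A$.
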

        \begin{proof}
            Only the $?$-part needs work. So, suppose that for all $s \sim_{\mu} t$, if $\ev^{\cmon}(A,t)=?$, then $\ev^{\cmon}(A,s)=?$. Then for all $s \sim_{\mu} t$ if for all $\tau$, $t, \tau \not\vDash \Dd A$ and for all $\tau$, $t, \tau \not\vDash \Dd \neg A$, then for all $\sigma$, $s, \sigma \not\vDash \Dd A$ and $s, \sigma \not\vDash \Dd \neg A$. But this happens iff for all $s \sim_{\mu} t$, if $t, \tau \vDash \Pp A \land \Pp\neg A$, then $s, \tau \vDash \Pp A \land \Pp\neg A$. But this happens iff $t, \tau \vDash (\Pp A \land \Pp \neg A) \to [\mu^-] (\Pp A \land \Pp \neg A)$ for all $\langle t, \tau\rangle$, which is to say iff $(\Pp A \land \Pp \neg A) \to [\mu^-](\Pp A \land \Pp\neg A)$ is valid.
        \end{proof}

        Note that the antecedents of these conditions are exhaustive: clearly for all points $\langle s, \sigma\rangle$, either $s, \sigma \vDash \Dd A \lor \Dd\neg A$ or $s, \sigma \vDash \neg(\Dd A \lor \Dd\neg A)$. But in the latter case, $s, \sigma \vDash \Pp A \land \Pp \neg A$. Using this, we then get an analogue of Corollary 2 from \cite{kauffman2021can}:
        \begin{coro}\label{tfitw}
            $A$ is true-false immune to $\mu$ iff $\top$, $\bot$, and $?$ are all trustworthy for $A$ with respect to $\mu$.
        \end{coro}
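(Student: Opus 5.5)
We prove both directions at the level of the definitions, using Theorem~\ref{tfi} and Theorem~\ref{twtheorem} as bridges. By Theorem~\ref{tfi}, true-false immunity to $\mu$ is equivalent to the EDMon-validity of the four formulas $\Dd A \to [\mu] \Dd A$, $\Dd A \to [\mu^-] \Dd A$, $\Dd\neg A \to [\mu]\Dd\neg A$, and $\Dd\neg A \to [\mu^-]\Dd\neg A$. By Theorem~\ref{twtheorem}, trustworthiness of $\top$, $\bot$, and $?$ is equivalent to the validity of, respectively, the second formula, the fourth formula, and $(\Pp A\land\Pp\neg A)\to[\mu^-](\Pp A\land\Pp\neg A)$. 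So the corollary reduces to showing that the four-formula condition is equivalent to the conjunction of these three validities.

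\emph{Left to right.} Assume the four formulas are valid. Trustworthiness of $\top$ and $\bot$ is then immediate. For $?$, take $t\sim_{\widehat{\mu}} s$ with $t,\tau\vDash\Pp A\land\Pp\neg A$, and suppose toward contradiction that $s,\sigma\not\vDash \Pp A\land\Pp\neg A$. Then $s,\sigma\vDash \Dd A\lor\Dd\neg A$. Using the first or third formula, $t$ inherits $\Dd A$ or $\Dd\neg A$. By the lemma that $\Dd$-truth at $s$ does not depend on the second coordinate, this transfer holds at any suffix. That contradicts $t,\tau\vDash\Pp A\land\Pp\neg A$.

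\emph{Right to left.} Assume the three trustworthiness conditions. We must recover the forward formulas $\Dd A\to[\mu]\Dd A$ and $\Dd\neg A\to[\mu]\Dd\neg A$. Take $s\sim_{\widehat{\mu}} t$ with $s,\sigma\vDash\Dd A$. We do a case split at $t$, using the exhaustiveness remark made just before the corollary.
\begin{itemize}
\item If $t\vDash \Pp A\land\Pp\neg A$, then trustworthiness of $?$ transfers this to $s$, contradicting $\Dd A$.
\item If $t\vDash\Dd\neg A$, then trustworthiness of $\bot$ gives $s\vDash\Dd\neg A$. Combined with $\Dd A$ at $s$, this is impossible, since $\Dd$ quantifies over a nonempty set of suffixes.
\end{itemize}
Hence $t\vDash\Dd A$. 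The $\Dd\neg A$ forward case is symmetric.

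The main obstacle is bookkeeping the direction of $\mu$ versus $\mu^-$ in Theorem~\ref{twtheorem}. In particular, we must check that the case split is performed at the \emph{target} point $t$ and that the trust conditions pull information back from $t$ to $s$. We also need to be careful that $\Dd A\land\Dd\neg A$ is unsatisfiable, which holds because $\Sigma_S^\omega$ is nonempty.
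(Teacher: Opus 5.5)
Your proposal is correct and follows the paper's proof essentially step for step. You reduce via Theorems~\ref{tfi} and~\ref{twtheorem}, note that the $[\mu^-]$ conditions are shared, derive $?$-trust from the forward $[\mu]$ conditions (which the paper only calls `not too hard to see'), and recover the forward conditions by the three-way case split at the target $t$ using $\bot$- and $?$-trust. One slip to fix: in the left-to-right $?$ case you need $s\sim_{\widehat{\mu}} t$ rather than $t\sim_{\widehat{\mu}} s$, because that is the direction the $[\mu^-]$ in the $?$-formula quantifies over, and it is the direction along which the first and third formulas carry $\Dd A$ or $\Dd\neg A$ from $s$ to $t$.
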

        \begin{proof}
            By Theorem~\ref{tfi} it suffices to show that $\top$ , $\bot$, and $?$ are all trustworthy for $A$ with respect to $\mu$ iff all four of $\Dd A \to [\mu] \Dd A$, $\Dd A \to [\mu^-] \Dd A$, $\Dd \neg A \to [\mu] \Dd \neg A$, and $\Dd \neg A \to [\mu^-] \Dd \neg A$ are valid. But $\top$ is trustworthy iff $\Dd A \to [\mu^-] \Dd A$ is valid and $\bot$ is trustworthy iff $\Dd \neg A \to [\mu^-] \Dd \neg A$. And if $\Dd A \to [\mu] \Dd A$ and $\Dd \neg A \to [\mu] \Dd \neg A$ are both valid then it's not too hard to see that $(\Pp A \land \Pp \neg A) \to [\mu^-](\Pp A \land \Pp \neg A)$ is also valid.

            So what remains is showing that if $\Dd A \to [\mu^-] \Dd A$ and $\Dd \neg A \to [\mu^-] \Dd \neg A$ and $(\Pp A \land \Pp \neg A) \to [\mu^-](\Pp A \land \Pp \neg A)$ are all valid, then so too are $\Dd A \to [\mu] \Dd A$ and $\Dd \neg A \to [\mu] \Dd \neg A$. 

            To that end, suppose $s, \sigma \vDash \Dd A$ and $s \sim_{\widehat{\mu}} t$. To show that $t, \sigma \vDash \Dd A$ it suffices (by the observation preceding this result) to show that $t, \sigma \not\vDash \Dd \neg A$ and $t, \sigma \not\vDash \Pp A\land \Pp \neg A$. But this isn't hard: since $s \sim_{\widehat{\mu}} t$, $t \sim_{\widehat{\mu^-}} s$. Since $\Dd \neg A \to [\mu^-]\Dd \neg A$ is valid, if $t, \sigma \vDash \Dd \neg A$, then $s, \sigma \vDash \Dd \neg A$, contradicting $s, \sigma \vDash \Dd A$. Similarly, if $t, \sigma \vDash \Pp A \land \Pp \neg A$, then $s, \sigma \vDash \Pp \neg A$, again a contradiction. So $t, \sigma \vDash \Dd A$. A similar argument shows that $\Dd \neg A \to [\mu] \Dd \neg A$.
        \end{proof}

    \section{Some EDMon Validities} 

        Giving a complete axiomatization of the set of EDMon-valid formulas is probably impossible. But we can still make headway on describing some of the important classes of EDMon-validities, and we'll end our introduction of EDMon by doing so, and then showing a few things the resulting partial axiomatization can be useful for.  

        \subsection{Relationship With Other Logics}
    
        We begin by describing some of the important closure principles EDMon obeys. Here is the first, most basic one whose proof we omit:
        \begin{theorem}\label{taut}
            All EDMon-instances of propositional tautologies are EDMon-valid.
        \end{theorem}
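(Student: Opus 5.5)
The plan is to reduce the claim to the fact that, at each fixed point, the clauses for $\neg$ and $\lor$ in the definition of $\vDash$ are exactly the Boolean truth-table clauses. Let $\chi(q_1,\dots,q_n)$ be a propositional tautology in the sentence letters $q_1,\dots,q_n$. An EDMon-instance of $\chi$ is the formula $\chi[A_1/q_1,\dots,A_n/q_n]$ obtained by uniformly substituting EDMon-formulas $A_i$ for the $q_i$. Here $\land$, $\to$ and $\top$ are the abbreviations already fixed in the paper, so $\chi$ may be assumed to be built from $\neg$ and $\lor$ alone.

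Fix an arbitrary point $\langle s,\sigma\rangle$ and define a Boolean valuation $v$ on $q_1,\dots,q_n$ by setting $v(q_i)=1$ iff $s,\sigma\vDash A_i$. The key step is a structural induction on propositional formulas $\psi$ built from $q_1,\dots,q_n$ using $\neg$ and $\lor$. The claim is that $s,\sigma\vDash\psi[A_1/q_1,\dots,A_n/q_n]$ iff $v(\psi)=1$.

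The base case holds by the definition of $v$. For the inductive cases, the clauses ``$s,\sigma\vDash\neg A$ iff $s,\sigma\not\vDash A$'' and ``$s,\sigma\vDash A\lor B$ iff $s,\sigma\vDash A$ or $s,\sigma\vDash B$'' evaluate a formula at the same point at which its immediate subformulas are evaluated. They therefore match the truth tables exactly, and the induction step is immediate from the inductive hypothesis. Since $\chi$ is a tautology, $v(\chi)=1$, so the instance is true at $\langle s,\sigma\rangle$. Because the point was arbitrary, the instance is EDMon-valid, and the same argument gives $S$-validity for each $S$.

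One point needs care rather than work. Because the grammar is stratified, substituting arbitrary EDMon-formulas for letters inside temporal contexts would not be well formed. So ``EDMon-instance'' must be read as substitution into a Boolean skeleton built only from $\neg$ and $\lor$, and this is exactly the skeleton the induction covers. Substituted formulas may contain $\XX$, $\UU$, $\Dd$ and the mutation and observation modals; none of these need separate treatment, since the induction treats each $A_i$ as an opaque letter. I therefore expect no genuine obstacle beyond stating the notion of instance precisely.
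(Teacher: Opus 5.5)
Your proof is correct. The paper omits its proof, and your argument is the routine verification that omission takes for granted. Because the $\neg$ and $\lor$ clauses evaluate a formula at the same point as its immediate subformulas, each point $\langle s,\sigma\rangle$ induces a Boolean valuation on the substituted formulas, and a structural induction on the tautology's skeleton settles the claim.
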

        The proof of each part of the following theorem is trivial, so omitted:
        \begin{theorem}\label{rules}
            All of the following rules preserve EDMon-validity for all EDMon-formulas $A$ and $B$, LTL-formulas $\phi$ and mutations $\mu$:
            \AxiomC{$A$} \AxiomC{$B$} \BinaryInfC{$A \land B$} \DisplayProof,
            \AxiomC{$A \to B$} \AxiomC{$A$} \BinaryInfC{$B$} \DisplayProof,
            \AxiomC{$A$} \UnaryInfC{$[\mu] A$} \DisplayProof,
            \AxiomC{$A$} \UnaryInfC{$[\obs] A$} \DisplayProof,
            \AxiomC{$A$} \UnaryInfC{$\Dd A$} \DisplayProof,
            \AxiomC{$\phi$} \UnaryInfC{$\XX \phi$} \DisplayProof, and
            \AxiomC{$\phi$} \UnaryInfC{$\GG \phi$} \DisplayProof
        \end{theorem}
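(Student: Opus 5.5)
The proof goes rule by rule, directly from the semantic clauses. In each case we assume the premises are true at every point of $\Sigma^*\times\Sigma^\omega$ and show the conclusion is too.

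For \emph{adjunction}, if $A$ and $B$ are both valid, then at any point $s,\sigma$ both hold. By the clauses for $\neg$ and $\lor$, $\neg(\neg A\lor\neg B)$ then holds, and this is what $A\land B$ abbreviates. For \emph{modus ponens}, at any point either $s,\sigma\vDash\neg A$ or $s,\sigma\vDash B$. Since $A$ is valid the first disjunct fails, so $s,\sigma\vDash B$.

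The three necessitation rules for $[\mu]$, $[\obs]$ and $\Dd$ are handled uniformly. Each modal $m$ in the list is interpreted as a box over some relation $R$ on points: $\llbracket\mu\rrbracket$, $\llbracket\obs\rrbracket$ and $\dd$ respectively. For the bracketed ones this comes from unfolding $[\theta]:=\neg\langle\theta\rangle\neg$, which yields $s,\sigma\vDash[\theta]A$ iff $t,\tau\vDash A$ for all $\langle s,\sigma\rangle\sim_{\llbracket\theta\rrbracket}\langle t,\tau\rangle$. So if $A$ is true at every point, it is in particular true at every $R$-successor of any given point, and $mA$ holds everywhere. Nothing about the relations is used beyond the fact that they are relations on points, so in particular the cases $\mu^-$ and $\mu^*$ need no separate treatment.

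For the temporal rules, take $\phi\in\LL_{\LTL}$ valid. Then $s,\sigma\vDash\XX\phi$ iff $\shift(s,\sigma)\vDash\phi$, and the right-hand side holds because $\shift(s,\sigma)$ is itself a point. For $\GG\phi=\neg(\top\UU\neg\phi)$, unfold the $\UU$ clause to get $s,\sigma\vDash\GG\phi$ iff $\shift^k(s,\sigma)\vDash\phi$ for every $k\geq0$, as in the base case of Lemma~\ref{FLemma}. This holds since every $\shift^k(s,\sigma)$ is a point.

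The only step needing a moment's care, and so the main obstacle such as it is, is this unfolding of the defined operators ($\land$, $[\cdot]$, $\GG$) through the primitive clauses. Here one has to check that each $\shift^k$ and each relational successor stays inside the carrier set $\Sigma^*\times\Sigma^\omega$, which is immediate from the definitions. The whole argument can be run with $\Sigma_S$ in place of $\Sigma$, so it gives preservation of $S$-validity as well as EDMon-validity.
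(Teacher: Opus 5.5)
Your proof is correct. You check each rule directly against the semantic clauses, unfolding $\land$, $[\theta]$ and $\GG$ into universal conditions over relational successors or over all points $\shift^k(s,\sigma)$; this is exactly the routine verification the paper has in mind when it omits the proof as trivial.
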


        \begin{lemma}
            For $\phi\in\LL^S_{\mathrm{LTL}}$, if $\phi$ is LTL-valid, then $\phi$ is $S$-valid. 
        \end{lemma}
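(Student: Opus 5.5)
The argument is a direct reduction to Lemma~\ref{LTLSat}, which says that on $\LL_{\LTL}$-formulas the EDMon forcing relation agrees with standard LTL satisfaction on the combined trace. Suppose $\phi\in\LL^S_{\LTL}$ is LTL-valid, i.e.\ $\rho\vDash_{\LTL}\phi$ for every $\rho\in\Sigma^\omega$ (where $\Sigma=2^{\At}$). We must show $s,\sigma\vDash\phi$ for every point $\langle s,\sigma\rangle\in\Sigma_S^*\times\Sigma_S^\omega$.

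Fix such a point. By Lemma~\ref{LTLSat} (read in the structure $\MM_S$), $s,\sigma\vDash\phi$ iff $\trace(s,\sigma)\vDash_{\LTL}\phi$, so it suffices to show $\trace(s,\sigma)=s.\sigma\vDash_{\LTL}\phi$. Since $\Sigma_S=2^S\subseteq 2^{\At}=\Sigma$, the sequence $s.\sigma$ is itself a member of $\Sigma^\omega$. LTL-validity then gives $s.\sigma\vDash_{\LTL}\phi$, and we are done.

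The only delicate step is reconciling the two notions of validity. If ``LTL-valid'' is meant relative to $\Sigma_S^\omega$ rather than $\Sigma^\omega$, the inclusion $\Sigma_S^\omega\subseteq\Sigma^\omega$ is not needed, and the argument above goes through verbatim. If it is meant relative to $\Sigma^\omega$, the inclusion suffices as stated. For completeness one could note the restriction fact: for $\phi\in\LL^S_{\LTL}$ and any $\rho\in\Sigma^\omega$, $\rho\vDash_{\LTL}\phi$ iff $\rho|_S\vDash_{\LTL}\phi$, where $\rho|_S(i)=\rho(i)\cap S$. This follows by a routine induction on $\phi$, since only atoms in $S$ are ever inspected. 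It would be needed only if validity were defined over a smaller alphabet than the one the points live in, and it shows that the choice of alphabet does not matter.
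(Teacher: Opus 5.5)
Your proof is correct and is essentially the paper's argument. Both reduce the claim to Lemma~\ref{LTLSat} at an arbitrary point: the paper argues contrapositively (a falsifying point $\langle s,\sigma\rangle$ yields a falsifying trace $\trace(s,\sigma)$), while you argue directly. Your note that $\Sigma_S^\omega\subseteq\Sigma^\omega$ makes explicit an alphabet step the paper leaves implicit; the restriction fact is not needed for the statement.
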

        \begin{proof}
            Suppose $\phi\in\LL_{\mathrm{LTL}}$ is not EDMon-valid. Then there are $s\in\Sigma_S^*,\sigma\in\Sigma_S^\omega$ such that $s,\sigma\not\vDash\phi$. 
            By Lemma \ref{LTLSat}, $\trace(s,\sigma)\not\vDash_{\LTL}\phi$. Therefore, $\phi$ is not LTL-valid.
        \end{proof}

        \begin{coro}
            For all $\phi_1$, $\phi_2$, and $\phi_3$ in $\LL_\LTL$, all of the following are valid:\footnote{This is the axiomatization presented in \cite[\S4.3]{sep-logic-temporal}.}
            \begin{multicols}{2}\begin{enumerate}[({LTL}1)]
                \item $\GG (\phi_1 \to \phi_2) \to (\GG \phi_1 \to \GG \phi_2)$
                \item $\XX (\phi_1 \to \phi_2) \to (\XX \phi_1 \to \XX \phi_2)$
                \item $\XX \neg \phi_1 \otto \neg \XX \phi_1$
                \item $\GG \phi_1 \otto (\phi_1 \land \XX \GG \phi_1)$
                \item $(\phi_2 \land \GG (\phi_2 \to (\phi_1 \land \XX \phi_2))) \to \GG \phi_1$
                \item $(\phi_1 \UU \phi_2) \otto (\phi_2 \lor (\phi_1 \land \XX (\phi_1 \UU \phi_2)))$
                \item $\GG ((\phi_2 \lor (\phi_1 \land \XX \phi_3)) \to \phi_3) \to ((\phi_1 \UU \phi_2) \to \phi_3)$
            \end{enumerate}\end{multicols}
        \end{coro}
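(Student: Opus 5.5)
The plan is to combine the preceding lemma with the standard completeness (soundness) of the cited LTL axiomatization. Each of (LTL1)--(LTL7) is an LTL-formula whenever $\phi_1,\phi_2,\phi_3\in\LL_\LTL$. This is so because the abbreviations $\to$, $\otto$, $\land$, $\FF$ and $\GG$ unfold into $\neg$, $\lor$, $\XX$ and $\UU$ only, and the stratified grammar places each instance in the $\phi$-class. So it suffices to show that every instance is LTL-valid; the preceding lemma then gives $S$-validity, and taking $S=\At$ gives EDMon-validity.

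For LTL-validity I will not simply appeal to \cite{sep-logic-temporal}. Instead I will check each schema directly against the standard trace semantics $\vDash_{\LTL}$ recalled before Lemma~\ref{LTLSat}. The checks proceed as follows.
\begin{itemize}
\item (LTL1), (LTL2) and (LTL3) are immediate: $\GG$ quantifies universally over suffixes $\sigma^k$, and $\XX$ is evaluation at the functional successor $\sigma^1$.
\item (LTL4) and (LTL6) are the fixpoint unfoldings. For (LTL6), split the witnessing $k$ in the $\UU$-clause into the cases $k=0$ and $k\geq 1$, and reindex using $(\sigma^1)^{j}=\sigma^{j+1}$, which is an instance of the lemma on $(u^i)^k$. (LTL4) is the same argument applied to $\GG\phi_1=\neg(\top\UU\neg\phi_1)$.
\item (LTL5) is the induction axiom. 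Assume $\sigma\vDash_{\LTL}\phi_2$ and that every suffix satisfies $\phi_2\to(\phi_1\land\XX\phi_2)$. Show by induction on $k$ that $\sigma^k\vDash_{\LTL}\phi_2$, hence $\sigma^k\vDash_{\LTL}\phi_1$ for all $k$.
\end{itemize}

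I expect (LTL7), the least-fixpoint rule for $\UU$, to be the only step requiring care. Suppose $\sigma$ satisfies $\GG((\phi_2\lor(\phi_1\land\XX\phi_3))\to\phi_3)$ and $\phi_1\UU\phi_2$, with witness $k$. I will argue by downward induction on $j$ from $k$ to $0$ that $\sigma^j\vDash_{\LTL}\phi_3$.
\begin{itemize}
\item At $j=k$: $\sigma^k\vDash\phi_2$, so the hypothesis instantiated at $\sigma^k$ gives $\phi_3$.
\item At $j<k$: $\sigma^j\vDash\phi_1$, and $\sigma^{j+1}\vDash\phi_3$ by the induction hypothesis, so $\sigma^j\vDash\phi_1\land\XX\phi_3$. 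The $\GG$-hypothesis then gives $\sigma^j\vDash\phi_3$.
\end{itemize}
Taking $j=0$ finishes the case. The only subtlety is to instantiate the $\GG$-hypothesis at the correct suffixes, which is legitimate because $\GG$ ranges over all $k\geq 0$.

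Alternatively, each schema could be verified directly in EDMon semantics via $\shift^k$ and Lemma~\ref{shifthead}. That route is equivalent by Lemma~\ref{LTLSat}, so I will go through $\vDash_{\LTL}$, which is the cleaner setting.
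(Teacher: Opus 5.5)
Your proposal is correct and follows the paper's route. The paper obtains the corollary from the preceding lemma (LTL-validity implies $S$-validity, via Lemma~\ref{LTLSat}) together with the LTL-validity of the cited axiomatization, which it takes from the reference. You instead check each schema directly against $\vDash_{\LTL}$; that is a correct, self-contained elaboration of the same argument, including the downward induction for (LTL7).
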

        
        Recall that test-free propositional dynamic logic with converse (PDL$^-$; see \cite{harel2000dynamic}) has the same mutation-combining (or, in PDL's vocabulary, program-combining) operations as EDMon, but a countable infinity of atomic programs and no `LTL-sublayer'. It is axiomatized as the modus-ponens and necessitation closure of propositional logic plus the following:
        \begin{multicols}{2}\begin{enumerate}[({PDL}1)]
            \item $[\alpha] (A \to B) \to ([\alpha] A \to [\alpha] B)$
            \item $[\alpha] (A \land B) \otto ([\alpha] A \land [\alpha] B)$
            \item $[\alpha_1 \cup \alpha_2] A \otto ([\alpha_1] A \land [\alpha_2] A)$
            \item $[\alpha_1; \alpha_2] A \otto [\alpha_1] [\alpha_2] A$
            \item $(A \land [\alpha] [\alpha^*] A) \otto [\alpha^*] A$
            \item $[\alpha^*] (A \to [\alpha] A) \to (A \to [\alpha^*] A)$
            \item $A \to [\alpha] \langle\alpha^-\rangle A$
            \item $A \to [\alpha^-] \langle\alpha\rangle A$
        \end{enumerate}\end{multicols}  

        \begin{lemma}
            All well-formed EDMon-substitutions of PDL-theorems are EDMon-valid.
        \end{lemma}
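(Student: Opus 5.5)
The plan is to argue semantically: the mutation modals of EDMon behave exactly like the program modals of a Kripke model for PDL$^-$, so it suffices to check that each axiom (PDL1)--(PDL8) is valid under substitution and that the rules of modus ponens and necessitation preserve validity. Since the PDL theorems are the closure of these axioms under propositional reasoning, modus ponens and necessitation, an induction on the length of a PDL derivation then gives the result. Here an EDMon-substitution replaces propositional letters by arbitrary EDMon-formulas and atomic programs by arbitrary EDMon-mutations, with well-formedness required (so no EDMon-formula lands inside a temporal operator). For the propositional base, Theorem~\ref{taut} applies. For the rules, Theorem~\ref{rules} already gives modus ponens and $[\mu]$-necessitation, and substitution commutes with these rules.

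For the axioms, I would use the fact that $\langle\mu\rangle$ is interpreted by the relation $\llbracket\mu\rrbracket$ on points, via the standard relational clause. Moreover, $\llbracket\cdot\rrbracket$ respects the program constructors exactly as in PDL: $;$ goes to composition, $\cup$ to union, $^*$ to reflexive--transitive closure, and $^-$ to converse. This is built into the semantic clauses, and Corollary~\ref{liftingcoro} confirms it is coherent. Consequently, $\MM$ together with the family $\{\llbracket\mu\rrbracket\}$ and the valuation induced by the EDMon-formulas substituted for letters is a standard PDL$^-$ model. The key step is then a substitution lemma. Fix a substitution $\theta$ and define a PDL$^-$-model on the carrier $\Sigma^*\times\Sigma^\omega$, interpreting each atomic program $a$ by $\llbracket\theta(a)\rrbracket$ and each letter $p$ by the truth set of $\theta(p)$. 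I would prove, by induction on the PDL formula $F$, that $x\vDash_{\mathrm{PDL}}F$ iff $x\vDash\theta(F)$. The program cases need an auxiliary induction showing that the PDL interpretation of a compound program $\alpha$ equals $\llbracket\theta(\alpha)\rrbracket$, and this is immediate from the clauses.

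Given this lemma, each axiom instance is valid in EDMon because the axioms are sound in every standard PDL$^-$ model. The only cases needing any thought are the induction axiom (PDL6) and the converse axioms (PDL7)--(PDL8). (PDL6) follows from $\llbracket\mu^*\rrbracket$ being the union of the finite iterates, by induction on $k$. (PDL7)--(PDL8) hold because $\llbracket\alpha^-\rrbracket=\llbracket\alpha\rrbracket^{-1}$.

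I expect the main obstacle to be bookkeeping rather than mathematics. One issue is making precise which substitutions count as well-formed, given the stratified grammar; substituted formulas only occur under Boolean and mutation operators, so nothing lands under $\XX$ or $\UU$. The other is making sure the PDL necessitation rule is matched by Theorem~\ref{rules} for compound mutations, not only atomic ones. That holds since $[\mu]$-necessitation is stated there for arbitrary mutations $\mu$.
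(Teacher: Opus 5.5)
Your argument is correct for the substitutions you consider, and it takes a more uniform route than the paper, which checks a few axiom instances by hand and then appeals to Theorems~\ref{taut} and~\ref{rules}. The gap is one of scope: the paper's notion of a well-formed substitution is broader than yours. Besides mutations, the program $\alpha$ may also be replaced by $\obs$ or $\obs^*$ (in PDL1, PDL2, PDL5, PDL6) and by $\Dd$ (in PDL1, PDL2), wherever the result is grammatical. The paper's proof explicitly considers $\alpha=\obs$ and $\alpha=\Dd$ in PDL1, and $\alpha=\obs$ in PDL6. Your proposal never covers instances such as $\Dd(A\to B)\to(\Dd A\to\Dd B)$ or $[\obs^*](A\to[\obs]A)\to(A\to[\obs^*]A)$. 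These instances are actually used: later the paper obtains $\Dd A\to\Dd[\loss^-]A$ ``by R2 and PDL1'', which is precisely the $\alpha=\Dd$ instance.

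The repair fits your framework. In the auxiliary PDL$^-$-model, interpret an atomic program sent to $\obs$, $\obs^*$, or $\Dd$ by $\llbracket\obs\rrbracket$, $\llbracket\obs\rrbracket^*$, or $\dd$ respectively. Well-formedness forces such a program to occur only bare or under ${}^*$, and only bare in the case of $\Dd$. Since $\llbracket\obs^*\rrbracket=\llbracket\obs\rrbracket^*$ and $\Dd$ is exactly the box over $\dd$, your substitution lemma goes through unchanged. You should then drop the induction on derivation length, however. An arbitrary PDL derivation of a theorem whose instance is well-formed can pass through formulas containing $a^*$, $a;b$ or $a^-$. Under $a\mapsto\Dd$ or $a\mapsto\obs$ those intermediate instances are ill-formed, and Theorem~\ref{rules} does not even list $[\obs^*]$-necessitation. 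Instead, apply the substitution lemma directly to the theorem itself, using soundness of PDL$^-$ over all standard models. That version handles every theorem at once, and it avoids the closure step that the paper's reduction to axiom instances passes over quickly.
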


        Before proving this lemma, we'll note that there's a bit of sleight of hand going on here with `well-formed substitutions'. The point is that in the axioms for PDL, one can substitute any mutation for $\alpha$, can substitute both $\obs$ and $\obs^*$ for $\alpha$, and can substitute $\Dd$ whenever the result is grammatical and is a genuine substitution. Concretely, this means that all of these substitutions are permitted in (PDL1) and (PDL2), that only mutations are allowed in (PDL3), (PDL4), (PDL7) and (PDL8), and that mutations and observations are both allowed in PDL5 and PDL6. 
        \begin{proof}
            Given Theorems~\ref{taut} and \ref{rules}, it suffices to show that every well-formed instance of the above axioms is EDMon-valid. We provide the proofs for three of them and leave the rest to the reader.

            For PDL1, we consider three cases: $\alpha=\mu$ a mutation, $\alpha=\obs$, and $\alpha=\Dd$. For the first case, suppose $s, \sigma \vDash [\mu] (A \to B)$. To see that $s, \sigma \vDash [\mu] A \to [\mu] B$, suppose $s, \sigma \vDash [\mu] A$ and let $s \sim_{\widehat{\mu}} t$. Then $t, \sigma \vDash A \to B$ and $t, \sigma \vDash A$. So $t, \sigma \vDash B$ and thus $s, \sigma \vDash [\mu] A \to [\mu] B$. 

            For the second case, suppose $s, \sigma \vDash [\obs] (A \to B)$ and let $s, \sigma \vDash [\obs] A$. Then $s . \sigma_1, \sigma^1 \vDash A \to B$ and $s . \sigma_1, \sigma^1 \vDash A$. So $s . \sigma_1, \sigma^1 \vDash B$.

            For the third case, suppose $s, \sigma \vDash \Dd (A \to B)$ and let $s, \sigma \vDash \Dd A$. Choose $\tau \in \Sigma^\omega$. Then $s, \tau \vDash A \to B$ and $s, \tau \vDash A$. So $s, \tau \vDash B$.

            For PDL6, we consider: $\alpha = \mu$ a mutation and $\alpha = \obs$. But as was just witnessed in the PDL1 case, the two end up being essentially parallel so we consider only the first. Thus, suppose $s, \sigma \vDash [\mu^*] (A \to [\mu] A)$ and $s, \sigma \vDash A$. We show by induction that for all $k$, if $s \sim_{\widehat{\mu^k}} t$, then $t, \sigma \vDash A$. Since $s, \sigma \vDash A$, the base case is already established. Now suppose that for all $s \sim_{\widehat{\mu^k}} t$, $t, \sigma \vDash A$. Since $s, \sigma \vDash [\mu^*] (A \to [\mu] A)$, $t, \sigma \vDash A \to [\mu] A$. So since $t, \sigma \vDash A$, $t, \sigma \vDash [\mu] A$. So for all $u$ with $t \sim_{\widehat{\mu}} u$, $u, \sigma \vDash A$. Thus for all $u$ with $s \sim_{\widehat{\mu^{k+1}}} u$, $u, \sigma \vDash A$.

            For PDL8, the only well-formed options are the $\alpha = \mu$ options. So, suppose $s, \sigma \vDash A$ and choose $t$ so that $s \sim_{\mu^-} t$. Then $t \sim_{\mu} s$. So since $s, \sigma \vDash A$, $t, \sigma \vDash \langle \mu \rangle A$. Thus since $t$ was arbitrary, $s, \sigma \vDash [\mu^-]\langle\mu\rangle A$.
        \end{proof}

        \begin{coro}\label{somePDL}
            The following are EDMon-valid: $[(\mu^*)^-]A\otto[(\mu^-)^*]A$, $[(\mu_1 \cup \mu_2)^-]A \otto [\mu_1^-\cup\mu_2^-]A$, and $[(\mu_1; \mu_2)^-]A \otto [\mu_2^-; \mu_1^-]A$.
        \end{coro}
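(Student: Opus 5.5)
The plan is to reduce the three equivalences to identities between relations. By Corollary~\ref{easymutlem}, $s,\sigma\vDash[\mu]A$ iff $t,\sigma\vDash A$ for every $t$ with $s\sim_{\widehat{\mu}}t$. So $[\alpha]A\otto[\beta]A$ is valid as soon as $\widehat{\alpha}=\widehat{\beta}$. By Corollary~\ref{liftingcoro}, this is also the same as $\llbracket\alpha\rrbracket=\llbracket\beta\rrbracket$. It therefore suffices to prove three equalities of relations on $\Sigma_S^*$:
\begin{displaymath}
(\widehat{\mu}^*)^{-1}=(\widehat{\mu}^{-1})^*,\qquad
(\widehat{\mu_1}\cup\widehat{\mu_2})^{-1}=\widehat{\mu_1}^{-1}\cup\widehat{\mu_2}^{-1},\qquad
(\widehat{\mu_1}\circ\widehat{\mu_2})^{-1}=\widehat{\mu_2}^{-1}\circ\widehat{\mu_1}^{-1}.
\end{displaymath}
These use the recursive definition of $\widehat{-}$ given before Corollary~\ref{liftingcoro}.

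The union and composition identities follow by unfolding definitions. For union, $\langle x,y\rangle$ lies in the inverse of a union iff $\langle y,x\rangle$ lies in one of the two relations, which holds iff $\langle x,y\rangle$ lies in one of the two inverses. For composition, $\langle x,z\rangle\in(R_1\circ R_2)^{-1}$ iff there is $y$ with $z\sim_{R_1}y$ and $y\sim_{R_2}x$. That is exactly $x\sim_{R_2^{-1}}y$ and $y\sim_{R_1^{-1}}z$, i.e.\ $\langle x,z\rangle\in R_2^{-1}\circ R_1^{-1}$. Note the order reversal, which is why the corollary has $\mu_2^-;\mu_1^-$.

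For the star case, I first show $(R^k)^{-1}=(R^{-1})^k$ by induction on $k$. The base case is $\id^{-1}=\id$. The inductive step writes $R^{k+1}=R\circ R^k$ and uses the composition identity just proved, together with the fact that all powers of $R$ commute, so that $(R^k)^{-1}\circ R^{-1}=(R^{-1})^{k}\circ R^{-1}=(R^{-1})^{k+1}$. Then inverse distributes over the union $\bigcup_k R^k$ by the same argument as the binary union case, which gives $(R^*)^{-1}=\bigcup_k (R^{-1})^k=(R^{-1})^*$.

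There is an alternative route. The corollary is stated right after the PDL-substitution lemma, so one could derive it syntactically, since these equivalences are PDL$^-$-theorems. I would avoid this because it requires exhibiting PDL derivations, and the semantic route above is shorter and self-contained. The only point needing care is the star case: keeping track of the commuting of powers, or equivalently choosing to write $R^{k+1}=R^k\circ R$ so the induction goes through directly. I expect no genuine obstacle.
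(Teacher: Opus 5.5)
Your proof is correct, but it takes the route you explicitly set aside. The paper gives no separate argument for this corollary. It is presented as a consequence of the immediately preceding lemma that well-formed EDMon-substitutions of PDL$^-$-theorems are EDMon-valid: all three biconditionals are theorems of converse PDL, and substituting mutations for the program variables turns them into EDMon-validities.

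\begin{itemize}
\item \emph{The paper's route.} It is a one-liner once the lemma is available, and it fits the paper's theme of EDMon inheriting whole neighbouring logics. But it depends on two things the paper does not check. First, the biconditionals must actually be derivable from (PDL1)--(PDL8). That derivation needs an adjunction argument using (PDL7) and (PDL8), and for the star case also the induction axiom (PDL6). Second, it depends on the substitution lemma itself, which the paper verifies for only three of the axioms.
\item \emph{Your route.} It avoids both dependencies. It uses only Corollaries~\ref{liftingcoro} and~\ref{easymutlem} plus elementary relation algebra. It also proves something slightly stronger: the two sides of each biconditional have literally the same accessibility relation. In effect it is the soundness check that makes these PDL$^-$-theorems valid in the first place.
\end{itemize}

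One small imprecision in your star step: what you need is $(R^{-1})^k\circ R^{-1}=(R^{-1})^{k+1}$. That is just associativity of composition applied to powers of $R^{-1}$; no commutation of powers of $R$ is involved.
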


        \subsection{General Modal Features}

        \begin{theorem}
            $\dd$ is reflexive, as is $\llbracket\mu\rrbracket$ for all mutations $\mu$.
        \end{theorem}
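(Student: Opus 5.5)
The plan is to prove each half directly from the definitions. For $\dd$ this is immediate: $\dd = \id_{\Sigma_S^*} \times \nabla_{\Sigma_S^\omega}$. Given any point $\langle s,\sigma\rangle$ we have $s = s$ and $\langle \sigma,\sigma\rangle \in \nabla_{\Sigma_S^\omega}$, so $\langle s,\sigma\rangle \sim_{\dd} \langle s,\sigma\rangle$.

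For the mutations, Corollary~\ref{liftingcoro} gives $\llbracket\mu\rrbracket = \Lift(\widehat{\mu})$. Since $\Lift(R) = R\times\id_{\Sigma_S^\omega}$ and $\id_{\Sigma_S^\omega}$ is reflexive, it suffices to show that $\widehat{\mu}$ is reflexive on $\Sigma_S^*$, i.e.\ $\id_{\Sigma_S^*}\subseteq\widehat{\mu}$. By Lemma~\ref{RSLem}(1) and (2), this yields $\id_{\Sigma_S^*\times\Sigma_S^\omega} = \Lift(\id_{\Sigma_S^*}) \subseteq \Lift(\widehat{\mu}) = \llbracket\mu\rrbracket$.

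Reflexivity of $\widehat{\mu}$ is proved by induction on $\mu$.
\begin{itemize}
\item \emph{Atomic cases.} Each of $\widehat{\st}$, $\widehat{\loss}$, $\widehat{\ooo}$ and $\widehat{\cor}$ is defined by a disjunction whose first disjunct is ``$s = t$'', so each contains the identity. The same holds for $\widehat{\bloss}$ if it is counted among the atoms.
\item \emph{Choice.} If $\id \subseteq \widehat{\alpha}$, then $\id \subseteq \widehat{\alpha}\cup\widehat{\beta}$.
\item \emph{Composition.} If $\id \subseteq \widehat{\alpha}$ and $\id\subseteq \widehat{\beta}$, then $\langle s,s\rangle \in \widehat{\alpha}\circ\widehat{\beta}$, witnessed by the middle element $s$.
\item \emph{Star.} $\widehat{\alpha}^* \supseteq \widehat{\alpha}^0 = \id$, so this case needs no inductive hypothesis.
\item \emph{Converse.} The inverse of a relation containing the identity again contains the identity.
\end{itemize}

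No step is a real obstacle. The only point needing care is to run the induction on $\widehat{\mu}$ rather than on $\llbracket\mu\rrbracket$, so that the atomic clauses can be read off the definitions directly. Alternatively, one could induct on $\llbracket\mu\rrbracket$ itself; the same closure facts (union, composition, star, inverse preserve containing $\id$) apply verbatim.
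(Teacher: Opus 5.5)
Your proof is correct and follows the paper's route. Reflexivity of $\dd$ is immediate from its definition, and reflexivity of $\llbracket\mu\rrbracket$ goes by a straightforward induction on mutations. Running that induction on $\widehat{\mu}$ and transferring it through $\Lift$ via Corollary~\ref{liftingcoro} and Lemma~\ref{RSLem} just spells out details the paper leaves implicit.
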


        \begin{proof} 
            Immediate for $\dd$; by a straightforward induction on mutations for $\mu$.
        \end{proof}

        \begin{coro}
            All instances of $[\mu]A\to A$ and $\Dd A \to A$ are valid.
        \end{coro}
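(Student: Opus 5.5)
The corollary follows from the preceding theorem together with the standard correspondence between reflexive accessibility relations and the T axiom. I treat the two schemas in parallel, working in the equivalent form $A \to \langle\mu\rangle A$ (resp.\ $A\to\Pp A$) when convenient. Since $[\mu]A$ abbreviates $\neg\langle\mu\rangle\neg A$, the two forms are interchangeable by contraposition and Theorem~\ref{taut}.

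For the mutation case, fix a point $\langle s,\sigma\rangle$ with $s,\sigma\vDash[\mu]A$. By the semantic clause for $\langle\mu\rangle$, this says that $t,\tau\vDash A$ for every $\langle t,\tau\rangle$ with $\langle s,\sigma\rangle\sim_{\llbracket\mu\rrbracket}\langle t,\tau\rangle$. By the previous theorem $\llbracket\mu\rrbracket$ is reflexive, so $\langle s,\sigma\rangle$ is among these points, and hence $s,\sigma\vDash A$. Equivalently, via Corollary~\ref{easymutlem}, it suffices to have $s\sim_{\widehat\mu}s$. This holds because $\llbracket\mu\rrbracket=\Lift(\widehat\mu)$ (Corollary~\ref{liftingcoro}), and reflexivity of a lifted relation is equivalent to reflexivity of $\widehat\mu$ on $\Sigma^*_S$.

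For $\Dd$, suppose $s,\sigma\vDash\Dd A$. By Lemma~\ref{DandPLemma}, $s,\tau\vDash A$ for all $\tau\in\Sigma_S^\omega$. Taking $\tau=\sigma$ gives $s,\sigma\vDash A$, which is just reflexivity of $\dd=\id\times\nabla$.

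The only step with any content is the reflexivity of $\llbracket\mu\rrbracket$ for compound $\mu$, and the preceding theorem already supplies it. Were I to reprove it, I would argue by induction on $\mu$. The atomic relations $\widehat{\st},\widehat{\loss},\widehat{\ooo},\widehat{\cor}$ each include the clause ``$s=t$''. Reflexivity is preserved under composition, under union, and under inverse. It holds trivially for $R^*$, since $\id\subseteq R^*$. No genuine obstacle arises.
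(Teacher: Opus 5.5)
Your proof is correct and matches the paper's: the corollary is the standard T-axiom consequence of the preceding theorem that $\dd$ and the interpretation of every mutation are reflexive. Your sketched induction on $\mu$ (each atomic $\widehat{\mu}$ contains the diagonal, and the diagonal survives $;$, $\cup$, ${}^-$, and ${}^*$ since $R^0=\id$) is exactly the ``straightforward induction'' the paper leaves to the reader.
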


        \begin{theorem}\label{mutationFacts}
            All of the following hold:
            \begin{multicols}{2}\begin{enumerate}[(a)]
                \item $\widehat{\st^{-}} \subseteq \widehat\loss$,  $\widehat\st \subseteq \widehat{\loss^{-}}$
                \item $\widehat{\ooo^k} \subseteq \widehat{\cor^{2k}}$
                \item $\widehat\ooo = \widehat{\ooo^{-}}$, $\widehat\cor = \widehat{\cor^{-}}$
                \item $\widehat{\cor^k} \subseteq (\widehat\loss; \widehat{\loss^{-}})^k$
            \end{enumerate}\end{multicols}
        \end{theorem}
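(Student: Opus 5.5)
Each of the four clauses is a claim about relations on $\Sigma_S^*$ and can be checked from the defining clauses of $\widehat{\st}$, $\widehat{\loss}$, $\widehat{\ooo}$ and $\widehat{\cor}$, plus the convention (extended hat function) that $\widehat{\mu^-}=\widehat{\mu}^{-1}$ and $\widehat{\mu^k}=\widehat{\mu}^k$. I would take the clauses in the order (c), (a), (b), (d). In each case I would first handle the identity disjunct, which is trivial because every relation involved is reflexive, and then the genuine mutation disjunct.

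For (c), $\langle s,t\rangle\in\widehat{\ooo}$ with $s=uxyv$ and $t=uyxv$ is witnessed in the other direction by swapping the roles of $x$ and $y$. So $\widehat{\ooo}$ is symmetric, which gives $\widehat{\ooo}=\widehat{\ooo}^{-1}=\widehat{\ooo^-}$. The same argument works for $\widehat{\cor}$ with $s=uxv$ and $t=uyv$.

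For (a), take $\langle s,t\rangle\in\widehat{\st^-}$, so $\langle t,s\rangle\in\widehat{\st}$. Then $t=uxv$ and $s=uxxv$, so $s=(ux)xv$ and $t=(ux)v$, and $\langle s,t\rangle\in\widehat{\loss}$. The second inclusion, $\widehat{\st}\subseteq\widehat{\loss^-}$, is the same fact after taking inverses.

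For (b), first show $\widehat{\ooo}\subseteq\widehat{\cor}^2$. For $uxyv\mapsto uyxv$, pass through $uyyv$: corrupt $x$ to $y$, then corrupt the second $y$ to $x$. Then induct on $k$ using monotonicity of composition:
\[
\widehat{\ooo}^{k+1}=\widehat{\ooo}\circ\widehat{\ooo}^k\subseteq\widehat{\cor}^2\circ\widehat{\cor}^{2k}=\widehat{\cor}^{2k+2}.
\]

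For (d), the analogous base case is $\widehat{\cor}\subseteq\widehat{\loss}\circ\widehat{\loss}^{-1}$. I would take the intermediate string to be $uv$: from $uxv$, dropping $x$ reaches $uv$, and $\langle uv,uyv\rangle\in\widehat{\loss}^{-1}$ because $\langle uyv,uv\rangle\in\widehat{\loss}$. The identity case uses the identity disjunct twice. The $k$-fold statement then follows by the same monotonicity induction, reading the displayed $(\widehat\loss;\widehat{\loss^-})^k$ as $(\widehat{\loss}\circ\widehat{\loss}^{-1})^k$.

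I expect no real obstacle here. The one place needing care is notation: the paper writes $\widehat{\ooo^k}$ and uses ``;'' on hatted relations, so I would state explicitly at the outset that these denote $\widehat{\ooo}^k$ and relational composition.
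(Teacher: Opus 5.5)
Your proposal is correct and takes essentially the same route as the paper: the same direct witnesses for (a) and (c), the same intermediate string $uv$ for (d), and an induction on $k$ from the one-step case for (b) and (d). The only difference is cosmetic: for the swap in (b) you pass through $uyyv$, whereas the paper passes through $uxxv$, and either choice works.
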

        
        \begin{proof} 
        For part (a), let $\langle s,t\rangle\in\widehat{\st^{-}}$. Therefore, there are $u,v\in\Sigma_S^*$ and $x\in\Sigma_S$ such that $s=uxxv$ and $t=uxv$. By definition, $\langle s,t\rangle\in\widehat\loss$. The second part is similar.  

        Part (b) is proved by induction on $k$. The inductive case is handled by the following observation. Suppose $\langle s,t\rangle\in\widehat\ooo$, so there are $u,v\in\Sigma_S^*$ and $x,y\in\Sigma_S$ such that $s=uxyv$ and $t=uyxv$. Then, $\langle uxyv,uxxv\rangle\in\widehat\cor$, and $\langle uxxv,uyxv\rangle\in\widehat\cor$, so $\langle s,t\rangle
        \in\widehat{\cor}^2$.

        Part (c) is immediate from the definitions. 

        Part (d) is proved by induction on $k$. The inductive case is handled by the following observation. Suppose $\langle s,t\rangle\in\widehat\cor$, where for some $u,v\in\Sigma_S^*$ and $x,y\in\Sigma_S$, $s=uxv$ and $t=uyv$. Then $\langle uxv, uv\rangle\in\widehat{\loss}$, $\langle uyv, uv\rangle\in\widehat{\loss}$, and $\langle uv, uyv\rangle\in\widehat{\loss^{-}}$, so 
        $\langle s, t\rangle\in(\widehat\loss; \widehat{\loss^{-}})$. 
%
        \end{proof}

        \begin{coro}
            All instances of each of the following are valid:
            \begin{multicols}{2}\begin{enumerate}[(a)]
                \item $[\loss] A \to [\st^-] A$; $[\loss^-] A \to [\st] A$
                \item $[\cor^{2k}] A \to [\ooo^k] A$
                \item $[\ooo] A \otto [\ooo^-] A$; $[\cor] A \otto [\cor^-] A$
                \item $[(\loss;\loss^-)^k] A \to [\cor^k] A$
            \end{enumerate}\end{multicols}
        \end{coro}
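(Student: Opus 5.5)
The plan is to derive each clause of the corollary from the matching clause of Theorem~\ref{mutationFacts}, using one general transfer principle. The principle: if $\widehat{\alpha}\subseteq\widehat{\beta}$, then $[\beta]A\to[\alpha]A$ is valid for every $A$.

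To prove it, suppose $s,\sigma\vDash[\beta]A$ and let $s\sim_{\widehat{\alpha}}t$. Then $s\sim_{\widehat{\beta}}t$. By Corollary~\ref{easymutlem}, taken in its dual box form ($s,\sigma\vDash[\mu]A$ iff $t,\sigma\vDash A$ for all $s\sim_{\widehat{\mu}}t$), we get $t,\sigma\vDash A$. Hence $s,\sigma\vDash[\alpha]A$. Equivalently, by Corollary~\ref{liftingcoro} and Lemma~\ref{RSLem}(1), $\widehat{\alpha}\subseteq\widehat{\beta}$ gives $\llbracket\alpha\rrbracket\subseteq\llbracket\beta\rrbracket$, and box modalities are antitone in their accessibility relations. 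An inclusion in both directions, i.e.\ an equality $\widehat{\alpha}=\widehat{\beta}$, yields the biconditional $[\alpha]A\otto[\beta]A$.

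With this in hand, the clauses follow directly. For (a), apply the principle to the two inclusions in Theorem~\ref{mutationFacts}(a), with $\alpha=\st^-$, $\beta=\loss$ and with $\alpha=\st$, $\beta=\loss^-$. For (c), apply it in both directions to the equalities in part (c).

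For (b) and (d) there is one bookkeeping step, which I expect to be the only real obstacle. The syntactic abbreviation $\mu^{k}$ (defined by $\mu^1:=\mu$, $\mu^{k+1}:=\mu;\mu^k$) must be matched with the relational power $\widehat{\mu}^k$ appearing in Theorem~\ref{mutationFacts}. By the recursive hat clause $\widehat{\mu_1;\mu_2}=\widehat{\mu_1}\circ\widehat{\mu_2}$, a routine induction on $k$ gives $\widehat{\mu^k}=\widehat{\mu}^k$. The case $k=0$ is handled by the stipulation that $\widehat{\mu^0}$ is the identity, so that $\langle\mu^0\rangle A=A$ is consistent with this.

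Similarly, $(\widehat{\loss};\widehat{\loss^-})^k$ in part (d) is read as $\widehat{(\loss;\loss^-)^k}$, using $\widehat{\loss^-}=\widehat{\loss}^{-1}$. Once these identifications are made, parts (b) and (d) of Theorem~\ref{mutationFacts} are inclusions of exactly the form the principle needs. This yields $[\cor^{2k}]A\to[\ooo^k]A$ and $[(\loss;\loss^-)^k]A\to[\cor^k]A$.
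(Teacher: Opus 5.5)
Your proposal is correct and is essentially the paper's argument. The paper states this corollary as an immediate consequence of Theorem~\ref{mutationFacts}, relying tacitly on the antitonicity of $[\cdot]$ in the accessibility relation that you make explicit via Corollary~\ref{easymutlem}. It also leaves implicit the identification of the syntactic power $\mu^k$ with $\widehat{\mu}^k$ and the reading of $(\widehat\loss;\widehat{\loss^-})^k$ as $\widehat{(\loss;\loss^-)^k}$, both of which you spell out.
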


Next, we will turn to some results building to a result concerning the relation $\llbracket(\loss \cup \cor \cup \st)^*\rrbracket$, which we discussed in \S\ref{classes_sec}: 
        \begin{lemma}\label{Lemma:LossRelation}
        For all $s\in\Sigma_S^*$, $\langle s,\varepsilon\rangle\in\widehat{\loss}^*$
        \end{lemma}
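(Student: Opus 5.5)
The plan is to argue by induction on the length $|s|$, showing that $\langle s,\varepsilon\rangle\in\widehat{\loss}^{|s|}$. Since $\widehat{\loss}^*$ is the union of all $\widehat{\loss}^k$ for $0\leq k<\infty$, this gives the lemma at once.

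For the base case, $|s|=0$ means $s=\varepsilon$. By convention $\widehat{\loss}^0$ is the identity relation $\id_{\Sigma_S^*}$, so $\langle\varepsilon,\varepsilon\rangle\in\widehat{\loss}^0\subseteq\widehat{\loss}^*$.

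For the inductive step, take $s$ with $|s|=n+1$ and write $s=x.s'$ with $x\in\Sigma_S$ and $|s'|=n$. Choosing $u=\varepsilon$ and $v=s'$ in the definition of $\widehat{\loss}$ gives $\langle x.s', s'\rangle\in\widehat{\loss}$. By the inductive hypothesis, $\langle s',\varepsilon\rangle\in\widehat{\loss}^n$. Composing these two facts yields $\langle s,\varepsilon\rangle\in\widehat{\loss}\circ\widehat{\loss}^n$.

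The only point needing any care is bookkeeping: the paper defines $R^k$ as the $k$-fold composition without fixing which side the extra factor is added on. So I will check that $\widehat{\loss}\circ\widehat{\loss}^n=\widehat{\loss}^{n+1}$, which holds by associativity of relational composition. Alternatively, I can peel the last state off instead, writing $s=s'.x$ with $u=s'$ and $v=\varepsilon$, so the step composes on the other side. Either choice works, and there is no real obstacle here.
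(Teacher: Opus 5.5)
Your proof is correct and takes the paper's approach: induction on $|s|$, removing one state with a single $\widehat{\loss}$ step and composing with the inductive hypothesis. The paper peels off the last state and gets its base case from reflexivity of $\widehat{\loss}$, and its printed step writes $\langle tx,x\rangle\in\widehat{\loss}$ where $\langle tx,t\rangle\in\widehat{\loss}$ is what's needed; your $u=s'$, $v=\varepsilon$ variant is exactly that corrected step.
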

        \begin{proof}
        The proof is by induction on $k=|s|$. For $k=0$, $\langle\varepsilon,\varepsilon\rangle\in\widehat{\loss}$, by definition. 

        For the inductive case, we have $s=tx$, for some $t\in\Sigma_S^*$ and $x\in\Sigma_S$. By the inductive hypothesis, $\langle t,\varepsilon\rangle\in\widehat{\loss}^*$, and $\langle tx,x\rangle\in\widehat{\loss}$, by definition. Therefore, $\langle s,\varepsilon\rangle\in\widehat{\loss}^*$, as desired.
        \end{proof}

        \begin{lemma}\label{tdrop}
        For all $s,t\in\Sigma_S^*$, $\langle st,s\rangle\in\widehat{\loss}^*$
        \end{lemma}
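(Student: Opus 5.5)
The plan is to argue by induction on $k=|t|$, mirroring the proof of Lemma~\ref{Lemma:LossRelation} but deleting states from the right end of $st$ rather than reducing all the way to $\varepsilon$. Recall that $\widehat{\loss}^*$ is the union of all $\widehat{\loss}^k$, and $\widehat{\loss}^0$ is the identity; so it is closed under composition and contains the identity.

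For the base case $k=0$, we have $t=\varepsilon$, so $st=s$. Then $\langle s,s\rangle\in\widehat{\loss}^0\subseteq\widehat{\loss}^*$; alternatively, $\langle s,s\rangle\in\widehat{\loss}$ directly, since $\widehat{\loss}$ is reflexive by definition.

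For the inductive step, write $t=t'x$ with $x\in\Sigma_S$ and $|t'|=k$. Taking $u=st'$ and $v=\varepsilon$ in the definition of $\widehat{\loss}$ shows that $\langle st'x, st'\rangle\in\widehat{\loss}$. By the inductive hypothesis, $\langle st',s\rangle\in\widehat{\loss}^*$, so $\langle st',s\rangle\in\widehat{\loss}^m$ for some $m$. Composing the two gives $\langle st,s\rangle\in\widehat{\loss}^{m+1}\subseteq\widehat{\loss}^*$. Here the order of composition must match the paper's convention for $R_1\circ R_2$, i.e.\ first $R_1$, then $R_2$, and that is the case here.

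There is no real obstacle. The only point requiring care is bookkeeping: one should explicitly note that $\widehat{\loss}\circ\widehat{\loss}^m=\widehat{\loss}^{m+1}$, so that the union defining $\widehat{\loss}^*$ absorbs the extra step.

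An alternative route avoids induction. By Lemma~\ref{Lemma:LossRelation}, $\langle t,\varepsilon\rangle\in\widehat{\loss}^*$. Lemma~\ref{concatLemma} only gives right-appending, but the definition of $\widehat{\loss}$ is visibly invariant under left-prepending $s$ to $u$, so a one-line analogue would lift this to $\langle st,s\rangle\in\widehat{\loss}^*$. The direct induction is simpler, so I would present that.
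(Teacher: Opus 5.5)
Your proposal is correct and follows the paper's proof: induction on $|t|$, with the inductive step deleting the final state $x$ of $t$ via a single $\widehat{\loss}$ step, namely $\langle st'x, st'\rangle$, and then composing with the inductive hypothesis $\langle st', s\rangle \in \widehat{\loss}^*$. Your explicit check that the order of composition matches the paper's convention for $\circ$ is a small addition the paper leaves implicit.
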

        \begin{proof}
        The proof is by induction on $|t|$. If $|t|=0$, then the result is immediate.

        Suppose $t=ux$, for some $u\in\Sigma_S^*$ and $x\in\Sigma_S$. 
        By the inductive hypothesis, 
        $\langle su, s\rangle\in\widehat{\loss}^*$. As 
        $\langle sux,su\rangle\in\widehat{\loss}^*$, it follows that 
        $\langle st, s\rangle\in\widehat{\loss}^*$, as desired.
    
        \end{proof}
        
       \begin{lemma}\label{Lemma:CorruptingRel}
        For all $s,t\in\Sigma_S^*$, if $|s|=|t|$, then $\langle s,t\rangle\in\widehat{\cor}^*$
        \end{lemma}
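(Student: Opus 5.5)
The plan is an induction on the common length $k=|s|=|t|$, in the style of Lemmas~\ref{Lemma:LossRelation} and \ref{tdrop}. For each step I will use one application of $\widehat{\cor}$ that replaces a single position, together with the fact that $\widehat{\cor}^*$ is closed under composition.

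For the base case, $k=0$. Then $s=t=\varepsilon$, and $\langle\varepsilon,\varepsilon\rangle\in\widehat{\cor}$ holds by the ``$s=t$'' clause of the definition, so it lies in $\widehat{\cor}^*$. (Alternatively, $\langle\varepsilon,\varepsilon\rangle\in\widehat{\cor}^0=\id$.)

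For the inductive step, suppose $|s|=|t|=k+1$. Write $s=s'x$ and $t=t'y$ with $x,y\in\Sigma_S$ and $|s'|=|t'|=k$. By the inductive hypothesis, $\langle s',t'\rangle\in\widehat{\cor}^*$. Lemma~\ref{concatLemma}, applied with the mutation $\cor^*$ (recall $\widehat{\cor^*}=\widehat{\cor}^*$) and $r=x$, then gives $\langle s'x,t'x\rangle\in\widehat{\cor}^*$. Next, taking $u=t'$ and $v=\varepsilon$ in the definition of $\widehat{\cor}$ gives $\langle t'x,t'y\rangle\in\widehat{\cor}$. Composing these two facts yields $\langle s,t\rangle\in\widehat{\cor}^*\circ\widehat{\cor}\subseteq\widehat{\cor}^*$.

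The only step needing any care is the appeal to Lemma~\ref{concatLemma} for the starred mutation. Its star case, though, is exactly what we need. If we prefer to avoid it, we can instead induct by peeling off the first letter: write $s=xs'$ and $t=yt'$. Then $\langle xs',ys'\rangle\in\widehat{\cor}$ directly, taking $u=\varepsilon$ and $v=s'$. It remains to push $\langle s',t'\rangle\in\widehat{\cor}^*$ under the prefix $y$. This is immediate, because each single $\widehat{\cor}$-step $uav\mapsto ubv$ is preserved when both sides are prefixed by $y$ (replace $u$ with $yu$), and so every chain of such steps is preserved as well. Either route is routine; I expect no real obstacle beyond this bookkeeping.
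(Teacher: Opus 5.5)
Your proof is correct and takes essentially the same route as the paper: induction on the common length, peeling off the last letter, extending $\langle s',t'\rangle\in\widehat{\cor}^*$ to $\langle s'x,t'x\rangle\in\widehat{\cor}^*$, and then corrupting the final position. The one difference is that you explicitly justify the extension step by applying Lemma~\ref{concatLemma} to $\cor^*$ (using $\widehat{\cor^*}=\widehat{\cor}^*$), which the paper leaves implicit.
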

        \begin{proof}
        The proof is by induction on $k=|s|$. For $k=0$, $\langle\varepsilon,\varepsilon\rangle\in\widehat{\cor}$, by definition. 

        For the inductive case, let $s=ux$ and $t=vy$, for some $x,y\in\Sigma_S$ and some $v,u\in\Sigma_S^*$ such that $|u|=|v|$. By the inductive hypothesis, 
        $\langle u,v\rangle\in\widehat{\cor}^*$, so 
        $\langle ux,vx\rangle\in\widehat{\cor}^*$. It then follows that $\langle ux,vy\rangle\in\widehat{\cor}^*$, as desired. 
        \end{proof}

        \begin{lemma}\label{Lemma:SequenceExtend}
        Let $s\in\Sigma_S^*$ be such that $|s|\geq 1$. For all $t\in\Sigma_S^*$, $\langle s, st\rangle\in(\widehat{\st}\cup\widehat{\cor})^*$.
        \end{lemma}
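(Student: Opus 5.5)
The plan is to first stutter the last letter of $s$ enough times to reach the length of $st$, and then corrupt the copies into $t$ using Lemma~\ref{Lemma:CorruptingRel}. Since $|s|\geq 1$, write $s = u x$ with $u\in\Sigma_S^*$ and $x\in\Sigma_S$; let $n=|t|$.

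The first step is to show, by induction on $n$, that $\langle ux, u x^{n+1}\rangle\in\widehat{\st}^*$, where $x^{n+1}$ denotes $n+1$ consecutive copies of $x$. The base case $n=0$ is the identity pair, which lies in $\widehat{\st}$ by definition (and in any case in $\widehat{\st}^0$). For the inductive step, $u x^{n+1} = (u x^{n}) x \varepsilon$ is stuttered to $(ux^n) x x \varepsilon = u x^{n+2}$, which is a single $\widehat{\st}$ step with prefix $ux^n$ and empty suffix. Composing this with the inductive hypothesis gives the claim.

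The second step uses $|x^{n}| = |t| = n$ together with Lemma~\ref{Lemma:CorruptingRel} to get $\langle x^n, t\rangle\in\widehat{\cor}^*$. We then need to prepend the fixed prefix $ux$ to both sides. Lemma~\ref{concatLemma} only gives closure under \emph{appending}, so we need the symmetric fact that $s'\sim_{\widehat{\mu}} t'$ implies $r.s'\sim_{\widehat{\mu}} r.t'$. For the atomic mutation $\cor$ this is immediate: we enlarge the $u$ in the definition to $r.u$. It then lifts to $\cor^*$ by composing step by step, exactly as in the star case of Lemma~\ref{concatLemma}. This yields $\langle u x x^n, u x t\rangle = \langle u x^{n+1}, st\rangle\in\widehat{\cor}^*$.

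Finally, $\widehat{\st}^*\circ\widehat{\cor}^*\subseteq(\widehat{\st}\cup\widehat{\cor})^*$, because each of $\widehat{\st}$ and $\widehat{\cor}$ is contained in the union and the star of a relation is closed under composition and monotone. Chaining the two steps therefore gives $\langle s, st\rangle\in(\widehat{\st}\cup\widehat{\cor})^*$.

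The main obstacle is the bookkeeping rather than any single hard step. One point is the prefix-closure fact, which is not stated earlier in the paper and must be checked briefly. The other is the role of $|s|\geq 1$: it is needed because $\st$ can only duplicate an existing letter, so when $s=\varepsilon$ the length can never grow.
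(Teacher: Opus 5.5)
Your proof is correct, but it is organized differently from the paper's. The paper inducts on $|t|$ and appends one letter of $t$ at a time. It duplicates the current last letter (a single $\widehat{\st}$ step with empty suffix) and immediately corrupts the new copy into the desired letter (a single $\widehat{\cor}$ step with empty suffix). It splits cases according to whether that last letter comes from $s$ (when $|t|=1$) or from the already-built part of $t$. Because every step acts at the right end of the string, the paper needs neither Lemma~\ref{Lemma:CorruptingRel} nor any closure property, and its witness alternates, lying in $(\widehat{\st}\circ\widehat{\cor})^{|t|}$.

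Your version separates the two mutations into phases. You first grow $s$ to the right length by stuttering its last letter $|t|$ times, and then fix the content of the new block via Lemma~\ref{Lemma:CorruptingRel}. This isolates exactly where $|s|\geq 1$ is used (only in the stuttering phase). It also reuses an existing lemma and gives the phase-separated witness $\langle s,st\rangle\in\widehat{\st}^*\circ\widehat{\cor}^*$, with all stutters before all corruptions. The price is the prefix-closure fact that $s'\sim_{\widehat{\cor^*}} t'$ implies $r.s'\sim_{\widehat{\cor^*}} r.t'$. This is the mirror image of Lemma~\ref{concatLemma}, and the paper never states it. Your check of it is correct and sufficient: enlarge the prefix in the atomic case, then compose along the chain.
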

        \begin{proof}
        Let $s\in\Sigma_S^*$ be such that $|s|\geq 1$. The proof is by induction on $|t|$. The case where $|t|=0$ is immediate.

        Suppose $|t|=k+1$, so $t=ux$, for some $u\in\Sigma_S^*$ such that $|u|=k$ and $x\in\Sigma_S$. If $k\geq 1$, then $u=vy$, for some $v\in\Sigma_S^*$ and $y\in \Sigma_S$.
        By the inductive hypothesis, $\langle s,svy\rangle\in(\widehat{\st}\cup\widehat{\cor})^*$. 
        It follows that $\langle s,svyy\rangle\in(\widehat{\st}\cup\widehat{\cor})^*$, so $\langle s,svyx\rangle\in(\widehat{\st}\cup\widehat{\cor})^*$, as desired.
        Suppose that $k=0$. Then since  $|s|\geq 1$, $s=wz$, for some $w\in\Sigma_S^*$ and $z\in\Sigma_S$. It follows that
        $\langle wz,wzz\rangle\in(\widehat{\st}\cup\widehat{\cor})^*$, so
        $\langle wz,wzx\rangle\in(\widehat{\st}\cup\widehat{\cor})^*$, which is 
        $\langle s,sx\rangle\in(\widehat{\st}\cup\widehat{\cor})^*$, as desired.
        \end{proof}

        \begin{lemma}\label{Lemma:AlmostUniversal}
        Let $s,t\in\Sigma_S^*$ be such that $|s|\geq 1$ and $|t|\geq 1$. Then, 
        $\langle s, t\rangle\in(\widehat{\st}\cup\widehat{\cor}\cup\widehat{\loss})^*$.
        \end{lemma}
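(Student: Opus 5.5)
The plan is to route every pair through a length-one intermediate sequence, combining Lemma~\ref{Lemma:SequenceExtend} and Lemma~\ref{tdrop}. Write $R := \widehat{\st}\cup\widehat{\cor}\cup\widehat{\loss}$. Each of $\widehat{\st}\cup\widehat{\cor}$ and $\widehat{\loss}$ is a subset of $R$, so their reflexive transitive closures are contained in $R^*$. Since $R^*$ is closed under composition, it suffices to find a chain $s \sim u \sim t$ in which each step lies in one of these closures.

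First, since $|s|\geq 1$, write $s = x.s'$ with $x\in\Sigma_S$. By Lemma~\ref{tdrop}, taking the kept prefix to be $x$ and the dropped suffix to be $s'$, we get $\langle x.s', x\rangle \in \widehat{\loss}^* \subseteq R^*$. So $s$ reaches the one-element sequence $x$.

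Second, we move from $x$ to a one-element sequence matching $t$. Since $|t|\geq 1$, write $t = y.t'$ with $y\in\Sigma_S$. The pair $\langle x, y\rangle$ is in $\widehat{\cor}$ directly, taking $u=v=\varepsilon$ in the definition; alternatively, Lemma~\ref{Lemma:CorruptingRel} applies because $|x|=|y|=1$. So $\langle x,y\rangle\in R^*$.

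Third, apply Lemma~\ref{Lemma:SequenceExtend} to the sequence $y$, which has length $1\geq 1$, with extension $t'$. This gives $\langle y, y.t'\rangle = \langle y, t\rangle \in (\widehat{\st}\cup\widehat{\cor})^* \subseteq R^*$. Chaining the three steps gives $\langle s,t\rangle\in R^*$.

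The only point that needs care is the monotonicity and transitivity of $(\cdot)^*$: if $R_1\subseteq R_2$ then $R_1^*\subseteq R_2^*$, and $R^*\circ R^*\subseteq R^*$. Both are immediate from the definition of $R^*$ as $\bigcup_k R^k$. The other thing to check is that I apply Lemma~\ref{tdrop} with the right orientation, deleting the suffix rather than the prefix. There is no real obstacle here; the nonemptiness hypotheses are needed only so that the intermediate sequence has length $1$ and Lemma~\ref{Lemma:SequenceExtend} applies.
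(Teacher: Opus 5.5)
Your proof is correct, but it takes a different route from the paper's.

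\textbf{The paper's route.} It splits on whether $|s|\leq|t|$ or $|s|>|t|$, and passes through an intermediate sequence of length $\min(|s|,|t|)$.
\begin{itemize}
\item If $|s|\leq|t|$, it writes $t=uv$ with $|u|=|s|$. It corrupts $s$ into $u$ by Lemma~\ref{Lemma:CorruptingRel}, then grows $u$ into $t$ by Lemma~\ref{Lemma:SequenceExtend}.
\item If $|s|>|t|$, it writes $s=uv$ with $|u|=|t|$. It truncates to $u$ by Lemma~\ref{tdrop}, then corrupts $u$ into $t$ by Lemma~\ref{Lemma:CorruptingRel}.
\end{itemize}

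\textbf{Your route.} You always collapse to a single state. This removes the case split. It also makes Lemma~\ref{Lemma:CorruptingRel}, and the concatenation property of $\widehat{\cor}^*$ its proof quietly uses, unnecessary: one $\widehat{\cor}$-step with $u=v=\varepsilon$ suffices. As a bonus, it shows that every pair of nonempty sequences lies in one fixed composite, $\widehat{\loss}^*\circ\widehat{\cor}\circ(\widehat{\st}\cup\widehat{\cor})^*$.

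\textbf{What the paper's version buys.} It is more economical. It uses no loss when $|s|\leq|t|$ and no stutter when $|s|>|t|$. Its chains are also never longer than yours, because it does not shrink everything down to one state and rebuild. For $|s|\leq|t|$, your chain has $|s|+2|t|-2$ steps against the paper's $2|t|-|s|$. This does not matter for membership in the reflexive--transitive closure. It would matter if one wanted explicit bounds on $k$ with $\langle s,t\rangle\in(\widehat{\st}\cup\widehat{\cor}\cup\widehat{\loss})^k$.
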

        \begin{proof}
        Let $s,t\in\Sigma_S^*$ be such that $|s|\geq 1$ and $|t|\geq 1$. Suppose $|s|\leq|t|$. 
        For some $u,v\in\Sigma_S^*$, $t=uv$, where $|u|=|s|$. 
        By Lemma \ref{Lemma:CorruptingRel}, 
        $\langle s, u\rangle\in(\widehat{\st}\cup\widehat{\cor}\cup\widehat{\loss})^*$. By Lemma \ref{Lemma:SequenceExtend}, 
        $\langle u, uv\rangle\in(\widehat{\st}\cup\widehat{\cor}\cup\widehat{\loss})^*$, so 
        $\langle s, uv\rangle\in(\widehat{\st}\cup\widehat{\cor}\cup\widehat{\loss})^*$, which is to say
        $\langle s, t\rangle\in(\widehat{\st}\cup\widehat{\cor}\cup\widehat{\loss})^*$, as desired.

        Suppose $|s|>|t|$. Then $s=uv$, for some $u,v\in\Sigma_S^*$ where $|u|=|t|$. By Lemma \ref{tdrop}, 
        $\langle uv, u\rangle\in(\widehat{\st}\cup\widehat{\cor}\cup\widehat{\loss})^*$. By Lemma \ref{Lemma:CorruptingRel},
        $\langle u, t\rangle\in(\widehat{\st}\cup\widehat{\cor}\cup\widehat{\loss})^*$, so 
        $\langle uv, t\rangle\in(\widehat{\st}\cup\widehat{\cor}\cup\widehat{\loss})^*$, which is to say
        $\langle s, t\rangle\in(\widehat{\st}\cup\widehat{\cor}\cup\widehat{\loss})^*$, which was to be proved. 
        \end{proof}
Let $\Sigma_S^+=\set{s\in\Sigma_S^*:|s|\geq 1}$. Then, $(\widehat{\st}\cup\widehat{\cor}\cup\widehat{\loss})^*$ is the universal relation on $\Sigma_S^+$. It is not the universal relation on $\Sigma_S^*$, in fact it is not even an equivalence relation, since symmetry can fail. For every $s\in\Sigma_S^*$, $\langle s,\varepsilon\rangle\in(\widehat{\st}\cup\widehat{\cor}\cup\widehat{\loss})^*$, but when $s\neq\varepsilon$,
$\langle \varepsilon,s\rangle\not\in(\widehat{\st}\cup\widehat{\cor}\cup\widehat{\loss})^*$. If instead of $\widehat{\st}$, we used the relation that simply inserted an element, rather than duplicating an existing element, then we would have the universal relation, rather than something merely close to it. Let us make this more precise. 
        \begin{theorem}\label{Theorem:AlmostUnivReln}
           For all $s,t\in\Sigma_S^*$, if $|s|\geq 1$, then $\langle s,t\rangle \in(\widehat{\st}  \cup \widehat{\cor} \cup \widehat{\loss})^*$
        \end{theorem}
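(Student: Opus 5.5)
The plan is to split on whether $t$ is empty, since Lemma~\ref{Lemma:AlmostUniversal} already covers every case with $|s|\geq 1$ and $|t|\geq 1$. Throughout, write $R$ for $(\widehat{\st}\cup\widehat{\cor}\cup\widehat{\loss})^*$. Every relation of the form $Q^*$ contains $Q$ and is transitive, and $Q\subseteq Q'$ implies $Q^*\subseteq Q'^*$. So $\widehat{\loss}^*\subseteq R$, $\widehat{\cor}^*\subseteq R$ and $(\widehat{\st}\cup\widehat{\cor})^*\subseteq R$, and results stated for any of these smaller starred relations transfer to $R$. (This monotonicity and transitivity is also what the proof of Lemma~\ref{Lemma:AlmostUniversal} silently used.)

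\emph{Case $|t|\geq 1$.} Since $|s|\geq 1$ by hypothesis, this is exactly Lemma~\ref{Lemma:AlmostUniversal}, and nothing more is needed.

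\emph{Case $t=\varepsilon$.} By Lemma~\ref{Lemma:LossRelation}, $\langle s,\varepsilon\rangle\in\widehat{\loss}^*$. Alternatively, Lemma~\ref{tdrop} with the empty prefix gives the same. By the inclusion $\widehat{\loss}^*\subseteq R$ noted above, $\langle s,\varepsilon\rangle\in R$. This case does not even use $|s|\geq 1$, which matches the remark preceding the theorem that every $s$ is $R$-related to $\varepsilon$.

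There is no real obstacle here. The only point worth stating carefully is the monotonicity of the star operator, so that the lemmas about the smaller relations transfer to $R$. The hypothesis $|s|\geq 1$ is needed only in the first case, because it is what allows Lemma~\ref{Lemma:SequenceExtend} to grow a sequence by stuttering and then corrupting. This matches the observation that $\langle\varepsilon,t\rangle\notin R$ for $t\neq\varepsilon$, since none of the three generators can lengthen the empty sequence.
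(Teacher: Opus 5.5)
Your proof is correct and matches the paper's argument: the paper also splits on whether $t=\varepsilon$, citing Lemma~\ref{Lemma:LossRelation} for the empty case and Lemma~\ref{Lemma:AlmostUniversal} for $|t|\geq 1$. Your explicit note that $\widehat{\loss}^*\subseteq(\widehat{\st}\cup\widehat{\cor}\cup\widehat{\loss})^*$ by monotonicity of the star states a step the paper leaves implicit.
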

        \begin{proof}
        Suppose $s,t\in\Sigma_S^*$ and $|s|\geq 1$. If $|t|=0$, then by Lemma \ref{Lemma:LossRelation}, 
        $\langle s,t\rangle\in(\widehat{\st}\cup\widehat{\cor}\cup\widehat{\loss})^*$. If $|t|\geq 1$, then by Lemma \ref{Lemma:AlmostUniversal}, 
        $\langle s,t\rangle\in(\widehat{\st}\cup\widehat{\cor}\cup\widehat{\loss})^*$.
        \end{proof}

        As discussed above, the mutation $(\st\cup\cor\cup\loss)^*$ plays a distinguished role in the work of \cite{kauffman2021can}. We have (slightly) generalized their result, and below we will show that the modal version of this mutation has {S5}-like axioms. First, we will need an auxiliary result:

        \begin{lemma}
            For $k\geq 0$, $(\Dd \XX^k p \lor \Dd \XX^k\neg p)$ is true at exactly the points $\langle s,\sigma\rangle$ with $|s|>k$. 
        \end{lemma}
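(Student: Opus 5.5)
The proof splits into the two directions of the biconditional, and both rest on one fact: $\XX^k p$ is evaluated at the $(k+1)$st entry of the combined trace. First I would show, by induction on $k$ using the $\XX$ clause and Lemma~\ref{shifthead}, that $s,\sigma\vDash\XX^k p$ iff $\shift^k(s,\sigma)\vDash p$. By Lemma~\ref{shifthead} and the definition of $\head$, this holds iff $p\in\trace(s,\sigma)(k+1)$. Alternatively, this is immediate from Lemma~\ref{LTLSat} together with the standard LTL reading of $\XX^k$. By Lemma~\ref{DandPLemma}, $s,\sigma\vDash\Dd\XX^k p$ then holds iff $p\in(s.\tau)(k+1)$ for all $\tau\in\Sigma_S^\omega$. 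Similarly, $s,\sigma\vDash\Dd\XX^k\neg p$ holds iff $p\notin(s.\tau)(k+1)$ for all $\tau$.

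For the direction where $|s|>k$, the $(k+1)$st entry of $s.\tau$ is $s(k+1)$ whatever $\tau$ is, since $k+1\leq|s|$. So $p\in(s.\tau)(k+1)$ either for every $\tau$ or for none. In the first case $\Dd\XX^k p$ holds; in the second, $\Dd\XX^k\neg p$ holds. Either way the disjunction is true at $\langle s,\sigma\rangle$.

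For the direction where $|s|\leq k$, I would choose two continuations. Let $\tau_1$ be the sequence that is $\overline{\{p\}}$ from position $k+1-|s|$ onward; concretely, take $\tau_1=\overline{\{p\}}$ itself. Let $\tau_2=\overline{\emptyset}$. Because $|s|\leq k$, the $(k+1)$st entry of $s.\tau_i$ is $\tau_i(k+1-|s|)$, so it is $\{p\}$ for $i=1$ and $\emptyset$ for $i=2$. Hence $s,\tau_2\not\vDash\XX^k p$, which refutes $\Dd\XX^k p$, and $s,\tau_1\not\vDash\XX^k\neg p$, which refutes $\Dd\XX^k\neg p$. So the disjunction fails at $\langle s,\sigma\rangle$.

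The only step I expect to need care is the index bookkeeping in the preliminary fact, namely checking that $\trace(\shift^k(s,\sigma))(1)=\trace(s,\sigma)(k+1)$ via Lemma~\ref{shifthead} and the convention that $s^i$ starts at the $(i+1)$st term. This also covers the case $s=\varepsilon$, which falls under $|s|\leq k$. Everything else is a direct application of Lemma~\ref{DandPLemma}.
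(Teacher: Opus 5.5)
Your proof is correct and takes essentially the same route as the paper's. You both use Lemma~\ref{DandPLemma} to reduce the disjunction to a condition on the $(k+1)$st entry of $s.\tau$ (the paper writes this as $\head(\shift^k(s,\tau))$), observe that this entry is $s(k+1)$ regardless of $\tau$ when $|s|>k$, and refute both disjuncts with the witnesses $\overline{\emptyset}$ and $\overline{\{p\}}$ when $|s|\leq k$.
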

            \begin{proof}
                Moving quickly through the semantics, note that $s, \sigma \vDash \Dd \XX^k p \lor \Dd \XX^k \neg p$ iff either for all $\tau$, $\shift^k(s, \tau) \vDash p$ or for all $\tau$, $\shift^k(s, \tau)\vDash \neg p$. 

                If $|s|>k$, then $\head(\shift^k(s, \tau)) = s(k+1)$. So if $p\in s(k+1)$, then for all $\tau$, $\shift^k(s, \tau) \vDash p$ while if $p\not\in s(k+1)$ for all $\tau$, $\shift^k(s, \tau) \vDash \neg p$. So $s, \sigma \vDash \Dd \XX^k p \lor \Dd \XX^k \neg p$. 

                If $|s| = l \leq k$, then $\head(\shift^k(s,\tau)) = \tau(k-l+1)$. So if $\tau = \overline{\emptyset}$, then $\shift^k(s, \tau) \not\vDash p$ while if $\tau=\overline{\{p\}}$, $\shift^k(s, \tau) \not\vDash \neg p$. So it is not the case that all $\tau$, $\shift^k(s, \tau) \vDash p$ and it is not the case that all $\tau$, $\shift^k(s, \tau) \vDash \neg p$. So $s, \sigma \not\vDash \Dd \XX^k p \lor \Dd \XX^k \neg p$.
            \end{proof}

        It follows, of course, that $\neg(\Dd \XX^k p \lor \Dd \XX^k\neg p)$ is true at exactly the points $\langle s,\sigma\rangle$ with $|s|\leq k$. In particular, $\neg(\Dd p \lor \Dd\neg p)$ is true at exactly initial points, so we call it $\init$.

        \begin{theorem}
            All instances of the following are EDMon-valid:
            \begin{displaymath}
                \langle(\loss\cup\cor\cup\st)^*\rangle A\to [(\loss\cup\cor\cup\st)^*](\init \lor \langle(\loss\cup\cor\cup\st)^*\rangle A)
            \end{displaymath} 
        \end{theorem}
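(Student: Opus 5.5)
Write $\mu$ for the mutation $(\loss\cup\cor\cup\st)^*$. By Corollary~\ref{liftingcoro}, $\llbracket\mu\rrbracket = \Lift(\widehat{\mu})$ with $\widehat{\mu} = (\widehat{\loss}\cup\widehat{\cor}\cup\widehat{\st})^*$. Theorem~\ref{Theorem:AlmostUnivReln} then says that $\widehat{\mu}$ relates every nonempty $s$ to every $t$. The plan is a direct semantic argument. Fix a point $\langle s,\sigma\rangle$ with $s,\sigma\vDash\langle\mu\rangle A$. By Corollary~\ref{easymutlem}, there is $r$ with $s\sim_{\widehat{\mu}} r$ and $r,\sigma\vDash A$. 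Now take an arbitrary $t$ with $s\sim_{\widehat{\mu}} t$; again by Corollary~\ref{easymutlem}, these are exactly the $\llbracket\mu\rrbracket$-successors $\langle t,\sigma\rangle$ of $\langle s,\sigma\rangle$. We must show $t,\sigma\vDash \init\lor\langle\mu\rangle A$.

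We split into two cases on $|t|$. If $t=\varepsilon$, we use the lemma just proved about $\Dd\XX^k p\lor\Dd\XX^k\neg p$ with $k=0$. Since $|t|=0\not>0$, the point $\langle t,\sigma\rangle$ falsifies $\Dd p\lor\Dd\neg p$, so it satisfies $\init$ and the disjunction holds. If $|t|\geq 1$, Theorem~\ref{Theorem:AlmostUnivReln} gives $t\sim_{\widehat{\mu}} r$, since $t$ is nonempty and $r$ is arbitrary. Combined with $r,\sigma\vDash A$ and Corollary~\ref{easymutlem}, this yields $t,\sigma\vDash\langle\mu\rangle A$. Since $t$ was arbitrary, $s,\sigma\vDash[\mu](\init\lor\langle\mu\rangle A)$, which establishes validity.

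The only delicate point is the empty-prefix case. There $\widehat{\mu}$ genuinely fails to reach $r$: from $\varepsilon$ only $\varepsilon$ is reachable, since $\st$, $\loss$ and $\cor$ all need an existing state to act on. This is exactly why the $\init$ disjunct is present, and it is why the $k=0$ instance of the preceding lemma is needed to certify that $\varepsilon,\sigma\vDash\init$. We also need to check that the $\init$ there is evaluated with an atom $p$ actually in the language $S$. Beyond this, the argument rests only on the facts that mutations leave $\sigma$ unchanged (through $\Lift$) and that $\widehat{\mu}$ is universal on $\Sigma_S^+$.
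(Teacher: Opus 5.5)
Your proof is correct and is essentially the paper's argument, run directly rather than by reductio. The paper supposes a $\llbracket\mu\rrbracket$-successor $t$ falsifying both $\init$ and $\langle\mu\rangle A$, reads $|t|\neq 0$ off $\neg\init$, and then uses Theorem~\ref{Theorem:AlmostUnivReln} to contradict the witness for $\langle\mu\rangle A$, exactly as in your $|t|\geq 1$ case; the $p\in S$ point you flag is harmless, since EDMon-validity is $\At$-validity, so the witnesses $\overline{\emptyset}$ and $\overline{\{p\}}$ needed for $\init$ are always available.
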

        \begin{proof}
            We prove this via reductio. Assume $s,\sigma \vDash \langle(\loss\cup\cor\cup\st)^*\rangle A$ and 
            $s,\sigma \not\vDash [(\loss\cup\cor\cup\st)^*](\init \lor \langle(\loss\cup\cor\cup\st)^*\rangle A)$. So, there is $t$ such that $s \sim_{\widehat{(\loss\cup\cor\cup\st)^*}} t$ so that $t, \sigma \not\vDash \init$ and $t, \sigma \not\vDash \langle(\loss \cup \cor \cup \st)^*\rangle A$. Since $t, \sigma \not\vDash \init$, $|t| \neq 0$. Since $t, \sigma \not\vDash \langle(\loss \cup \cor \cup \st)^*\rangle A$, for all $u$, if $t \sim_{\widehat{(\loss\cup\cor\cup\st)^*}} u$, then $u, \sigma \not\vDash A$. 
            
            Since $|t|\neq 0$, it follows by Theorem~\ref{Theorem:AlmostUnivReln} that $u, \sigma \not\vDash A$ for all $u$. But since $s,\sigma \vDash \langle(\loss\cup\cor\cup\st)^*\rangle A$, there is some $v$ so that $v, \sigma \vDash A$, which is a contradiction.
        \end{proof}

        Thus, we see that the mutation $(\loss\cup\cor\cup\st)^*$ has S5-like axioms, here having demonstrated the validity of a 5-like axiom.
        Above we highlighted a difference that would emerge had we focused on a different mutation, namely an insertion rather than $\st$. Due to the richness of the language, we have this mutation available, it is $\loss^-$. This suggests that the mutation $(\loss\cup\loss^-)^*$ would be worth a look for comparison. 
        
        \begin{lemma}
        $\widehat{(\loss\cup\cor\cup\st)^*} \subseteq\widehat{(\loss\cup\loss^-)^*}$.
        \end{lemma}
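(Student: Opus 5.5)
The plan is to reduce the inclusion to a generator-level one and then use monotonicity of the star operation. Write $R=\widehat{\loss}\cup\widehat{\loss}^{-1}$. By the recursive definition of hats, $\widehat{(\loss\cup\loss^-)^*}=R^*$ and $\widehat{(\loss\cup\cor\cup\st)^*}=(\widehat{\loss}\cup\widehat{\cor}\cup\widehat{\st})^*$. Since $R^*$ is reflexive and transitive, and $(\cdot)^*$ is monotone, it suffices to show $\widehat{\loss}\subseteq R^*$, $\widehat{\cor}\subseteq R^*$ and $\widehat{\st}\subseteq R^*$. Then any finite chain of $\loss$, $\cor$ and $\st$ steps can be replaced, step by step, by a finite chain of $R$-steps, and concatenating these chains stays inside $R^*$.

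The three generator inclusions come straight from Theorem~\ref{mutationFacts}. First, $\widehat{\loss}\subseteq R\subseteq R^*$ is immediate. Second, by part (a), $\widehat{\st}\subseteq\widehat{\loss^-}=\widehat{\loss}^{-1}\subseteq R^*$. Concretely, $\langle uxv,uxxv\rangle$ is the inverse of the loss pair $\langle uxxv,uxv\rangle$. Third, by part (d) with $k=1$, $\widehat{\cor}\subseteq\widehat{\loss}\circ\widehat{\loss}^{-1}\subseteq R\circ R\subseteq R^*$. Concretely, we drop $x$ from $uxv$ to get $uv$, then insert $y$ to get $uyv$. The reflexive pairs ($s=t$) in each of $\widehat{\loss}$, $\widehat{\cor}$ and $\widehat{\st}$ are already in $R^0=\id$.

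To finish, I would let $\langle s,t\rangle\in(\widehat{\loss}\cup\widehat{\cor}\cup\widehat{\st})^k$ and induct on $k$. The case $k=0$ is the identity. In the inductive step the last link lies in one of the three generators, hence in $R^*$. Composing with the inductive hypothesis and using $R^*\circ R^*\subseteq R^*$ gives the claim. Taking the union over $k$ yields the lemma.

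I expect no genuine obstacle. The only care point is bookkeeping: one must confirm that the hat of a composite mutation really is the corresponding relational composite. This holds by the stated recursive definition of $\widehat{\cdot}$, and Corollary~\ref{liftingcoro} guarantees the same identity holds at the level of $\llbracket\cdot\rrbracket$ if one prefers to state the result there.
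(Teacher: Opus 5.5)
Your proof is correct and follows essentially the same route as the paper. Both arguments obtain the generator inclusions $\widehat{\st}\subseteq\widehat{\loss^-}$ and $\widehat{\cor}\subseteq\widehat{\loss}\circ\widehat{\loss^-}$ from Theorem~\ref{mutationFacts}(a) and (d), then close under star; you spell out the monotonicity and transitivity bookkeeping that the paper leaves implicit.
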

            \begin{proof}
            By Theorem \ref{mutationFacts}, $\widehat{\st}\subseteq\widehat{\loss^-}$  and $\widehat{\cor}^*\subseteq\widehat{(\loss;\loss^-)^*}\subseteq\widehat{(\loss\cup\loss^-)^*}$. Therefore,         $\widehat{(\loss\cup\cor\cup\st)^*} \subseteq\widehat{(\loss\cup\loss^-)^*}$, as desired.       
            \end{proof}
        \begin{coro}\label{Corollary:LossAlmostUniversal}
        For all $s,t\in\Sigma_S^*$, if $|s|\geq 1$, then $\langle s,t\rangle \in\widehat{(\loss\cup\loss^-)^*}$
        \end{coro}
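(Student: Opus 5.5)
The plan is to obtain this directly from Theorem~\ref{Theorem:AlmostUnivReln} together with the inclusion proved in the immediately preceding lemma. That theorem gives, for all $s,t\in\Sigma_S^*$ with $|s|\geq 1$, that $\langle s,t\rangle\in(\widehat{\st}\cup\widehat{\cor}\cup\widehat{\loss})^*$. So the only work is to identify this relation with $\widehat{(\loss\cup\cor\cup\st)^*}$ and then push it along the inclusion into $\widehat{(\loss\cup\loss^-)^*}$.

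The first step is to unfold the extended hat function. Its recursive clauses give $\widehat{(\loss\cup\cor\cup\st)^*}=(\widehat{\loss\cup\cor\cup\st})^*=(\widehat{\loss}\cup\widehat{\cor}\cup\widehat{\st})^*$. Union is commutative and associative, and the hat of a binary union is the union of the hats, so it does not matter how the triple union is bracketed. Hence Theorem~\ref{Theorem:AlmostUnivReln} says precisely that $\langle s,t\rangle\in\widehat{(\loss\cup\cor\cup\st)^*}$ whenever $|s|\geq 1$.

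The second step is to apply the preceding lemma, $\widehat{(\loss\cup\cor\cup\st)^*}\subseteq\widehat{(\loss\cup\loss^-)^*}$. This immediately gives $\langle s,t\rangle\in\widehat{(\loss\cup\loss^-)^*}$ for all $s,t$ with $|s|\geq 1$, which is the statement.

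There is no real obstacle here. The only point needing a moment's care is the bookkeeping identification of $(\widehat{\st}\cup\widehat{\cor}\cup\widehat{\loss})^*$ with $\widehat{(\loss\cup\cor\cup\st)^*}$ in the first step.

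If one wanted a self-contained argument instead, the fallback would be a direct two-stage construction. First, use Lemma~\ref{tdrop} repeatedly to delete every element of $s$ except its first, which is a chain of $\widehat{\loss}$ steps. Then, from that single state, insert the states of $t$ one at a time using $\widehat{\loss^-}$ steps, and finally delete the leftover initial state with a $\widehat{\loss}$ step if it is not needed. If $t=\varepsilon$, Lemma~\ref{Lemma:LossRelation} alone suffices. The hypothesis $|s|\geq1$ is not even needed in that route, since insertion into $\varepsilon$ is permitted by $\widehat{\loss^-}$. Still, the corollary form above is the intended one-line derivation.
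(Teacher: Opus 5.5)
Your main argument is the paper's own proof. The paper's proof reads in full ``By Theorem~\ref{Theorem:AlmostUnivReln} and the preceding lemma,'' and your unfolding of $\widehat{(\loss\cup\cor\cup\st)^*}$ as $(\widehat{\loss}\cup\widehat{\cor}\cup\widehat{\st})^*$ only spells out bookkeeping the paper leaves tacit. Your fallback is also sound: you get $\langle s,\varepsilon\rangle\in\widehat{\loss}^*$ from Lemma~\ref{Lemma:LossRelation} and $\langle\varepsilon,t\rangle\in(\widehat{\loss^-})^*$ by single insertions, so it proves the stronger Theorem~\ref{Theorem:UnivRelLoss} directly, with no detour through $\st$ and $\cor$.
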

        \begin{proof}
        By Theorem \ref{Theorem:AlmostUnivReln}  and the preceding lemma. 
        \end{proof}
        \begin{lemma}\label{Lemma:LossZerotoOne}
        For all $s\in\Sigma^*_S$ such that $|s|=1$, $\langle\varepsilon,s\rangle\in\widehat{(\loss\cup\loss^-)^*}$
        \end{lemma}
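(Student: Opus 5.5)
The idea is to exploit the converse of $\loss$ directly. Fix $s$ with $|s|=1$, so $s=x$ for some state $x\in\Sigma_S$. We will show that $\langle\varepsilon,x\rangle\in\widehat{\loss^-}$. This already suffices: $\widehat{\loss^-}\subseteq\widehat{\loss}\cup\widehat{\loss^-}=\widehat{\loss\cup\loss^-}$, and any relation $R$ satisfies $R=R^1\subseteq R^*$. Hence $\langle\varepsilon,x\rangle\in\widehat{(\loss\cup\loss^-)^*}$, since by definition $\widehat{(\loss\cup\loss^-)^*}=(\widehat{\loss}\cup\widehat{\loss}^{-1})^*$.

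The key step is therefore to check that $\langle x,\varepsilon\rangle\in\widehat{\loss}$. Unpacking the definition of $\widehat{\loss}$, we take $u=v=\varepsilon$ and the dropped state to be $x$. Then $x=u.x.v$ and $\varepsilon=u.v$, so the pair lies in $\widehat{\loss}$ as a genuine single-state deletion, not just via the identity clause. (Alternatively, this is the $|s|=1$ instance of Lemma~\ref{Lemma:LossRelation}, but that lemma only places the pair in $\widehat{\loss}^*$; the direct check gives the one-step membership, which is cleaner.) Inverting, $\langle\varepsilon,x\rangle\in\widehat{\loss}^{-1}=\widehat{\loss^-}$ by the hat-extension clause $\widehat{\mu^-}=\widehat{\mu}^{-1}$.

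There is no real obstacle here. The only point needing care is bookkeeping: the membership should be phrased at the level of the hat relations on $\Sigma_S^*$, using the recursive extension of $\widehat{\cdot}$ to composite mutations, rather than at the level of $\llbracket\cdot\rrbracket$. If a statement about $\llbracket\cdot\rrbracket$ is wanted, Corollary~\ref{liftingcoro} transfers it immediately. Combined with Corollary~\ref{Corollary:LossAlmostUniversal}, this lemma is presumably the step toward showing that $\widehat{(\loss\cup\loss^-)^*}$ is in fact universal on $\Sigma_S^*$: one goes from $\varepsilon$ to a length-one sequence and then on to anything.
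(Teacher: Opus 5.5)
Your proof is correct and matches the paper's argument: you check $\langle s,\varepsilon\rangle\in\widehat{\loss}$ directly (taking $u=v=\varepsilon$), invert to get $\langle\varepsilon,s\rangle\in\widehat{\loss^-}$, and then include this in $\widehat{(\loss\cup\loss^-)^*}$. Your explicit steps about $R\subseteq R^*$ and the hat-extension clauses only spell out what the paper leaves as ``by definition.''
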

        \begin{proof}
        For all $s\in \Sigma^*_S$ such that $|s|=1$, $\langle s,\varepsilon\rangle\in\widehat{\loss}$. Therefore, by definition, 
        $\langle \varepsilon,s\rangle\in\widehat{\loss^-}$, so 
        $\langle \varepsilon,s\rangle\in\widehat{(\loss\cup\loss^-)^*}$, as desired. 
        \end{proof}
        \begin{theorem}\label{Theorem:UnivRelLoss}
        For all $s,t\in\Sigma_S^*$,  $\langle s,t\rangle \in\widehat{(\loss\cup\loss^-)^*}$
        \end{theorem}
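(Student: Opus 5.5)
The plan is a case split on whether $s$ is empty, reducing everything to Corollary~\ref{Corollary:LossAlmostUniversal} and Lemma~\ref{Lemma:LossZerotoOne} together with transitivity of a reflexive–transitive closure. Write $R$ for $\widehat{(\loss\cup\loss^-)^*}$. By the recursive definition of the hat operation, $R=(\widehat{\loss}\cup\widehat{\loss}^{-1})^*$, so $R$ contains the identity and is closed under composition. These are the only structural facts about $R$ the argument will use.

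\emph{Case $|s|\geq 1$.} This is exactly Corollary~\ref{Corollary:LossAlmostUniversal}, which gives $\langle s,t\rangle\in R$ for every $t\in\Sigma_S^*$. Nothing further is needed here.

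\emph{Case $s=\varepsilon$.} If $t=\varepsilon$, the pair $\langle\varepsilon,\varepsilon\rangle$ lies in $R$ by reflexivity; alternatively, it already lies in $\widehat{\loss}$ by definition. If $t\neq\varepsilon$, I pick any $x\in\Sigma_S$ (say $x=\emptyset$). Lemma~\ref{Lemma:LossZerotoOne} gives $\langle\varepsilon,x\rangle\in R$, and since $|x|=1$, Corollary~\ref{Corollary:LossAlmostUniversal} gives $\langle x,t\rangle\in R$. Composing these two steps yields $\langle\varepsilon,t\rangle\in R$.

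The only point needing care is the transitivity step, namely that $R\circ R\subseteq R$. This holds because $R$ is a Kleene star, so $R^j\circ R^k=R^{j+k}$. Lemma~\ref{RSLem} and Corollary~\ref{liftingcoro} ensure that this computation at the level of hatted relations agrees with the paper's semantics for $*$ and $^-$. I expect no genuine obstacle: the content is the observation that $\loss^-$ can insert an arbitrary state into the empty trace, which $\st$ could not, and this repairs the asymmetry noted after Lemma~\ref{Lemma:AlmostUniversal}.
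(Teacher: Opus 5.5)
Your argument is correct and is the same as the paper's, which simply says the result is immediate from Lemma~\ref{Lemma:LossZerotoOne} and Corollary~\ref{Corollary:LossAlmostUniversal}; you have spelled out what the paper leaves implicit, namely the case split on $s$ and the composition $\varepsilon \to x \to t$ through a length-one trace. The appeal to Lemma~\ref{RSLem} and Corollary~\ref{liftingcoro} is unnecessary, since the statement concerns only the hatted relation $(\widehat{\loss}\cup\widehat{\loss}^{-1})^*$, which is transitive directly by the definition of $R^*$.
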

        \begin{proof}
        This is immediate from Lemma \ref{Lemma:LossZerotoOne} and Corollary \ref{Corollary:LossAlmostUniversal}.
        \end{proof}
        Thus, we establish that  $\widehat{(\loss\cup\loss^-)^*}$ is universal on $\Sigma^*_S$. As a consequence, it is immediate that the corresponding mutation obeys S5 principles.
        \begin{theorem}
        All instances of the following are EDMon-valid:
            \begin{displaymath}
                \langle(\loss\cup\loss^-)^*\rangle A\to [(\loss\cup\loss^-)^*] \langle(\loss\cup\loss^-)^*\rangle A
            \end{displaymath}     
        \end{theorem}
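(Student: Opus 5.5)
The plan is to reduce the claim to Theorem~\ref{Theorem:UnivRelLoss}, which says that $\widehat{(\loss\cup\loss^-)^*}$ is the universal relation on $\Sigma_S^*$. Write $\mu$ for $(\loss\cup\loss^-)^*$. By Corollary~\ref{liftingcoro}, $\llbracket\mu\rrbracket=\Lift(\widehat{\mu})=\nabla_{\Sigma_S^*}\times\id_{\Sigma_S^\omega}$. Two points are therefore $\mu$-related exactly when they share the same infinite coordinate $\sigma$. With this in hand, the modal clauses for $\langle\mu\rangle$ and $[\mu]$ collapse, via Corollary~\ref{easymutlem}, into quantification over all finite prefixes $t$ with $\sigma$ held fixed.

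First, I suppose $s,\sigma\vDash\langle\mu\rangle A$. By Corollary~\ref{easymutlem} there is some $v\in\Sigma_S^*$ with $s\sim_{\widehat{\mu}}v$ and $v,\sigma\vDash A$. Next, I take an arbitrary $t$ with $s\sim_{\widehat{\mu}}t$; mutations leave the $\sigma$-coordinate unchanged, so the point in question is $t,\sigma$. I need to show $t,\sigma\vDash\langle\mu\rangle A$. By Theorem~\ref{Theorem:UnivRelLoss}, $\langle t,v\rangle\in\widehat{\mu}$, and since $v,\sigma\vDash A$, Corollary~\ref{easymutlem} gives $t,\sigma\vDash\langle\mu\rangle A$. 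Because $t$ was arbitrary, $s,\sigma\vDash[\mu]\langle\mu\rangle A$, which proves the implication.

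I do not expect a real obstacle, since all the combinatorial work is already done in Theorem~\ref{Theorem:UnivRelLoss}. The one point needing care is that universality holds on the finite coordinate only. The relation $\llbracket\mu\rrbracket$ is not universal on points, because the $\sigma$-coordinate is fixed. The argument therefore has to keep $\sigma$ fixed throughout, and it can, since both the witness $v$ and the target $t$ are paired with the same $\sigma$. Unlike the previous theorem for $(\loss\cup\cor\cup\st)^*$, no $\init$ disjunct is needed, because Lemma~\ref{Lemma:LossZerotoOne} already takes care of the empty prefix.
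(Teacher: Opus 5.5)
Your proposal is correct and follows the paper's own proof, which simply appeals to the universality of $\widehat{(\loss\cup\loss^-)^*}$ on $\Sigma_S^*$ (Theorem~\ref{Theorem:UnivRelLoss}). You spell out the Euclidean argument explicitly, keeping the $\sigma$-coordinate fixed via Corollary~\ref{easymutlem}.
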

        \begin{proof}
        This follows from the fact that $\widehat{((\loss\cup\loss^-)^*)}$ is universal on $\Sigma^*_S$ as in Theorem \ref{Theorem:UnivRelLoss}.
        \end{proof}
        The logic of the $(\loss\cup\loss^-)^*$ mutation is, then, properly S5, unlike the logic of ${(\loss\cup\cor\cup\st)^*}$. Using EDMon, we have highlighted a small difference in mutations that results in a substantive logical difference. We think this further demonstrates the utility of EDMon, and suggests further investigation into the different combinations of mutations would be worthwhile. 

        \subsection{Specialized Validities}

        So far we've examined \emph{general} EDMon validities---validities that hold because of EDMon's relationship to other logics. Now we look at a few EDMon validities that hold because of its own peculiar features. In fact, we've already seen a few such results: 
        \begin{itemize}
            \item For $\phi\in\LL_{\LTL}$, $\phi\to\langle\obs^*\rangle \phi$ is EDMon-valid (Lemma \ref{LTLobsLemma})
            \item For all $A$ and all mutations $\mu$, $\langle\mu\rangle\langle\obs\rangle A\to \langle\obs\rangle\langle\mu\rangle A$ is EDMon-valid (Lemma~\ref{muobslemma}).
            \item For all $A$, $\Dd\langle\obs\rangle A\to\langle\obs\rangle\Dd A$ is EDMon-valid (Lemma~\ref{obsCommutativity}).
            \item For all $A$ and all mutations $\mu$, $\Dd[\mu]A\otto[\mu]\Dd A$ is EDMon-valid (Lemma~\ref{mucomm}). 
        \end{itemize}

        These are misleadingly simple. They have the form of axioms one is accustomed to seeing in other areas of modal logic. To get a sense for how much of a genuine zoo the set of EDMon validities is, the reader should remind themselves of the content of Theorems \ref{FFtheorem} through \ref{multichan}. Or, consider the following theorem:
        \begin{theorem}
            For all $A$ and $B$ in $\LL_{\LTL}$, the following is EDMon-valid:
            \begin{displaymath}
                (\Pp\langle\obs^*\rangle\Dd A \land \Pp\langle\obs^*\rangle\Dd B) \to \Pp\langle\obs^*\rangle(\Dd A\land\langle\cor^*\rangle\Dd B)
            \end{displaymath}
        \end{theorem}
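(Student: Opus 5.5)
The plan is to unpack both sides of the implication semantically, reduce everything to statements about finite prefixes, and then pad the prefix that determines $A$ until it has the same length as a prefix that determines $B$. Once the lengths match, Lemma~\ref{Lemma:CorruptingRel} says the two prefixes are $\widehat{\cor}^*$-related.

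First I unpack the modalities. By Lemma~\ref{DandPLemma} and Lemma~\ref{obsklemma}, $s,\sigma\vDash\Pp\langle\obs^*\rangle\Dd C$ holds iff there are $\tau$ and $k$ with $s.\tau_k,\tau^k\vDash\Dd C$. Again by Lemma~\ref{DandPLemma}, this holds iff there is a finite $u\in\Sigma_S^*$ such that $s.u,\rho\vDash C$ for all $\rho\in\Sigma_S^\omega$. Say that such an $s.u$ \emph{determines} $C$. Conversely, given any finite $u$ with $s.u$ determining $C$, I take $\tau=u.\overline{\emptyset}$ and $k=|u|$ to witness $\Pp\langle\obs^*\rangle\Dd C$ at $s,\sigma$. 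So the antecedent gives finite $u$ and $v$ with $s.u$ determining $A$ and $s.v$ determining $B$. To establish the consequent it suffices to find a finite $w$ such that:
\begin{itemize}
\item $s.u.w$ determines $A$, and
\item $s.u.w,\rho\vDash\langle\cor^*\rangle\Dd B$ for some (equivalently every) $\rho$.
\end{itemize}

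The key auxiliary fact is persistence of determination under extension for LTL formulas. If $r$ determines $C\in\LL_{\LTL}$, then $r.w$ determines $C$ for every finite $w$. To see this, take any $\rho$. Then $\trace(r.w,\rho)=\trace(r,w.\rho)$, so Lemma~\ref{LTLSat} transfers $r,w.\rho\vDash C$ to $r.w,\rho\vDash C$. This is the step where the hypothesis $A,B\in\LL_{\LTL}$ is used. It is also the main point to check carefully, since for general EDMon formulas the location of the prefix/suffix split matters.

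Now I do the padding. Let $w$ be any string of length $\max(0,|v|-|u|)$, so that $|s.u.w|\geq|s.v|$. Then let $w'$ be any string of length $|s.u.w|-|s.v|$. By persistence, $s.u.w$ determines $A$ and $s.v.w'$ determines $B$, and the two strings have equal length. Lemma~\ref{Lemma:CorruptingRel} gives $s.u.w\sim_{\widehat{\cor}^*}s.v.w'$. Since $\widehat{\cor^*}=\widehat{\cor}^*$, Corollary~\ref{easymutlem} then yields $s.u.w,\rho\vDash\langle\cor^*\rangle\Dd B$, because $s.v.w',\rho\vDash\Dd B$. Combining this with $s.u.w,\rho\vDash\Dd A$, I get $s.u.w,\rho\vDash\Dd A\land\langle\cor^*\rangle\Dd B$. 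Re-packing via Lemma~\ref{obsklemma} with $\tau=w.\rho$ and $k=|w|$, and then Lemma~\ref{DandPLemma}, gives $s,\sigma\vDash\Pp\langle\obs^*\rangle(\Dd A\land\langle\cor^*\rangle\Dd B)$ from $s$.
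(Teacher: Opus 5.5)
Your proposal is correct and follows essentially the same route as the paper. Both arguments extend the two determining prefixes to a common length using the prefix/suffix-invariance of LTL formulas (Lemma~\ref{LTLSat}), observe that equal-length strings are $\widehat{\cor}^*$-related, and repack through $\Dd$, $\langle\obs^*\rangle$, and $\Pp$; the paper pads with the witnessing sequences themselves and uses $\cor^{\max(k,l)}$ directly, where you cite Lemma~\ref{Lemma:CorruptingRel}. One small slip: in the final repacking the witness should be $\tau=u.w.\rho$ with $k=|u|+|w|$, not $\tau=w.\rho$ with $k=|w|$, since the point you need to reach is $s.u.w,\rho$.
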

        \begin{proof}
            Suppose $s, \sigma\vDash\Pp\langle\obs^*\rangle\Dd A$ and $s, \sigma\vDash\Pp\langle \obs^* \rangle \Dd B$. Then there are $\tau$, $\rho$, $k$, and $l$ so that $s . \tau_{k},\tau^{k} \vDash \Dd A$ and $s . \rho_{l},\rho^{l} \vDash \Dd B$. It follows that $s . \tau_{k},\beta \vDash A$ for all $\beta$ and $s . \rho_{l},\beta \vDash B$ for all $\beta$. So in particular, $s . \tau_{k},(\tau_{\max(k,l)})^{k} . \alpha \vDash A$ for all $\alpha$ and $s . \rho_{l},(\rho_{\max(k,l)})^{l} . \alpha \vDash B$ for all $\alpha$. Since $A$ and $B$ are in $\LL_{\LTL}$, it follows from Lemma~\ref{LTLSat} that $s . \tau_{k} .(\tau_{\max(k,l)})^{k} , \alpha \vDash A$ for all $\alpha$ and $s . \rho_{l} . (\rho_{\max(k,l)})^{l} , \alpha \vDash B$ for all $\alpha$. So $s . \tau_{\max(k,l)},\tau^{\max(k,l)} \vDash \Dd A$ and $s. \rho_{\max(k,l)},\tau^{\max(k,l)}\vDash \Dd B$. Next note that $s. \tau_{\max(k,l)} \sim_{\widehat{\cor^{\max(k,l)}}} s. \rho_{\max(k,l)}$. So since $s. \rho_{\max(k,l)},\tau^{\max(k,l)}\vDash \Dd B$, we have that $s . \tau_{\max(k,l)},\tau^{\max(k,l)} \vDash \langle\cor^*\rangle \Dd B$. So $s,\tau\vDash\langle\obs^*\rangle(\Dd A\land \langle\cor^*\rangle\Dd B)$. Thus $s, \sigma \vDash \Pp\langle\obs^*\rangle(\Dd A\land\langle\cor^*\rangle\Dd B)$.
        \end{proof}
        Note that this is, modally speaking, unusual behavior. We have here a way to factor paired diamond-shaped modalities out of a \emph{conjunction}, provided the result is mediated by a third diamond-shaped modality.

        \subsection{Putting the Validities to Work}

        \begin{figure}
            \begin{multicols}{2}
            \begin{enumerate}[({SV}1)]\setlength\itemsep{1ex}
                \item $\phi\to\langle\obs^*\rangle\phi$
                \item $\langle\mu\rangle \langle\obs\rangle  A \to \langle\obs\rangle \langle\mu\rangle A$
                \item $\Dd\langle\obs\rangle A\to \langle\obs\rangle\Dd A$
                \item $\Dd[\mu]A\otto[\mu]\Dd A$
            \end{enumerate}
            \begin{enumerate}[({LTL}1)]\setlength\itemsep{1ex}
                \item $\GG (\phi_1 \to \phi_2) \to (\GG \phi_1 \to \GG \phi_2)$
                \item $\XX (\phi_1 \to \phi_2) \to (\XX \phi_1 \to \XX \phi_2)$
                \item $\XX \neg \phi_1 \otto \neg \XX \phi_1$
                \item $\GG \phi_1 \otto (\phi_1 \land \XX \GG \phi_1)$
                \item $(\phi_2 \land \GG (\phi_2 \to (\phi_1 \land \XX \phi_2))) \to \GG \phi_1$
                \item $(\phi_1 \UU \phi_2) \otto (\phi_2 \lor (\phi_1 \land \XX (\phi_1 \UU \phi_2)))$
                \item $\GG ((\phi_2 \lor (\phi_1 \land \XX \phi_3)) \to \phi_3) \to ((\phi_1 \UU \phi_2) \to \phi_3)$
            \end{enumerate}
            \begin{enumerate}[({PDL}1)]\setlength\itemsep{1ex}
                \item $[\alpha] (A \to B) \to ([\alpha] A \to [\alpha] B)$
                \item $[\alpha] (A \land B) \otto ([\alpha] A \land [\alpha] B)$
                \item $[\alpha_1 \cup \alpha_2] A \otto ([\alpha_1] A \land [\alpha_2] A)$
                \item $[\alpha_1; \alpha_2] A \otto     [\alpha_1] [\alpha_2] A$
                \item $(A \land [\alpha] [\alpha^*] A) \otto [\alpha^*] A$
                \item $[\alpha^*] (A \to [\alpha] A) \to (A \to [\alpha^*] A)$
                \item $A \to [\alpha] \langle\alpha^-\rangle A$
                \item $A \to [\alpha^-] \langle\alpha\rangle A$
            \end{enumerate}
            \begin{enumerate}[({GM}1)]\setlength\itemsep{1ex}
                \item $[\mu] A \to A$
                \item $\Dd A \to A$
                \item $[\loss] A \to [\st^-] A$; $[\loss^-] A \to [\st] A$
                \item $[\cor^{2k}] A \to [\ooo^k] A$
                \item $[\ooo] A \otto [\ooo^-] A$; $[\cor] A \otto [\cor^-] A$
                \item $[(\loss;\loss^-)^k] A \to [\cor^k] A$
                \item $\langle(\loss \cup \cor \cup\st)^*\rangle A \to [(\loss \cup \cor \cup \st)^*](\init \lor \langle(\loss \cup \cor \cup \st)^*\rangle A)$
                \item $\langle(\loss\cup\loss^-)^*\rangle A\to [(\loss\cup\loss^-)^*] \langle(\loss\cup\loss^-)^*\rangle A$
            \end{enumerate}           
            \begin{enumerate}[({R}1)]\setlength\itemsep{1ex}
                \item \AxiomC{$A$} \AxiomC{$B$} \BinaryInfC{$A \land B$} \DisplayProof
                \item \AxiomC{$A \to B$} \AxiomC{$A$} \BinaryInfC{$B$} \DisplayProof
                \item \AxiomC{$A$} \UnaryInfC{$[\mu] A$} \DisplayProof
                \item \AxiomC{$A$} \UnaryInfC{$[\obs] A$} \DisplayProof
                \item \AxiomC{$A$} \UnaryInfC{$\Dd A$} \DisplayProof
                \item \AxiomC{$\phi$} \UnaryInfC{$\XX \phi$} \DisplayProof
                \item \AxiomC{$\phi$} \UnaryInfC{$\GG \phi$} \DisplayProof
            \end{enumerate}\end{multicols}
            \caption{A Few Axioms and Rules for EDMon. ($\phi$ and $\phi_i$ in $\LL_{\LTL}$)}\label{alltheaxioms}
        \end{figure}

        In Figure~\ref{alltheaxioms} we collect the axioms and rules we have shown EDMon to accept. In spite of EDMon-validity's zoo-like behavior that was just demonstrated, we would like to show that the particular family singled out here is nonetheless quite often useful. We will do so by verifying EDMon-analogues of a few results in the literature. Here is a nice example of such: Theorem 2 of \cite{kauffman2021can} says, in our vocabulary, that $A$ is true-false immune to $\mu$ iff $A$ is true-false immune to $\mu^*$. We can prove this as follows:
        \begin{theorem}\label{tfistar}
            $A$ is true-false immune to $\mu$ iff $A$ is true false immune to $\mu^*$.
        \end{theorem}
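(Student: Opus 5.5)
By Theorem~\ref{tfi}, it suffices to show that the four formulas $\Dd A \to [\mu] \Dd A$, $\Dd A \to [\mu^-] \Dd A$, $\Dd \neg A \to [\mu] \Dd \neg A$ and $\Dd \neg A \to [\mu^-] \Dd \neg A$ are all EDMon-valid iff the corresponding four formulas with $\mu^*$ in place of $\mu$ are. In keeping with the stated aim of this subsection, I will argue from the axioms and rules collected in Figure~\ref{alltheaxioms}. I will also use Corollary~\ref{somePDL} for the converse of a starred mutation. Throughout, $C$ ranges over $\Dd A$ and $\Dd\neg A$, since the argument is uniform in that choice.

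For the right-to-left direction, I first get $[\alpha^*]C \to [\alpha]C$ for any mutation $\alpha$. This comes from (PDL5), read left to right as $[\alpha^*]C \to [\alpha][\alpha^*]C$. Combined with (GM1) $[\alpha^*]C\to C$ under necessitation (R3) and (PDL1), it gives $[\alpha][\alpha^*]C\to[\alpha]C$. Taking $\alpha=\mu$ and chaining with the hypothesis $C\to[\mu^*]C$ gives $C\to[\mu]C$. For the converse-mutation formulas, the hypothesis is $C\to[(\mu^*)^-]C$. Corollary~\ref{somePDL} turns this into $C\to[(\mu^-)^*]C$, and the same argument with $\alpha=\mu^-$ yields $C\to[\mu^-]C$. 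Semantically this is just the inclusion $\widehat{\mu}\subseteq\widehat{\mu}^*$, but I will route it through the axioms to stay in the spirit of the section.

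For the left-to-right direction, I use the induction axiom (PDL6). Suppose $C\to[\mu]C$ is valid. By (R3) with $\mu^*$ we get $[\mu^*](C\to[\mu]C)$ valid. (PDL6) then gives $C\to[\mu^*]C$ by modus ponens (R2). For the converse case, start from $C\to[\mu^-]C$ and obtain $C\to[(\mu^-)^*]C$ the same way. Then rewrite it via Corollary~\ref{somePDL} as $C\to[(\mu^*)^-]C$. Doing this for both choices of $C$ supplies all four conditions of Theorem~\ref{tfi} for $\mu^*$.

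I do not expect any step to be a real obstacle. The only delicate point is the converse clauses of Theorem~\ref{tfi}: they are stated with $(\mu^*)^-$, while the star-induction axioms apply naturally to $(\mu^-)^*$. Corollary~\ref{somePDL} exactly bridges this gap. I also need to check that the substitutions used are well formed. (PDL5) and (PDL6) are used with mutation instances only, and $\Dd A$ and $\Dd\neg A$ are legitimate EDMon-formulas to substitute for the schematic $A$.
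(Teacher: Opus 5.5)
Your proposal is correct. Your left-to-right direction matches the paper's argument step for step: necessitation by R3, then PDL6 and R2, with Corollary~\ref{somePDL} converting between $(\mu^-)^*$ and $(\mu^*)^-$. The paper calls the right-to-left direction immediate, since it is just $\widehat{\mu}\subseteq\widehat{\mu}^*$. You instead derive it from PDL5, GM1, R3 and PDL1, and that derivation works. One labeling slip: the half of PDL5 you use, $[\alpha^*]C\to[\alpha][\alpha^*]C$, is its right-to-left direction as written, not left-to-right.
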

        \begin{proof}
            The `if' direction is immediate. For the `only if' direction, note by Theorem~\ref{tfi} that if $A$ is true-false immune to $\mu$, then $\Dd A \to [\mu] \Dd A$, $\Dd A \to [\mu^-] \Dd A$, $\Dd \neg A \to [\mu] \Dd \neg A$, and $\Dd \neg A \to [\mu^-] \Dd \neg A$ are all valid. It follows by R3 that $[\mu^*](\Dd A \to [\mu] \Dd A)$, $[(\mu^-)^*](\Dd A \to [\mu^-] \Dd A)$, $[\mu^*](\Dd \neg A \to [\mu] \Dd \neg A)$, and $[(\mu^-)^*](\Dd \neg A \to [\mu^-] \Dd \neg A)$ are all valid. So by R2, PDL6, and Corollary~\ref{somePDL}, $\Dd A \to [\mu^*] \Dd A$, $\Dd A \to [(\mu^*)^-] \Dd A$, $\Dd \neg A \to [\mu^*] \Dd \neg A$, and $\Dd \neg A \to [(\mu^*)^-] \Dd \neg A$ are all valid. So again by Theorem~\ref{tfi}, $A$ is true false immune to $\mu^*$. 
        \end{proof}

        Of course, a version of this result for trustworthy verdicts (which would correspond to Corollary 1 in \cite{kauffman2021can}) is immediately available by essentially the same proof:
        \begin{theorem}\label{twstar}
            If $v\in\{\top,\bot,?\}$ is trustworthy for $A$ with respect to $\mu$, then $v$ is trustworthy for $A$ with respect to $\mu^*$.
        \end{theorem}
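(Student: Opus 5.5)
Following the proof of Theorem~\ref{tfistar}, I will convert each trustworthiness claim into an EDMon-validity via Theorem~\ref{twtheorem}. Then I will lift that validity from $\mu$ to $\mu^*$ using the PDL machinery collected in Figure~\ref{alltheaxioms}. By Theorem~\ref{twtheorem}, each of the three cases reduces to one validity of the shape $C \to [\mu^-] C$. For $v=\top$ we take $C = \Dd A$; for $v=\bot$ we take $C = \Dd\neg A$; for $v = ?$ we take $C = \Pp A \land \Pp\neg A$. The same theorem, applied to the mutation $\mu^*$, says that $v$ is trustworthy for $A$ with respect to $\mu^*$ iff $C \to [(\mu^*)^-] C$ is valid. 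So it suffices to prove the following uniform claim: for any EDMon-formula $C$, if $C\to[\mu^-]C$ is valid, then so is $C \to [(\mu^*)^-]C$.

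The key steps are as follows. First, assume $C \to [\mu^-]C$ is valid. Applying R3 with the mutation $(\mu^-)^*$ yields the validity of $[(\mu^-)^*](C \to [\mu^-] C)$. Next, PDL6, instantiated with $\alpha = \mu^-$, is valid by the lemma on well-formed PDL substitutions. Two applications of R2 (modus ponens) then give $C \to [(\mu^-)^*] C$. Finally, Corollary~\ref{somePDL} gives $[(\mu^*)^-]C \otto [(\mu^-)^*]C$, and propositional reasoning (Theorem~\ref{taut} together with R1 and R2) yields $C \to [(\mu^*)^-]C$. Invoking Theorem~\ref{twtheorem} for $\mu^*$ then finishes the proof.

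The main point needing care is that Theorem~\ref{twtheorem} is stated for all mutations $\mu$, so it applies to $\mu^*$. The $?$ case also requires that $C$ need not be an LTL-formula; this is harmless, because R3, PDL6, and Corollary~\ref{somePDL} all hold for arbitrary EDMon-formulas $A$. Should the syntactic route be doubted, the fallback is a direct semantic induction on $k$. That induction shows: if $t \sim_{\widehat{\mu^k}} s$ and $t,\tau\vDash C$, then $s,\tau\vDash C$, using $\widehat{\mu^*} = \bigcup_k \widehat{\mu}^k$. Unlike the proof of Theorem~\ref{tfistar}, only one direction of immunity is involved here, so no appeal to the $[\mu]$-validities is needed.
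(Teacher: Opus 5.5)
Your proposal is correct and is essentially the paper's proof: the paper argues exactly as in Theorem~\ref{tfistar}, turning each verdict into a validity $C\to[\mu^-]C$ via Theorem~\ref{twtheorem} and lifting it to $\mu^*$ with R3, PDL6 at $\alpha=\mu^-$, R2 and Corollary~\ref{somePDL}. One small slip is in your optional semantic fallback: the induction should read ``if $s \sim_{\widehat{\mu^k}} t$ and $t,\tau\vDash C$ then $s,\tau\vDash C$'', because $C\to[\mu^-]C$ transfers $C$ back from a mutated string to its source, not forward to its mutations.
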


        We can also use the axioms just introduced to explain our previous insistence on $\bloss$ being a merely temporary addition to our vocabulary. Recall that we introduced $\bloss$ in order to express $\Tolr$ in EDMon-vocabulary: $A\in\Tolr$ iff $A\to[\bloss^-]A$ is EDMon valid. But by the proof of Theorem~\ref{tfistar}, we see that $A\to[\bloss^-]A$ is EDMon-valid iff $A\to[(\bloss^-)^*]A$ is EDMon-valid. But also, $\widehat{(\loss^-)^*}=\widehat{(\bloss^-)^*}$ because scattered insertions (what $(\loss^-)^*$ does) just are scattered block insertions of block-length 1 (which $(\bloss^-)^*$ can do) and $n$ scattered block insertions of lengths $k_1, \dots, k_n$ (the result of any single application of $(\bloss^*)$) can be accomplished by $\sum_{i=1}^n k_i$ applications of $\loss^-$, which $(\loss^-)^*$ allows. 
        
        So $A\to[(\bloss^-)^*]A$ is valid iff $A\to[(\loss^-)^*]A$ is valid. It follows that $A\to[\bloss^-]A$ is valid iff $A\to[(\loss^-)^*]A$ is valid iff $A\to[\loss^-]A$ is valid. Since the same chain of argumentation clearly works to show that $\neg A\to[\bloss^-]\neg A$ iff $\neg A\to[(\loss^-)^*]\neg A$, $\bloss$ can be dispensed with in the only two roles it played: characterizing $\Tolr$ and characterizing $\Perm$.

        Moving on, we provide a different demonstration of the value of the axioms. In \cite{kauffman2021can}, they state the following as their Theorems 8 and 9 (paraphrased and restricted to the cases of concern):
        \begin{itemize}
            \item $A\in\Tolr$ iff $\top$ is trustworthy for $A$ with respect to $\loss$.
            \item $A\in\Perm$ iff $\bot$ is trustworthy for $A$ with respect to $\loss$.
        \end{itemize}
        For each of these, only one direction of the biconditional is actually proved in \cite{kauffman2021can}. One suspects that the other direction was meant to be removed in editing, since in both cases the other direction is in fact false:
        \begin{theorem}
            $\top$ is (vacuously) trustworthy for $\GG p$, but $\GG p$ is not tolerant. $\bot$ is (vacuously) trustworthy for $\FF p$ but $\FF p$ is not permissive.
        \end{theorem}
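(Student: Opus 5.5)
The plan is to unpack trustworthiness via Theorem~\ref{twtheorem}, or directly via the definition: $\top$ is trustworthy for $A$ with respect to $\loss$ iff for all $s\sim_{\widehat{\loss}} t$, $\ev^{\cmon}(A,t)=\top$ implies $\ev^{\cmon}(A,s)=\top$. Equivalently, $\Dd A\to[\loss^-]\Dd A$ is valid. The $\bot$ case reads the same way with $\neg A$ in place of $A$. The word ``vacuously'' tells us what to check: $\GG p$ never receives verdict $\top$ at any finite prefix, and $\FF p$ never receives verdict $\bot$. So the antecedent $\Dd\GG p$ (respectively $\Dd\neg\FF p$) is false at every point.

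For the first claim I will fix $t\in\Sigma^*$ and exhibit $\tau=\overline{\emptyset}$. Then $\trace(t,\tau)$ contains a state without $p$, so $t,\overline{\emptyset}\not\vDash\GG p$. By Lemma~\ref{DandPLemma}, $t,\sigma\not\vDash\Dd\GG p$ for every $\sigma$, so $\ev^{\cmon}(\GG p,t)\neq\top$ by Theorem~\ref{pos_evc_theorem}. The trustworthiness condition therefore holds vacuously.

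For non-tolerance I will give a concrete counterexample to the $\Tolr$ definition in \S\ref{classes_sec}. Take $s=\varepsilon$, $\sigma=\overline{\{p\}}$ and $t=\emptyset$ (the one-element sequence). Then $s.\sigma\vDash_{\LTL}\GG p$, but $s.t.\sigma=\emptyset.\overline{\{p\}}\not\vDash_{\LTL}\GG p$.

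The second claim is dual. For any $t$, choose $\tau=\overline{\{p\}}$. Then $t,\tau\vDash\FF p$, so $\Dd\neg\FF p$ fails everywhere and $\ev^{\cmon}(\FF p,t)\neq\bot$ by Theorem~\ref{neg_evc_theorem}, which gives vacuous trustworthiness of $\bot$. For non-permissiveness, take $s=\varepsilon$, $\sigma=\overline{\emptyset}$ and $t=\{p\}$. Then $s.\sigma\not\vDash_{\LTL}\FF p$ but $s.t.\sigma\vDash_{\LTL}\FF p$, which violates the $\Perm$ condition.

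There is no real obstacle here. The only point needing care is that the vacuity argument must cover every $t$, including $t=\varepsilon$. The witnesses $\overline{\emptyset}$ and $\overline{\{p\}}$ are uniform in $t$, so this is handled.
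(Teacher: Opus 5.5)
Your proof is correct and follows the same route as the paper. The paper notes that $\neg\Dd\GG p$ is valid, so the $\top$-condition of Theorem~\ref{twtheorem} holds trivially; it then observes that inserting a non-$p$ state falsifies $\GG p$, and dualizes for $\FF p$. You carry out the same steps explicitly, using the uniform witnesses $\overline{\emptyset}$ and $\overline{\{p\}}$ for the vacuity and concrete counterexamples to the $\Tolr$ and $\Perm$ definitions.
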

        \begin{proof}
            Since $\neg \Dd \GG p$ is valid, so too is $\Dd \GG p\to[\mu^-]\GG p$. So by Theorem~\ref{twtheorem}, $\top$ is trustworthy for $\GG p$. But clearly $\GG p$ is not tolerant, as insertion of any single non-$p$ state will falsify it. A similar argument works for the other half.
        \end{proof}

        However, for the direction of each of these results that \emph{does} hold, the above axiom system admits a very quick proof:
        \begin{theorem}
            If $A\in\Tolr$, then $\top$ is trustworthy for $A$ with respect to $\loss$ and if $A\in\Perm$ then $\bot$ is trustworthy for $A$ with respect to $\loss$. 
        \end{theorem}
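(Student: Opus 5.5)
We need to show: if $A\in\Tolr$ then $\Dd A\to[\loss^-]\Dd A$ is valid (by Theorem~\ref{twtheorem}), and similarly $\Dd\neg A\to[\loss^-]\Dd\neg A$ for $\Perm$.

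The plan is to route everything through the axioms of Figure~\ref{alltheaxioms} together with the characterization of $\Tolr$ in Theorem~\ref{five_char}. First, by Theorem~\ref{five_char} and the discussion following Theorem~\ref{twstar}, $A\in\Tolr$ gives that $A\to[\bloss^-]A$ is valid, hence that $A\to[\loss^-]A$ is valid. Then R5 yields that $\Dd(A\to[\loss^-]A)$ is valid. PDL1 instantiated with $\alpha=\Dd$, together with R2, then gives that $\Dd A\to\Dd[\loss^-]A$ is valid. Finally, SV4 instantiated with $\mu=\loss^-$ gives $\Dd[\loss^-]A\otto[\loss^-]\Dd A$, and chaining these propositionally (Theorem~\ref{taut}, R1, R2) yields $\Dd A\to[\loss^-]\Dd A$. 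By Theorem~\ref{twtheorem}, this is exactly the statement that $\top$ is trustworthy for $A$ with respect to $\loss$.

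The $\Perm$ half is identical with $\neg A$ in place of $A$. Here $A\in\Perm$ gives, via Theorem~\ref{five_char} and the same reduction from $\bloss^-$ to $\loss^-$, that $\neg A\to[\loss^-]\neg A$ is valid. Running the same R5, PDL1, and SV4 chain then gives $\Dd\neg A\to[\loss^-]\Dd\neg A$. The $\bot$ clause of Theorem~\ref{twtheorem} finishes the argument.

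The only step needing care is the passage from $[\bloss^-]$ to $[\loss^-]$. One could bypass it directly: since $\widehat{\loss}\subseteq\widehat{\bloss}$ (a single state is a block of length 1), we have $\widehat{\loss^-}\subseteq\widehat{\bloss^-}$, so $[\bloss^-]A\to[\loss^-]A$ is valid semantically. I would use this inclusion rather than the full equivalence argument, because only this direction is needed.
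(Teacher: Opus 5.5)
Your proof is correct and follows the paper's own chain of steps: Theorem~\ref{five_char}, then R5, then PDL1 with $\alpha=\Dd$ together with R2, then SV4 with $\mu=\loss^-$, and finally Theorem~\ref{twtheorem}. The one difference is the step from $[\bloss^-]$ to $[\loss^-]$. You justify it by the direct inclusion $\widehat{\loss^-}\subseteq\widehat{\bloss^-}$, whereas the paper appeals to its star-closure equivalence argument; your route is a sound and leaner shortcut, since only that direction is needed.
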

        \begin{proof}
            By Theorem~\ref{five_char} and the discussion above, if $A\in\Tolr$, then $A\to[\loss^-]A$ is EDMon-valid. Thus by R5 so is $\Dd(A\to[\loss^-] A)$. It follows by R2 and PDL1 that $\Dd A \to \Dd[\loss^-] A$ is valid. Thus by SV4 so is $\Dd A \to [\loss^-]\Dd A$. It follows by Theorem~\ref{twtheorem} that $\top$ is trustworthy for $A$ with respect to $\loss$. A similar argument works in the other case. 
        \end{proof}

        We should also be clear that these validities are quite useful even in cases where they can't do the job entirely on their own. As an example, we have the following theorem, which is a version of Theorem 11 in \cite{kauffman2021can} and in which we blend syntactic and semantic reasoning:
        \begin{theorem}
            For $A\in\LL_{\LTL}$, if $\Dd A\to[\cor]\Dd A$ and $\Dd\neg A\to[\cor]\Dd \neg A$ are EDMon-valid, then exactly one of $\Dd A$, $\Dd\neg A$ and $\Pp A\land \Pp\neg A$ is as well.
        \end{theorem}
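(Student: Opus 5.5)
The plan is to fix attention on initial points, where the three options are exhaustive, and then use the $\cor$-hypotheses, lifted to $\cor^*$, to spread whichever option holds there to every point. As observed before Corollary~\ref{tfitw}, at every point exactly one of $\Dd A$, $\Dd\neg A$, $\Pp A\land\Pp\neg A$ holds. They are pairwise incompatible at a point because $\Dd A\to A$ is valid (GM2), so $\Dd A$ and $\Dd\neg A$ clash, and $\Dd A$ contradicts $\Pp\neg A:=\neg\Dd A$. Hence at most one of the three formulas can be valid, and it remains to show that at least one is.

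First step: lift the hypotheses to $\cor^*$. Exactly as in the proof of Theorem~\ref{tfistar}, apply R3 to obtain $[\cor^*](\Dd A\to[\cor]\Dd A)$. Then R2 with PDL6 gives the validity of $\Dd A\to[\cor^*]\Dd A$, and likewise of $\Dd\neg A\to[\cor^*]\Dd\neg A$. By Lemma~\ref{Lemma:CorruptingRel}, any two $s,t$ with $|s|=|t|$ satisfy $s\sim_{\widehat{\cor}^*}t$. Together with Corollary~\ref{easymutlem}, this shows that if $\Dd A$ (resp.\ $\Dd\neg A$) holds at some point $\langle s,\sigma\rangle$, it holds at every point $\langle t,\tau\rangle$ with $|t|=|s|$. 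Here we use that the truth of $\Dd A$ at $\langle t,\tau\rangle$ does not depend on $\tau$, by Lemma~\ref{DandPLemma}.

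Second step: case split on the initial point $\langle\varepsilon,\overline{\emptyset}\rangle$.
\begin{itemize}
\item If $\Dd A$ holds there, then $\varepsilon,\tau\vDash A$ for all $\tau$. By Lemma~\ref{LTLSat} every trace satisfies $A$, since $A\in\LL_{\LTL}$. So every point satisfies $A$, and hence every point satisfies $\Dd A$; thus $\Dd A$ is valid.
\item If $\Dd\neg A$ holds there, the same argument shows that $\Dd\neg A$ is valid.
\item Otherwise $\Pp A\land\Pp\neg A$ holds at the initial point, and we must show it holds everywhere.
\end{itemize}

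Third step, the main obstacle: the last case. Suppose some $\langle s,\sigma\rangle$ satisfies $\Dd A$, and let $n=|s|$. By the first step, $\Dd A$ holds at every $\langle t,\tau\rangle$ with $|t|=n$. Given an arbitrary trace $\rho$, take $t=\rho_n$ and $\tau=\rho^n$. Then $t,\tau\vDash A$, so by Lemma~\ref{LTLSat} $\rho\vDash_{\LTL}A$. Therefore every trace satisfies $A$, which gives $\varepsilon,\tau\vDash A$ for all $\tau$, i.e.\ $\Dd A$ at the initial point, a contradiction. The symmetric argument rules out $\Dd\neg A$ anywhere. So $\Pp A\land\Pp\neg A$ holds at every point and is valid.

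The delicate point is exactly this use of the length-preserving universality of $\widehat{\cor}^*$. It is what transfers a determination made at one prefix to all prefixes of that length, and that in turn forces the determination down to $\varepsilon$.
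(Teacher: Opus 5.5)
Your proof is correct and follows essentially the same route as the paper. You lift the hypotheses to $\cor^*$, use the fact that $\widehat{\cor}^*$ relates all equal-length prefixes to spread a determination at one point to every trace, and conclude that $\Dd A$ (or $\Dd\neg A$) is valid if it holds anywhere, and that $\Pp A\land\Pp\neg A$ is valid otherwise. The differences are cosmetic: you obtain $\Dd A\to[\cor^*]\Dd A$ directly via R3, PDL6 and R2 rather than detouring through GM5 and the immunity results (Theorems~\ref{tfi} and~\ref{tfistar}), and you organize the case split around the initial point instead of around whether $\Dd A$ or $\Dd\neg A$ is ever true.
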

        \begin{proof}
            Since $(\Dd A \lor \Dd\neg A)\lor(\Pp A\land\Pp\neg A)$ is equivalent to an instance of excluded middle, at every point at least one of $\Dd A$, $\Dd\neg A$ and $\Pp A\land \Pp\neg A$ is true. It's also clear that at every point, no more than one of them can be true, from which it follows that (a) no more than one of them can be valid and (b) at every point exactly one of them is true. 
            
            To see that at least one is, first note that if $\Dd A\to[\cor]\Dd A$ and $\Dd\neg A \to[\cor]\Dd \neg A$ are both EDMon-valid, then by GM5, $\Dd A\to[\cor^-]\Dd A$ and $\Dd\neg A\to[\cor^-]\Dd\neg A$ are as well. Thus by Theorem~\ref{tfi}, $A$ is true-false immune to $\cor$. So by Theorem~\ref{tfistar}, $A$ is true-false immune to $\cor^*$. So by Corollary~\ref{tfitw} and Theorem~\ref{twtheorem} $\Dd A\to[\cor^*]\Dd A$ and $\Dd\neg A\to[\cor^*]\Dd\neg A$ are EDMon-valid.

            Suppose at some point $\langle s,\sigma\rangle$, $\Dd A$ is true. Then $s, \sigma \vDash [\cor^*]\Dd A$. But for all $t$ with $|t|=|s|$, $s\sim_{\widehat{\cor^*}} t$. So $t, \tau \vDash A$ for all $t$ with $|t|=|s|$ and all $\tau$. Thus by Lemma~\ref{LTLSat}, $\rho\vDash_{\LTL} A$ for all $\rho\in\Sigma^\omega$, from which we obviously get that $\Dd A$ is EDMon-valid. So the result follows if $\Dd A$ is ever true. It follows similarly if $\Dd\neg A$ is ever true, in which case the same argument gives that $\Dd\neg A$ is EDMon-valid. And if $\Dd A$ and $\Dd\neg A$ are never true, then $\Dd A\lor \Dd\neg A$ is never true. So $\neg(\Dd A \lor \Dd\neg A)$ is always true. But then $\Pp A\land \Pp\neg A$ is always true.
        \end{proof}

    \section{Conclusion}

        It's hard to compare things. EDMon makes it easier. There are lots of options. EDMon makes them visible. It's easy to make mistakes. EDMon helps correct them, as has been shown above. But there remain mistakes outside of the scope of EDMon left to be caught. So in conclusion we reflect on a direction in which generalizing EDMon is likely to extend the utility of the approach even more broadly. Consider the work of Finkbeiner on \emph{hyperproperties} in \cite{finkbeiner-2023}, that is, relations that hold not of between individual traces but between sets of traces. 
        
        While the formalism introduced in this paper is insufficiently expressive to articulate many hyperproperties, the scheme may clearly be adapted to model monitorability not only of LTL but of the more expressive logics in which hyperproperties may be addressed. For example, the logic HyperLTL introduced in \cite{clarkson-2014} can be used as a basis on which the additional mutation modalities may be added.

        To sketch such a generalization in a way that meets our own idioms, we note that the syntax of HyperLTL includes quantifiers $\exists \pi$ and $\forall\pi$ where $\exists \pi.A$ and $\forall \pi.A$ are read as ``there exists a trace such that $A$'' and ``for any trace, $A$,'' respectively. To more formally express this, first consider a set of \emph{trace variables} $\mathcal{V}=\lbrace\pi_{1},\pi_{2},...,\rbrace$ and a \emph{trace assignment} as a function $\Pi:\mathcal{V}\to\Sigma_S^* \times \Sigma_S^\omega$. Let a trace substitution $\Pi[\pi\to\lbrace s,\sigma\rangle]$ be the assignment differing from $\Pi$ only by mapping instances of the variable $\pi$ to $\lbrace s,\sigma\rangle$. Validity can be then defined in terms of a trace assignments with $T\subseteq\Sigma_S^* \times \Sigma_S^\omega$ by extending a natural reinterpretation of the validity clauses to hold between a formula and a tuple $T,\Pi,s,\sigma$. Then include the following:

        \begin{itemize}[leftmargin=*]\setlength\itemsep{1ex}
            \item $T,\Pi,s, \sigma \vDash \exists \pi.  A$ iff there exists a $\langle s,\sigma\rangle\in T$ such that $T,\Pi[\pi\to\langle s,\sigma\rangle],s,\sigma\vDash A$
            \item $T,\Pi,s, \sigma \vDash \for all \pi.  A$ iff for all $\langle s,\sigma\rangle\in T$, it holds that $T,\Pi[\pi\to\langle s,\sigma\rangle],s,\sigma\vDash A$
        \end{itemize}

        As Finkbeiner catalogs, many hyperproperties can be represented in the language of HyperLTL that express important security-related concepts. These include observational determinism (asserting the indistinguishability of traces based on low-security input, see \cite{zdancewic-2003}) or noninference (asserting that arbitrary replacement of high-security inputs will not be reflected in observations, see \cite{mclean-1994}). Coupling logics like HyperLTL with the techniques described above to define \emph{e.g.} a HyperEDMon promises to provide novel ways to consider an even broader palette of considerations from data security.

    \bibliography{biblio}
    \bibliographystyle{ACM-Reference-Format}
    
\end{document}